\def\rset{\mathbb R}
\def\Xset{\mathsf{X}} % State space
\def\Xset{\mathsf{X}} % State space
\def\Yset{\mathsf{Y}} 
\def\PP{\mathbb{P}} % proba
\def\PE{\mathbb{E}} % esperance
 \newcommand{\seq}[1]{\left\langle#1\right\rangle}
\newcommand{\seqF}[1]{\left\langle#1\right\rangle_\textsf{F}}
\newcommand{\normF}[1]{\left\|#1\right\|_\textsf{F}}
\newcommand{\eqdef}{\ensuremath{\stackrel{\mathrm{def}}{=}}}
\newcommand*\samethanks[1][\value{footnote}]{\footnotemark[#1]}
\def\U{\mathcal{U}}
\def\M{\mathcal{M}}
\def\r{\textsf{r}}
\DeclareMathOperator*{\argmin}{\textsf{Argmin}}
\newcommand{\iid}{\stackrel{\mathrm{iid}}{\sim}}
\def\rmd{\mathrm{d}}
\newtheorem{assumption}{\textbf{H}\hspace{-3pt}}
\newtheorem{remark}{Remark}
\newtheorem{Lemma}{Lemma}
\newtheorem{theorem}{Theorem}
\newtheorem{algo}{Algorithm}
\newtheorem{assumptionA}{\textbf{A}\hspace{-3pt}}
\newtheorem{theoremA}{Theorem}
\newcommand{\Tau}{\mathcal{T}}
\title{Change-Point Estimation in High-Dimensional Markov Random Field Models}
\author[Roy, Atchad\'e, Michailidis]{Sandipan Roy\samethanks[1], Yves Atchad\'e\samethanks[1]  and George Michailidis}
\email{sandipan@umich.edu, yvesa@umich.edu, gmichail@umich.edu}
\address{University of Michigan,
Ann Arbor, USA.}
\begin{document}
\begin{abstract}
This paper investigates a change-point estimation problem in the context of high-dimensional Markov random field models. Change-points represent a key feature in many dynamically evolving network structures. 
The change-point estimate is obtained by maximizing a profile penalized pseudo-likelihood function
under a sparsity assumption. We also derive a tight bound for the estimate, up to a logarithmic factor, even in
settings where the number of possible edges in the network far exceeds the sample size.
The performance of the proposed estimator is evaluated on synthetic data sets and is also used to explore voting patterns in the US Senate in the 1979-2012 period.
\end{abstract}
\keywords{Change-point analysis, High-dimensional inference, Markov random fields, Network analysis, Profile Pseudo-likelihood.}

\section{Introduction}
Networks are capable of capturing dependence relationships and have been extensively employed in diverse scientific fields including biology, economics and the social sciences. A rich literature has been developed
for static networks leveraging advances in estimating sparse graphical models. However, increasing availability of data sets that evolve over time has accentuated the need for developing models for time
varying networks. Examples of such data sets include time course gene expression data, voting records of
legislative bodies, etc.

In this work, we consider modeling the underlying network through a Markov random field (MRF) that exhibits a change
in its structure at some point in time. Specifically, suppose we have $T$ observations 
$\left\{X^{(t)},1\leq t\leq T\right\}$ over $p$-variables with $X^{(t)}=\left(X_1^{(t)},\ldots,X_p^{(t)}\right)$ 
and $X_j^{(t)}\in\Xset$, for some finite set $\Xset$. Further, we assume that there exists a time point $\tau_\star=\lceil \alpha_\star T\rceil\in\{1,\ldots,T-1\}$, with $\alpha_\star\in(0,1)$, such that $\left\{X^{(t)},1\leq t\leq \tau_\star\right\}$ is an
independent and identically distributed sequence from a distribution $g_{\theta_\star^{(1)}}(\cdot)$ parametrized by a real symmetric matrix $\theta_\star^{(1)}$, while the remaining observations  $\left\{X^{(t)},\tau_\star+1\leq t\leq T\right\}$ forms also an
independent and identically distributed sequence from a distribution $g_{\theta_\star^{(2)}}(\cdot)$ parametrized by another real symmetric matrix $\theta_\star^{(2)}$. We assume that the two distributions $g_{\theta_\star^{(1)}}(\cdot)$, $g_{\theta_\star^{(2)}}(\cdot)$ belong to a parametric family of Markov random field distributions given by
\begin{equation}
\label{model:intro}
g_\theta(x)=\frac{1}{Z\left(\theta\right)}
\exp\left(\displaystyle\sum\limits_{j=1}^p \theta_{jj}B_0(x_j)+
\displaystyle\sum\limits_{1\leq k < j\leq p}\theta_{jk}B(x_j,x_k)\right),\;\; x\in\Xset^p,
\end{equation}
for a non-zero function $B_0:\;\Xset\to\rset$, and a non-zero symmetric function $B:\;\Xset\times\Xset\to\rset$ which encodes the interactions between the nodes. The term $Z\left(\mathbf{\theta}\right)$ is the corresponding normalizing constant. Thus,
the observations over time come from a MRF that exhibits a change in its structure at time $\tau_\star$ and
the matrices $\theta_\star^{(1)}$ and $\theta_\star^{(2)}$ encode the conditional independence structure between the $p$ random variables respectively before and after the change-point. 

The objective is to estimate the change-point $\tau_\star$, as well as the network structures $\theta_\star^{(1)}$ and $\theta_\star^{(2)}$.
Although the problem of identifying a change point has a long history in statistics (see \cite{bai}, \cite{carl}, \cite{hink1}, \cite{Load}, \cite{Lan}, \cite{mull}, \cite{rai} and references therein), its use in a high-dimensional network problem is novel and motivated by the US Senate voting record application discussed in Section 6. 
Note that in a low-dimensional setting, the results obtained for the change-point depend on the regime considered; specifically, if there is a fixed shift then the
asymptotic distribution of the change-point is given by the minimizer of a compound Poisson process (see \cite{kos}), while if the shift decreases to 0 as a function
of the sample size, the distribution corresponds to that of Brownian motion with triangular drift (see \cite{bhat}, \cite{mull}).

Note that the methodology developed in this paper is useful in other areas, where similar problems occur. Examples include biological settings, where a gene regulatory network may exhibit a significant change at a particular dose of a drug treatment, or in finance where major economic announcements may disrupt financial networks.

Estimation of time invariant networks from independent and identically distributed data based on the MRF model
has been a very active research area (see e.g. \cite{baner, hof, ravi, xu, gu} and references therein). Sparsity (an often realistic assumption) plays an important role in this literature, and allows the recovery of the underlying network with relatively few observations (\cite{ravi,gu}).

On the other hand, there is significant less work on time varying networks (see \cite{zhu}, \cite{kol1}, \cite{kol2} etc.). The closest setting to the current paper is the work in \cite{kol2}, which considers
Gaussian graphical models where {\em each} node can exhibit multiple change points. In contrast, this paper focuses on a {\em single} change-point impacting the global network structure of the underlying Markov
random field. In general, which setting is more appropriate depends on the application. In biological applications where the focus is on  particular biomolecules (e.g. genes, proteins, metabolites), nodewise change-point analysis would
typically be preferred, whereas is many social network applications
(such as the political network example considered below), global
structural changes in the network are of primary interest. Further, note that node-level changes detected at multiple nodes can be inconsistent, noisy and difficult to reconcile to extract global structural changes.

Another key difference between these two papers is the modeling framework employed. Specifically, in \cite{kol2} the number of nodes in the Gaussian graphical model is {\em fixed} and {\em smaller} than the available sample size.
The high-dimensional challenge comes from the possible presence of multiple change-points per node, which leads to a large number of parameters to be estimated. To overcome this issue, a total variation penalty is introduced, a strategy
that has worked well in regression modeling where the number of parameters is the same as the number of observations. On the other hand, this paper assumes a high-dimensional framework where the number of nodes (and hence the
number of parameters of interest, namely the edges) grow with the number of time points and focuses on estimating a single change-point in a general Markov random field model.

To avoid the intractable normalizing constant issue in estimating the network structures, we employ a pseudo-likelihood framework. As customary in the analysis of change-point problems (\cite{bai, Lan}), we employ a profile pseudo-likelihood function to obtain the estimate $\hat\tau$ of the true change-point $\tau_\star$. Under a sparsity assumption, and some regularity conditions that allow the number of parameters $p(p+1)$ to be much larger than the sample size $T$, we establish that with high probability, $|(\hat\tau/T)-\alpha_\star|=O(\log(pT)/T)$, as $p,T\to\infty$. Note that in classical change-point problems with a fixed-magnitude change, it is well-known that the maximum likelihood estimator of the change-point satisfies $|(\hat\tau/T)-\alpha_\star|=O_p(1/T)$ (see e.g. \cite{bhat}, \cite{bai}). This suggests that our result is rate-optimal, up to the logarithm factor $\log(T)$.
%Note that in the Supplement, we establish a similar result for a high-dimensional regression model with Gaussian errors exhibiting a change-point, which is of independent interest. 
The derivation of the result requires a careful handling of model misspecification in Markov random fields as explained in Section 3, a novel aspect not present when estimating a single Markov random field from independent and identically distributed observations. See also \cite{atc} for another example of misspecification in Markov random fields. Further, to speed up the computation of the change-point estimator $\hat\tau$, we discuss a sampling strategy of the available observations, coupled with a smoothing procedure of the resulting likelihood function.

Last but not least, we employ the developed methodology to analyze the US Senate voting record from 1979 to 2012. In this application, each Senate seat represents a node of the network and the voting record of these $100$ Senate seats on a given bill is viewed as a realization of an underlying Markov random field that captures dependencies between them. The analysis strongly points to the presence of a change-point around January, 1995, the beginning of the tenure of the 104th Congress. This change-point comes at the footsteps of the November 1994 election that witnessed the Republican Party capturing the US House of Representatives for the first time since 1956. Other analyses based on more ad hoc methods, also point to a significant change occurring after the November 1994 election (e.g. \cite{moo}).

The remainder of the paper is organized as follows. Modeling assumptions and the estimation framework are
presented in Section 2, while Section 3 establishes the key technical results. Section 4 discusses 
computational issues and Section 5 evaluates the performance of the estimation procedure using synthetic data.
Section 6 illustrates the procedure on the US Senate voting record. Finally, proofs are deferred to the Supplement.
\section{Methodology}
Let $\{X^{(t)},\;1\leq t\leq T\}$ be a sequence of  independent random vector, where $X^{(t)}=(X_1^{(t)},\ldots,X_p^{(t)})$ is a $p$-dimensional Markov random field whose $j$-th component $X_j^{(t)}$ takes values in a finite set $\Xset$. We assume that there exists a time point (change point) $\tau_\star\in\{1,\ldots,T-1\}$ and symmetric matrices  $\theta^{(1)}_\star,\theta^{(2)}_\star\in\rset^{p\times p}$, such that for all $x\in\Xset^p$,
\[\PP\left(X^{(t)}=x\right) =g_{\theta_\star^{(1)}}(x),\;\;\mbox{ for }t=1,\ldots,\tau_\star,\]
and
\[\PP\left(X^{(t)}=x\right) =g_{\theta_\star^{(2)}}(x),\;\;\mbox{ for }t=\tau_\star+1,\ldots,T,\]
where $g_\theta$ is the Markov random field distribution given in (\ref{model:intro}). We assume without any loss of generality that $\tau_\star = \lceil \alpha_\star T\rceil$, for some $\alpha_\star\in (0,1)$, where $\lceil x\rceil$ denotes the smallest integer larger or equal to $x$. The likelihood function of the observations $\{X^{(t)},\;1\leq t\leq T\}$ is then given by 
\begin{equation}
\label{full:likelihood}
L_T\left(\tau,\theta^{(1)},\theta^{(2)}\vert X^{(1:T)}\right) =\prod_{t=1}^{\tau}g_{\theta^{(1)}}(X^{(t)})\prod_{t=\tau+1}^{T}g_{\theta^{(2)}}(X^{(t)}).
\end{equation} 

We write $\PE$ to denote the expectation operator with respect to $\PP$. For a symmetric matrix $\theta\in\rset^{p\times p}$, we write $\PP_\theta$ to denote the probability distribution on $\Xset^p$ with probability mass function $g_\theta$ and $\PE_\theta$ its expectation operator.

We are interested in estimating both the change point $\tau_\star$, as well as the parameters $\theta_\star^{(1)},\theta_\star^{(2)}$.  Let $\M_p$ be the space of all $p\times p$ real symmetric matrices. We equip $\M_p$ with the Frobenius inner product $\seqF{\theta,\vartheta}\eqdef\sum_{k\leq j}\theta_{jk}\vartheta_{jk}$, and the associated norm $\normF{\theta}\eqdef \sqrt{\seq{\theta,\theta}}$. This is equivalent to identifying $\M_p$ with the Euclidean space $\rset^{p(p+1)/2}$, and this identification prevails whenever we define gradients and Hessians of functions $f:\;\M_p\to\rset$. For $\theta\in\M_p$ we also define $\|\theta\|_1\eqdef \sum_{k\leq j}|\theta_{jk}|$, and $\|\theta\|_\infty\eqdef \sup_{k\leq j}|\theta_{jk}|$. If $u\in\rset^d$, for some $d\geq 1$, and $A$ is an ordered subset of $\{1,\ldots,d\}$, we define $u_A\eqdef (u_j,\,j\in A)$, and $u_{-j}$ is a shortcut for $u_{\{1,\ldots,d\}\setminus\{j\}}$.

To avoid some of the computational difficulties in dealing with the normalizing constant of $g_\theta$, we take a pseudo-likelihood approach. For $\theta\in\M_p$ and $j\in\left\{1,2,\ldots,p\right\}$, define $f_\theta^{(j)}(u\vert x)\eqdef\PP_\theta(X_j=u\vert X_{-j}=x_{-j})$, for $u\in\Xset$, and $x\in\Xset^p$.  From the expression of the joint distribution $g_\theta$ in (\ref{model:intro}), we have 
\begin{equation}\label{full:cond}
f_\theta^{(j)}(u\vert x)=\frac{1}{Z_\theta^{(j)}(x)}\exp\left(\theta_{jj}B_0(u) +\sum_{k\neq j} \theta_{jk}B(u,x_k)\right),\;u\in\Xset,\;x\in\Xset^p,\end{equation}
where
\begin{equation}\label{norm:const}
Z_\theta^{(j)}(x)\eqdef \int_{\Xset}\exp\left(\theta_{jj}B_0(z) +\sum_{k\neq j} \theta_{jk}B(z,x_k)\right)\rmd z.\end{equation}
The normalizing constant $Z_\theta^{(j)}(x)$ defined in (\ref{norm:const}) is actually a summation over $\Xset$, but for notational convenience we write it as an integral against the counting measure on $\Xset$. Next, we introduce
\begin{equation}\label{def:phi}
\phi(\theta,x)\eqdef -\sum_{j=1}^p \log f_\theta^{(j)}(x_j\vert x).\end{equation}
The negative log-pseudo-likelihood of the model (divided by $T$) is given by
\begin{equation}\label{log:ll}
\ell_T(\tau;\theta_1,\theta_2)\eqdef \frac{1}{T}\displaystyle\sum_{t=1}^\tau\phi(\theta_1,X^{(t)})+\frac{1}{T}\displaystyle\sum_{t=(\tau+1)}^T\phi(\theta_2,X^{(t)}).
\end{equation}
For $1\leq \tau<T$, and $\lambda>0$, we define the estimators
\[\widehat{\mathbf{\theta}}_{1,\tau}^{(\lambda)}\eqdef \argmin_{\theta\in\M_p}\frac{1}{T}\displaystyle\sum_{t=1}^\tau\phi(\theta,X^{(t)})+ \lambda\|\mathbf{\theta}\|_{1},\]
 and 
\[\widehat{\mathbf{\theta}}_{2,\tau}^{(\lambda)}\eqdef \argmin_{\theta\in\M_p}\frac{1}{T}\displaystyle\sum_{t=\tau+1}^T\phi(\theta,X^{(t)})+ \lambda\|\mathbf{\theta}\|_{1}.\]
We propose to estimate the change point $\tau_\star$ using a profile pseudo-likelihood approach. More precisely our estimator $\hat\tau$ is defined as
\begin{equation}\label{c_p_estimator}
\widehat{\tau}=\argmin_{\tau\in\mathcal{T}}\,\ell_T(\tau;\widehat{\mathbf{\theta}}_{1,\tau},\widehat{\mathbf{\theta}}_{2,\tau}),
\end{equation}
for a search domain $\mathcal{T}\subset\{1,\ldots,T\}$ of the form $\{k_l,k_l+1,\ldots,T-k_u\}$, where for each $\tau\in\mathcal{T}$, $\widehat{\mathbf{\theta}}_{1,\tau}=\hat\theta_{1,\tau}^{(\lambda_{1,\tau})}$ and $\widehat{\mathbf{\theta}}_{2,\tau}=\hat\theta_{1,\tau}^{(\lambda_{1,\tau})}$, for some positive penalty parameters $\lambda_{1,\tau}$, $\lambda_{2,\tau}$.  Since the network estimation errors at the boundaries of the time-line $\{1,\ldots,T\}$ are typically large, a restriction on the search domain is needed to guarantee the consistency of the method.  This motivates the introduction of $\mathcal{T}$. We give more details on $\mathcal{T}$ below. The penalty parameters $\lambda_{1,\tau}$ and $\lambda_{2,\tau}$ also play an important role in the behavior of the estimators, and we provide some guidelines below.

\section{Theoretical Results}
The recovery of $\tau_\star$ rests upon the ability of the estimators $\hat{\mathbf{\theta}}_{j,\tau}$ to correctly estimate $\theta_\star^{(j)}$, $j\in\left\{1,2\right\}$. 
Estimators for the static version of the problem where one has i.i.d. observations from a single Markov Random Field have been extensively studied; see
\cite{gu}, \cite{hof}, \cite{mein1}, \cite{ravi} and references therein for computational and theoretical details. However, in the present setting one of the estimators $\hat{\mathbf{\theta}}_{j,\tau}$, $j\in\left\{1,2\right\}$ is derived from a misspecified model. Hence, to establish the error bound  for $\|\hat\theta_{j,\tau}-\theta_\star^{(j)}\|_2$, we borrow from the approach in \cite{atc}. For penalty terms $\lambda_{j,\tau}$ as in (\ref{lambda1:lambda2}) and under some regularity assumptions, we derive a bound on the estimator errors $\|\hat\theta_{j,\tau}-\theta_\star^{(j)}\|_2$, for all $\tau\in\mathcal{T}$. We then use this result to show that the profile pseudo-log-likelihood estimator $\hat\tau$ is an approximate minimizer of $\tau\mapsto \ell_T(\tau;\theta_\star^{(1)},\theta_\star^{(2)})$ and this allows us to establish a bound on the distance between $\hat\tau$ and the true change point $\tau_\star$.

We assume that the penalty parameters take the following specific form. 
\begin{equation}
\label{lambda1:lambda2}
\lambda_{1,\tau}=\frac{32c_0\sqrt{\tau\log\left(dT\right)}}{T}\mbox{ and } \lambda_{2,\tau}=\frac{32c_0\sqrt{\left(T-\tau\right)\log\left(dT\right)}}{T},
\end{equation}
where $d\eqdef p(p+1)/2$, and 
\begin{equation}\label{def:c0}
c_0=\sup_{u,v\in \Xset}|B_0(u)-B_0(v)|\vee \sup_{x,u,v\in \Xset}|B(x,u)-B(x,v)|,\end{equation}
which serves as (an upper bound on the) standard deviation of the random variables $B_0(X)$, $B(X,Y)$. In practice,  we use $\lambda_{1,\tau}=a_1T^{-1}c_0\sqrt{\tau\log(dT)}$, and  $\lambda_{2,\tau}=a_2T^{-1}c_0\sqrt{(T-\tau)\log(dT)}$, where $a_1, a_2$ are chosen from the data by an analogue of the Bayesian Information Criterion (\cite{scw}).

For $j=1,2$, define $\mathcal{A}_j\stackrel{\text{def}}{=}\left\{1\leq k\leq i\leq p:\theta_{\star ik}^{(j)}\neq 0\right\}$, and define $s_j\eqdef |\mathcal{A}_j|$ the cardinality (and hence the sparsity) of the true model parameters. We also define 
 \begin{equation}\label{defCj}
 \mathbb{C}_{j}\eqdef \left\{\mathbf{\theta}\in\mathcal{M}_p:\displaystyle\sum_{\left(k,i\right)\in\mathcal{A}_j^c}|\theta_{ik}^{(j)}|\leq 3\displaystyle\sum_{\left(k,i\right)\in\mathcal{A}_j}|\theta_{ik}^{(j)}|\right\}\mbox{, $j\in\left\{1,2\right\}$},\end{equation}
used next in the definition of the restricted strong convexity assumption.

\begin{assumption}\label{H1} {\em [Restricted Strong Convexity]}
For $j\in\{1,2\}$, and $X\sim g_{\theta^{(j)}_\star}$, there exists $\rho_j>0$ such that for all $\Delta\in\mathbb{C}_j$,
\begin{equation}\label{eqH1}
\sum_{i=1}^p \PE_{\theta_\star^{(j)}}\left[\textsf{Var}_{\theta_\star^{(j)}}\left(\sum_{k=1}^p\Delta_{ik}B_{ik}(X_i,X_k)\vert X_{-i}\right)\right]\geq 2\rho_j \,\|\Delta\|^2_2,\end{equation}
where $B_{ik}(x,y)=B_0(x)$ if $i=k$, and $B_{ik}(x,y)=B(x,y)$ if $i\neq k$.
\end{assumption}

\begin{remark}\label{rem:H1}
Assumption H\ref{H1} is a (averaged) restricted strong convexity (RSC) assumption on the negative log-pseudo-likelihood function $\phi(\theta,x)$. This can be seen by noting that (\ref{eqH1}) can also be written as
\[\Delta'\PE\left[\nabla^{(2)}\phi(\theta_\star^{(j)},X^{(j)})\right]\Delta\geq 2\rho_j\|\Delta\|^2_2,\;\;\;X^{(j)}\sim g_{\theta_\star^{(j)}},\;\;\Delta\in\mathbb{C}_j,\;\;j\in\{1,2\}.\]
These restricted strong convexity assumptions of objective functions are more pertinent in high-dimensional problems and appear in one form or another in the analysis of high-dimensional statistical methods (see e.g. \cite{negh} and references therein). Note that the RSC assumption is expressed here in expectation, unlike \cite{negh} which uses an almost sure version. Imposing this  assumption in expectation (that is, at the population level) is more natural, and is known to imply the almost sure version in many instances (see \cite{rudelson:zhou:13}, and Lemma 4 in the Supplement).
\end{remark}
We impose the following condition on the change point and the sample size.
\begin{assumption}\label{H2} {\em [Sample size requirement]}
We assume that there exists $\alpha_\star\in\left(0,1\right)$ such that $\tau_\star=\lceil\alpha_\star T\rceil\in\{1,\ldots,T-1\}$, and the sample size $T$ satisfies 
\[\min\left(\frac{T}{2^{11}\log(pT)},\frac{T}{48^2\times 32^2\log\left(dT\right)}\right) \geq c_0^2\max\left(\frac{s^2_1}{\alpha_\star\rho^2_1},\frac{s^2_2}{\left(1-\alpha_\star\right)\rho^2_2}\right),\]
where $\rho_1$, and $\rho_2$ are as in H\ref{H1}.
\end{assumption}

\begin{remark}
Note that the constants $2^{11}$ and $48^2\times 32^2$ required in H\ref{H2} will typically yield a very conservative bound on the sample size $T$. We believe these large constants are mostly artifacts of our techniques, and can be improved. The key point of H\ref{H2} is the fact that we require the sample $T$ to be such that $T/\log(T)$ is a linear function of $\max(s_1^2,s_2^2)\log(p)$. Up to the $\log(T)$ term, this condition is in agreement with recent results on high-dimensional sparse graphical model recovery.
\end{remark}

The ability to detect the change-point requires that the change from $\theta_\star^{(1)}$ to $\theta_\star^{(2)}$ be identifiable. 
%Define 
%\begin{equation}
%\label{SNR}
%\kappa_0\stackrel{\text{def}}{=}\PE_{\theta_\star^{(2)}}\left[\phi(\theta_\star^{(1)},X)-\phi(\theta_\star^{(2)},X)\right].
%\displaystyle\sum_{k=1}^p\mathbb{E}_{\mathbf{\theta}_*^{(2)}}\left[-\int\log\left(\frac{f_{\mathbf{\theta}_*^{(1)}}^{(k)}\left(u|X_{-k}\right)}{f_{\mathbf{\theta}_*^{(2)}}^{(k)}\left(u|X_{-k}\right)}\right)f_{\mathbf{\theta}_*^{(2)}}^{(k)}\left(u|X_{-k}\right)du\right].
%\end{equation}

\begin{assumption}\label{H3} {\em [Identifiability Condition]}
Assume that
$\theta_\star^{(1)}\neq \theta_\star^{(2)}$, and
\begin{equation}
\label{H3:SNR}
\kappa\eqdef \min\left(\PE_{\theta_\star^{(2)}}\left[\phi(\theta_\star^{(1)},X)-\phi(\theta_\star^{(2)},X)\right],\PE_{\theta_\star^{(1)}}\left[\phi(\theta_\star^{(2)},X)-\phi(\theta_\star^{(1)},X)\right]\right)> 0.
\end{equation}
\end{assumption}

\begin{remark}
Assumption H\ref{H3} is needed for the identifiability of the change-point $\tau_\star$. Since the distributions $g_\theta$ are discrete data analogs of Gaussian graphical distributions, it is informative to look at H\ref{H3} for Gaussian graphical distributions. Indeed, if $g_\theta$ is the density of the $p$-dimensional normal distribution $\textbf{N}(0,\theta^{-1})$ with precision matrix $\theta$, and if we take $\phi(\theta,x)=-\log g_\theta(x)$, then it can be easily shown that
\[ \kappa \geq \frac{1}{4L^2}\|\theta_\star^{(2)}-\theta_\star^{(1)}\|_2^2,\]
where $L$ is an upper bound on the largest eigenvalue of $\theta_\star^{(1)}$ and $\theta_\star^{(2)}$. Hence in this case H\ref{H3} holds. Such a general result is more difficult to establish for discrete Markov random fields. However, it can be easily shown that H\ref{H3} holds if 
\begin{multline}\label{eq:H3}
\left(\theta_\star^{(1)}-\theta_\star^{(2)}\right)'\PE_{\theta_\star^{(2)}}\left[\nabla^{(2)}\phi(\theta_\star^{(2)},X)\right]\left(\theta_\star^{(1)}-\theta_\star^{(2)}\right)' > 0,\\
\mbox{ and }\;\;  \left(\theta_\star^{(2)}-\theta_\star^{(1)}\right)'\PE_{\theta_\star^{(1)}}\left[\nabla^{(2)}\phi(\theta_\star^{(1)},X)\right]\left(\theta_\star^{(2)}-\theta_\star^{(1)}\right)' > 0.\end{multline}
And in the particular setting where  $\theta_\star^{(1)}$ and $\theta_\star^{(2)}$ have similar sparsity patterns (in the sense that $\theta_\star^{(2)}-\theta_\star^{(1)}\in\mathbb{C}_1\cap\mathbb{C}_2$), then  (\ref{eq:H3}) follows from H\ref{H1}, and the discussion in Remark \ref{rem:H1}.

%Assumption (\ref{H3}) can be shown to hold provided that most of the individual differences $|\theta_{\star,ij}^{(2)}-\theta_{\star,ij}^{(1)}|$ are sufficiently large. To see this, notice that by a Taylor expansion one can show that 
%\[\kappa_0\geq \frac{1}{2+c_0\|\theta_\star^{(2)}-\theta_\star^{(1)}\|_1}\sum_{i=1}^p \PE_{\theta_\star^{(2)}}\left[\textsf{Var}_{\theta_\star^{(2)}}\left(\sum_{k=1}^p\left(\theta_{\star,ik}^{(1)}-\theta_{\star,ik}^{(2)}\right)B_{ik}(X_i,X_k)\vert X_{-i}\right)\right],\]
%where $c_0$ is as in (\ref{def:c0}). Hence, if the restricted strong convexity assumption H\ref{H1} holds and $\theta_\star^{(1)}$ and $\theta_\star^{(2)}$ have similar sparsity structures, in the sense that $\theta_\star^{(1)}-\theta_\star^{(2)}\in\mathbb{C}_2$, then using (\ref{eqH1}), we see that
%\[\kappa_0\geq \frac{2\rho_2\|\theta_\star^{(2)}-\theta_\star^{(1)}\|_2^2}{2+c_0\|\theta_\star^{(2)}-\theta_\star^{(1)}\|_1}.\]
%In this case (\ref{H3:SNR}) holds if the term $\|\theta_\star^{(2)}-\theta_\star^{(1)}\|_2/\|\theta_\star^{(2)}-\theta_\star^{(1)}\|_1$ remains bounded away from zero as $p\to\infty$, which in turn holds true for instance if most of the differences $|\theta_{\star,ij}^{(2)}-\theta_{\star,ij}^{(1)}|$ are sufficiently large.
\end{remark}

Finally, we define the search domain as the set 
\begin{equation}\label{def:Tau}
\Tau=\Tau_+\cup\Tau_-,\end{equation}
 where $\Tau_+$ is defined as the set of all time-points $\tau\in\left\{\tau_\star+1,\ldots,T\right\}$ such that
\begin{equation}\label{cond H2:Tau_p}
c_0b(\tau-\tau_\star)\leq 2\sqrt{\tau\log(dT)},\;\;\;\mbox{ and }\;\;\; 64c_0^3bs_1(\tau-\tau_\star)\leq \rho_1\tau,
\end{equation}
and $\Tau_-$ is defined as the set of all time-point $\tau\in\left\{1,\ldots,\tau_\star\right\}$ such that
\begin{equation}
\label{cond H2:Tau_m}
c_0b(\tau_\star-\tau)\leq 2\sqrt{(T-\tau)\log(dT)},\;\;\;\mbox{ and }\;\;\; 64c_0^3bs_2(\tau_\star-\tau)\leq \rho_2(T-\tau),
\end{equation}
where
\begin{equation}
\label{def:b}
b\stackrel{\text{def}}{=}\displaystyle\sup_{1\leq j\leq p}\sum_{k=1}^p\big|\theta_{\star jk}^{(2)}-\theta_{\star jk}^{(1)}\big|\mbox{.}
\end{equation}

Furthermore, for all $\tau\in\mathcal{T}$, 
\begin{multline}\label{cond H2:Tau_tau}
\tau \geq \max\left(2^{11},(48\times 32)^2\right)c_0^2\left(\frac{s_1}{\rho_1}\right)^2\log(dT),\;\;\;\\
\mbox{ and }\;\;\; T-\tau \geq \max\left(2^{11},(48\times 32)^2\right)c_0^2\left(\frac{s_2}{\rho_2}\right)^2\log(dT).
\end{multline}

\begin{remark}
Notice that $\mathcal{T}$ is of the form $\{k_l,k_l+1,\ldots,\tau_\star,\tau_\star+1,\ldots,T-k_u\}$, since for $\tau$ close to $\tau_\star$ both (\ref{cond H2:Tau_p}), (\ref{cond H2:Tau_m}), and (\ref{cond H2:Tau_tau}) hold provided that $T$ is large enough. 
\end{remark}

\medskip
We can then establish the key result of this paper. Set
\[M = \left[\frac{s_1}{\rho_1}\left(1+c_0\frac{s_1}{\rho_1}\right) + \frac{s_2}{\rho_2}\left(1+ c_0\frac{s_2}{\rho_2}\right)\right].\]

\begin{theorem}\label{thm1}
Consider the model posited in (\ref{full:likelihood}), and assume H\ref{H1}-H\ref{H3}. Let $\hat \tau$ be the estimator defined in (\ref{c_p_estimator}), with $\lambda_{1,\tau},\lambda_{2,\tau}$ as in (\ref{lambda1:lambda2}), and with a search domain $\mathcal{T}$ that satisfies (\ref{cond H2:Tau_p}), (\ref{cond H2:Tau_m}), and (\ref{cond H2:Tau_tau}). Then there exists a universal finite constant $a>0$, such that with $\delta = aMc_0^2\log(dT)$, we have 
\begin{equation}\label{bound:thm1}
\PP\left(\left|\frac{\widehat{\tau}}{T}-\alpha_\star\right|>\frac{4\delta}{\kappa T}\right) \leq \frac{16}{d} + \frac{4\exp\left(-\frac{\delta}{32c_0^2s}\left(\frac{\kappa}{\|\theta_\star^{(2)} - \theta_\star^{(1)}\|^2_2}\right)^2\right)}{1-\exp\left(-\frac{\kappa^2}{2^7c_0^2s\|\theta_\star^{(2)} - \theta_\star^{(1)}\|^2_2}\right)},
\end{equation}
where $s$ is the number of non-zero components of $\theta_\star^{(2)} - \theta_\star^{(1)}$.
\end{theorem}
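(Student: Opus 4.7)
My plan is to reduce the problem to the analysis of a well-behaved ``oracle'' criterion. Set $\bar\ell_T(\tau) \eqdef \ell_T(\tau;\theta_\star^{(1)},\theta_\star^{(2)})$, and let $\hat\theta_{j,\tau}$ be the $\ell_1$-penalized pseudo-MLEs introduced in Section 2. A preliminary lemma (analogous to \cite{atc}, exploiting H\ref{H1} and the search-domain conditions (\ref{cond H2:Tau_p})--(\ref{cond H2:Tau_tau})) should yield, on an event $\mathcal{E}$ of probability at least $1-16/d$, a uniform error bound
\[ \|\hat\theta_{j,\tau}-\theta_\star^{(j)}\|_2^2 \lesssim \frac{s_j\log(dT)}{\min(\tau,T-\tau)},\qquad \tau\in\mathcal{T},\; j\in\{1,2\}. \]
A second-order Taylor expansion of $\phi$, together with a control of $T^{-1}\sum_t \nabla\phi(\theta_\star^{(j)},X^{(t)})$, then converts this into a uniform plug-in bound $|\ell_T(\tau;\hat\theta_{1,\tau},\hat\theta_{2,\tau}) - \bar\ell_T(\tau)|\leq \delta/(2T)$ with $\delta = a M c_0^2\log(dT)$. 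Combining this with the defining inequality $\ell_T(\hat\tau;\hat\theta_{1,\hat\tau},\hat\theta_{2,\hat\tau}) \leq \ell_T(\tau_\star;\hat\theta_{1,\tau_\star},\hat\theta_{2,\tau_\star})$ produces, on $\mathcal{E}$,
\[ \bar\ell_T(\hat\tau) - \bar\ell_T(\tau_\star) \leq \frac{\delta}{T}. \]

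Next, I would quantify the drift of $\bar\ell_T$ away from $\tau_\star$. For $\tau>\tau_\star$,
\[ \bar\ell_T(\tau)-\bar\ell_T(\tau_\star) = \frac{1}{T}\sum_{t=\tau_\star+1}^\tau Y_t,\qquad Y_t\eqdef\phi(\theta_\star^{(1)},X^{(t)})-\phi(\theta_\star^{(2)},X^{(t)}), \]
is a sum of $\tau-\tau_\star$ i.i.d.\ variables with $X^{(t)}\sim g_{\theta_\star^{(2)}}$, so $\PE[Y_t]\geq \kappa$ by H\ref{H3}. A first-order expansion of $\phi$ shows that $Y_t$ is bounded and has variance of order $c_0^2 s\|\theta_\star^{(2)}-\theta_\star^{(1)}\|_2^2$, using the $s$-sparsity of the jump $\theta_\star^{(2)}-\theta_\star^{(1)}$. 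Bernstein's inequality then yields
\[ \PP\left(\bar\ell_T(\tau)-\bar\ell_T(\tau_\star) < \frac{\kappa(\tau-\tau_\star)-u}{T}\right) \leq 2\exp\left(-\frac{c\,u^2}{(\tau-\tau_\star)\,c_0^2 s\,\|\theta_\star^{(2)}-\theta_\star^{(1)}\|_2^2}\right), \]
with a symmetric statement for $\tau<\tau_\star$.

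The two pieces are combined by a peeling argument on $|\hat\tau-\tau_\star|$. Set $\delta_0 = 4\delta/\kappa$; on $\mathcal{E}\cap\{|\hat\tau-\tau_\star|\geq \delta_0\}$, partition the deviation into bins $A_k=\{\delta_0+k\leq |\hat\tau-\tau_\star|<\delta_0+k+1\}$, $k\geq 0$. On $A_k$ the population gap is at least $(4\delta+k\kappa)/T$, so to be compatible with $\bar\ell_T(\hat\tau)-\bar\ell_T(\tau_\star)\leq \delta/T$ the centred fluctuation must exceed $u_k = 3\delta + k\kappa$. Plugging $u_k$ into the Bernstein bound above gives
\[ \PP(A_k\cap\mathcal{E}) \;\leq\; 2\exp\left(-\frac{c(3\delta+k\kappa)^2}{(\delta_0+k)\,c_0^2 s\,\|\theta_\star^{(2)}-\theta_\star^{(1)}\|_2^2}\right), \]
which for large $k$ decays geometrically in $k$ with ratio $\exp(-c'\kappa^2/(c_0^2 s\|\theta_\star^{(2)}-\theta_\star^{(1)}\|_2^2))$ and with leading factor $\exp(-c'\delta\kappa^2/(c_0^2 s\|\theta_\star^{(2)}-\theta_\star^{(1)}\|_2^4))$. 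Summing the geometric series produces exactly the second term of (\ref{bound:thm1}); the $16/d$ term absorbs $\PP(\mathcal{E}^c)$.

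The main obstacle is establishing the preliminary uniform $\ell_2$ bound on $\hat\theta_{j,\tau}$ for $\tau$ on the ``wrong'' side of $\tau_\star$: the estimator is then computed on a contaminated sample mixing observations from $g_{\theta_\star^{(1)}}$ and $g_{\theta_\star^{(2)}}$, so off-the-shelf high-dimensional MRF guarantees do not apply. The second condition in each of (\ref{cond H2:Tau_p}) and (\ref{cond H2:Tau_m})---where the $c_0^3 b s_j$ factor quantifies precisely the excess bias contributed by the $|\tau-\tau_\star|$ contaminating observations---is engineered so that H\ref{H1} transfers to the contaminated window and the KKT/gradient bound governing $\hat\theta_{j,\tau}$ survives the misspecification, along the lines of \cite{atc}. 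Verifying these restricted-eigenvalue and gradient conditions under contamination is the technically delicate step, while the peeling and Bernstein arguments above are then essentially routine.
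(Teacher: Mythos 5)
Your proposal follows essentially the same route as the paper's proof: a Negahban-style restricted-strong-convexity bound on $\|\hat\theta_{j,\tau}-\theta_\star^{(j)}\|_2$ uniform over $\mathcal{T}$ (with the contamination for $\tau$ on the wrong side of $\tau_\star$ absorbed by the search-domain conditions), a plug-in bound reducing the profile criterion to the oracle $\ell_T(\tau;\theta_\star^{(1)},\theta_\star^{(2)})$ up to $O(\delta/T)$, and then a concentration-plus-geometric-summation (peeling) argument on the centred increments $\phi(\theta_\star^{(1)},X^{(t)})-\phi(\theta_\star^{(2)},X^{(t)})$, whose mean is controlled by $\kappa$ and whose fluctuations are sub-Gaussian with variance proxy of order $c_0^2 s\|\theta_\star^{(2)}-\theta_\star^{(1)}\|_2^2$. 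The only differences are cosmetic (the paper proves a general theorem with rate functions first and uses Hoeffding-type sub-Gaussian bounds rather than Bernstein), so the approach is correct and matches the paper's.
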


\medskip
Theorem \ref{thm1} gives a theoretical guarantee that for large $p$ and for large enough sample size $T$ such that $(T/\log(T))=O(\max(s_1^2,s_2^2)\log(p))$, $|\hat\tau/T-\alpha_\star|=O(\log(pT)/T)$ with high-probability. For fixed-parameter change-point problems, the maximum likelihood estimator of the change-point is known to satisfy $|\hat\tau/T-\alpha_\star|=O_P(1/T)$ (see e. g. \cite{bai}). This shows that our result is rate-optimal, up to the logarithm factor $\log(T)$. Whether one can improve the bound and remove the $\log(T)$ term hinges on the existence of an exponential bound for the maximum of weighted partial sums of sub-Gaussian random variables, as we explain in Remark 1 of the Supplement. Whether such bound holds is currently an open problem, to the best of our knowledge. However, note that the $\log(p)$ term that appears in the theorem cannot be improve in general in the large $p$ regime.

If the signal $\kappa$ introduced in H\ref{H3} satisfies
\begin{equation}\label{lb:kappa}
\kappa\geq \kappa_0\|\theta_\star^{(2)}-\theta_\star^{(1)}\|_2^2,\end{equation}
then the second term on right-hand side of (\ref{bound:thm1}) is upper bounded by
\begin{equation}\label{lb:kappa:2} \left(\frac{1}{dT}\right)^{\frac{aM\kappa_0}{32s}}\frac{1}{1-\exp\left(-\frac{\kappa_0^2}{2^7c_0^2s}\|\theta_\star^{(2)}-\theta_\star^{(1)}\|_2^2\right)}.
\end{equation}
This shows that Theorem \ref{thm1} can also be used to analyze cases where $\|\theta_\star^{(2)}-\theta_\star^{(1)}\|_2^2\downarrow 0$, as $p\to\infty$. In such cases, consistency is guaranteed provided that  the term in (\ref{lb:kappa:2}) converges to zero. From the right-hand side of (\ref{lb:kappa}), we then see that the convergence rate of the estimator in such cases is changed to
\[\frac{c_0^2}{\|\theta_\star^{(2)}-\theta_\star^{(1)}\|_2^2}\frac{\log(dT)}{T}.\]

Another nice feature of Theorem \ref{thm1} is the fact that the constant $M$ describes the behavior of the change-point estimator as a function of the key parameters of the problem. In particular, the bound in (\ref{bound:thm1}) shows that the change-point estimator improves as $s_1,s_2$ (the number of non-zero entries of the matrices $\theta_\star^{(1)},\theta_\star^{(2)}$ resp.), or the noise term $c_0$ (the maximum fluctuation of $B_0$ and $B$) decrease. 
%The estimator also improves as the identifiability parameter $\kappa_0$ increases.

\section{Algorithm and Implementation Issues}
Given a sequence of observed $p$-dimensional vectors $\{x^{(t)}, 1\leq t\leq T\}$, we propose the following algorithm to compute the change point $\hat\tau$, as well as the estimate the estimates $\bigl(\hat{\theta}_{1,\hat\tau}, \hat{\theta}_{2,\hat\tau} \bigr)$.

\begin{algo}[\textbf{Basic Algorithm}]\;\;\;Input:\; a sequence of observed $p$-dimensional vectors $\{x^{(t)}, 1\leq t\leq T\}$, and $\Tau\subseteq \{1,\ldots,T\}$ the search domain.
\begin{enumerate}
\item For each $\tau\in \Tau$, estimate $\hat{\theta}_{1,\tau}, \hat{\theta}_{2,\tau} $ using for instance the algorithm in \cite{hof}.
\item
For each $\tau\in \Tau$, plug-in
the estimates $\hat{\theta}_{1,\tau}, \hat{\theta}_{2,\tau} $ in (\ref{log:ll}) and obtain the profile (negative) pseudo-log-likelihood function
$\mathcal{P}\ell(\tau)\eqdef \ell_T(\tau;\hat{\theta}_{1,\tau},\hat{\theta}_{2,\tau})$.
\item
Identify $\hat\tau$ that achieves the minimum of $\mathcal{P}\ell(\tau)$ over the grid $\Tau$, and use $\hat{\theta}_{1,\hat\tau}, \hat{\theta}_{2,\hat\tau}$ as  the  estimates of $\theta_\star^{(1)}$ and $\theta_\star^{(2)}$, respectively.
\end{enumerate} 
\end{algo}

In our implementation of the Basic Algorithm, we choose a search domain $\Tau$ of the form $\Tau=\left\{k_l, k_l+1,\ldots,T-k_l\right\}$, with $k_l$ sufficiently large to ensure reasonably good estimation errors at the boundaries. Existing results (\cite{ravi,gu}) suggest that a sample size of order $O(s^2\log(d))$ is needed, where $s$ is the number of edges, for a good recovery of Markov random fields.% For example, for a particular implementation with $T=700$, we choose $k_l=60$ as described in detail in section 5.
\\\\Note that to identify the change-point $\hat\tau$ the algorithm requires a 
{\em full scan} of all the time points in the set $\Tau$, which can be expensive when $\Tau$ is large. As a result, we propose a fast implementation that operates in two stages.  In the first stage, a coarser grid $\Tau_1\subset \Tau$
of time points is used and steps (a) and (b) of the Basic Algorithm are used to obtain $\ell_T(\tau;\hat{\theta}_{1,\tau},\hat{\theta}_{2,\tau}), \tau\in \Tau_1$.
Subsequently, the profile likelihood function $\ell_T$ is smoothed using a Nadaraya-Watson kernel (\cite{nad}). Based on this smoothed version of the profile
likelihood, an initial estimate of the change-point is obtained. In the second stage, a new fine-resolution grid $\Tau_2$ is formed around the first stage
estimate of $\hat\tau$. Then, the Basic Algorithm is used for the grid points in $\Tau_2$ to obtain the final estimate. This leads to a more practical algorithm summarized next.

\begin{algo}[\textbf{Fast Implementation Algorithm}]\;\;\;Input:\;a sequence of observed $p$-dimensional vectors $\{x^{(t)}, 1\leq t\leq T\}$, and $\Tau\subseteq \{1,\ldots,T\}$ the search domain.
\begin{enumerate}
\setlength{\itemsep}{5pt}
\item Find a coarser grid $\Tau_1$ of time points.
\item For each $\tau\in \Tau_1$,  use steps (a) and (b) of the Basic Algorithm to obtain 
	$\mathcal{P}\ell_T(\tau), \ \ \tau\in\Tau_1$.
\item Compute the profile negative pseudo-log-likelihood over the interval $[1, T]$ by Nadaraya-Watson kernel smoothing:
	\[\widetilde{\mathcal{P}\ell_{1s}}(\tau)\eqdef \frac{\sum_{\tau_i\in \Tau_1}K_{h_\nu}\left(\tau,\tau_i\right)\ell(\tau_i;\widehat{\mathbf{\theta}}_{1,\tau_i},\widehat{\mathbf{\theta}}_{2,\tau_i})}{\sum_{\tau_i\in \Tau_1}\ell\left(\tau_i;\widehat{\mathbf{\theta}}_{1,\tau_i},\widehat{\mathbf{\theta}}_{2,\tau_i}\right)},\;1\leq \tau\leq T.\]
	The first stage change-point estimate is then obtained as 
\[	\widehat{\tau}=\displaystyle\argmin_{1<\tau<T}\widetilde{\mathcal{P}\ell_{1s}}(\tau).\]
\item Form a second stage grid $\Tau_2$ around the first stage estimate $\hat{\tau}$ and for each $\tau\in\Tau_2$, estimate $\widehat{\widehat{\mathbf{\theta}}}_{1,\tau}$ and $\widehat{\widehat{\mathbf{\theta}}}_{2,\tau}$  using steps (a) and (b) of the Basic Algorithm.
\item Construct the second stage smoothed profile pseudo-likelihood 
\[\widetilde{\mathcal{P}\ell_{2s}}(\tau)\eqdef \frac{\sum_{\tau_i\in \Tau_2}K_{h_\nu}\left(\tau,\tau_i\right)\ell\left(\tau_i;\widehat{\widehat{\mathbf{\theta}}}_{1,\tau_i},\widehat{\widehat{\mathbf{\theta}}}_{2,\tau_i}\right)}{\sum_{\tau_i\in \Tau_2}\ell\left(\tau_i;\widehat{\widehat{\mathbf{\theta}}}_{1,\tau_i},\widehat{\widehat{\mathbf{\theta}}}_{2,\tau_i}\right)},\;\min(\Tau_2)\leq \tau\leq \max(\Tau_2).\]
	The final change-point estimate is then given by 
\[\widehat{\widehat{\tau}}=\displaystyle\argmin_{\min(\Tau_2)\leq \tau\leq \max(\Tau_2)}\;\widetilde{\mathcal{P}\ell_{2s}}(\tau).\]
\end{enumerate}
\end{algo}

\section{Performance Assessment}
\subsection{Comparing Algorithm 1 and Algorithm 2}
We start by examining the relative performance of both the Basic (Algorithm 1) and the Fast Implementation Algorithms (Algorithm 2). We use the so called Ising model; i.e. when ~\eqref{model:intro} has $B_0\left(x_j\right)=x_j$, $B\left(x_j,x_k\right)=x_jx_k$ and $\Xset\equiv\left\{0,1\right\}$. In all simulation setting the sample size is set to $T=700$, and the true change-point is at $\tau_\star=350$, while the network size $p$ varies from 40-100. All the simulation results reported below are based on 30 replications of Algorithm 1 and Algorithm 2.

The data are generated as follows. We first generate two $p\times p$ symmetric adjacency matrices each having density 10\%; i.e. only $\sim$10\% of the entries are different than zero. Each off-diagonal element of $\mathbf{\theta}_{\star jk}^{(i)}$, ($i=1,2$) is drawn uniformly from $\left[-1,-0.5\right]\cup\left[0.5,1\right]$ if there is an edge between nodes $j$ and $k$, otherwise $\mathbf{\theta}_{\star jk}^{(i)}=0$. All the diagonal entries are set to zero. Given the two matrices $\mathbf{\theta}_{\star }^{(1)}$ and $\mathbf{\theta}_{\star }^{(2)}$, we generate the data $\left\{X^{(t)}\right\}_{t=1}^{\tau_\star}\iid g_{{\mathbf{\theta}}_{*}^{(1)}}$ and $\left\{X^{(t)}\right\}_{t=\tau_\star+1}^{T}\iid g_{{\mathbf{\theta}}_{*}^{(2)}}$ by Gibbs sampling.

 Different ``signal strenghts" are considered, by setting the degree of similarity between $\theta_\star^{(1)}$ and $\theta_\star^{(2)}$ to $0\%$, $20\%$ and $40\%$. The degree of similarity is the proportion of equal off-diagonal elements between $\theta_\star^{(1)}$ and $\theta_\star^{(2)}$. Thus, the difference $\|\theta_\star^{(2)}-\theta_\star^{(1)}\|_1$ becomes smaller for higher degree of similarity and as can be seen from Assumption H3, the estimation problem becomes harder in such cases. 
 
 The choice of the tuning parameters $\lambda_{1,\tau}$ and $\lambda_{2,\tau}$ were made based on Bayesian Information Criterion (BIC) where we search $\lambda_{1,\tau}$ and $\lambda_{2,\tau}$ over a grid $\Lambda$ and for each penalty parameter the $\lambda$ value that minimizes the BIC score (defined below)  over $\Lambda$ is selected. If we define $\lambda_1^{BIC}$ and $\lambda_2^{BIC}$ as the selected $\lambda$ values for $\lambda_1$ and $\lambda_2$ by BIC we have
$$\lambda_1^{BIC}=\argmin_{\lambda\in\Lambda}-\frac{2}{T}\displaystyle\sum_{t=1}^\tau\phi\left(\hat\theta_{1,\tau}^{(\lambda)},X^{(t)}\right)+\log(\tau)\|\hat\theta_{1,\tau}^{(\lambda)}\|_0\mbox{  and }$$
$$\lambda_2^{BIC}=\argmin_{\lambda\in\Lambda}-\frac{2}{T}\displaystyle\sum_{t=\tau+1}^T\phi\left(\hat\theta_{2,\tau}^{(\lambda)},X^{(t)}\right)+\log(T-\tau)\|\hat\theta_{2,\tau}^{(\lambda)}\|_0$$ where $\|\theta\|_0 \eqdef \sum_{k\leq j}\textbf{1}_{\{|\theta_{jk}|>0\}}$.

For the fast algorithm (Algorithm 2), the first stage grid employed had a step size of 10 and
ranged from 60 to 640, while the second stage grid was chosen in the interval $[\hat\tau-30, \hat\tau+30]$ with
a step-size of 3. 

We present the results for Algorithm 1 in Table \ref{basic_result} for the case $p=40$. It can be seen that Algorithm 1 performs very well for stronger signals (0\% and 20\% similarity), while  there is a small degradation for the 40\% similarity setting. The results on the specificity, sensitivity and the relative error of the estimated network structures  are given in Table ~\ref{basic_estnet}. Specificity is defined as the proportion of true negatives and can also be interpretated as (1-Type 1 error). On the other hand sensitivity is the proportion of true positives and can be interpreted as the power of the method. The results for Algorithm 2 for $p=40,60$ and $p=100$, for the change-point estimates are given in Table~\ref{estimresult}, while
the specificity, sensitivity and relative error of the estimated network structures are given in Table~\ref{matresult}.  These results show that Algorithm 2 has about $20\%$ higher mean-squared error (MSE) compared to Algorithm 1. However as pointed out in Section 4, Algorithm 2 is significantly faster. In fact in this particular simulation setting, Algorithm 2 is almost 5 times faster in a  standard computing environment with 4 CPU cores. See also the results in Table~\ref{runtab} which reports the ratio of the run-time of a single iteration of Algorithm 1 and Algorithm 2.

Further, selected plots of the profile smoothed pseudo-log-likelihood functions $\widetilde{\mathcal{P}\ell_{1s}}(\tau)$ and $\widetilde{\mathcal{P}\ell_{2s}}(\tau)$ from the first and second stage of Algorithm 2  are given in Figure~\ref{fig1}.

\vspace{1.5cm}

\begin{center}
%\label{estimresult}
\captionof{table}{Change-point estimation results using the Basic Algorithm, for different percentages of similarity.\label{basic_result}}  % title name of the table
\begin{tabular}{l c c c c}  % creating 10 columns
\hline\hline                       % inserting double-line
 $p$ & \% of Similarity & $\widehat{\tau}$ & RMSE & CV 
\\ [0.5ex]   
\hline
\multirow{3}{*}{40} & 0 & 355 & 14.77 & 0.03\\
                    & 20 & 362 & 24.65 & 0.06\\
										& 40 & 375 & 38.49 & 0.08\\ \hline
\end{tabular}
\end{center}

\vspace{0.5cm}

\begin{center}
\captionof{table}{Specificity, sensitivity and relative error in estimating $\theta_\star^{(1)}$ and $\theta_\star^{(2)}$ from the Basic Algorithm, with different percentages of similarity.\label{basic_estnet}}  % title name of the table
%\label{estimresult}
\begin{tabular}{llllllll}
\hline\hline 
 $p$ & \% of Similarity & \multicolumn{2}{c}{Specificity} & \multicolumn{2}{c}{Sensitivity} & \multicolumn{2}{c}{Relative error} \\
& & $\mathbf{\theta}_*^{(1)}$ & $\mathbf{\theta}_*^{(2)}$ & $\mathbf{\theta}_*^{(1)}$ & $\mathbf{\theta}_*^{(2)}$ & $\mathbf{\theta}_*^{(1)}$ & $\mathbf{\theta}_*^{(2)}$\\
    \midrule
    \multirow{3}{*}{40} & 0 & 0.78 & 0.87 & 0.79 & 0.89 & 0.70 & 0.63\\
                    & 20 & 0.74 & 0.88 & 0.80 & 0.88 & 0.72 & 0.67\\
										& 40 & 0.71 & 0.80 & 0.77 & 0.81 & 0.75 & 0.72\\ \hline
\bottomrule
\end{tabular}
\end{center}

\vspace{0.5cm}

\begin{center}
\captionof{table}{Ratio of the computing time of one iteration of Algorithm 1 and Algorithm 2.\label{runtab}}  % title name of
%\label{estimresult}
\begin{tabular}{c c}
\hline\hline
$p$ & Ratio of computing times\\
\midrule
40 & 4.93\\
60 & 4.82\\
100 & 4.81\\
\hline
\end{tabular}
\end{center}

%%\begin{table}
\begin{center}
\captionof{table}{Change-point Estimation Results for different values of $p$ and different percentages of similarity for the Fast Implementation Algorithm.($T=700$, $s_1=s_2=\frac{10p(p+1)}{2}\%$, $\tau^*=354$)\label{estimresult}}  % title name of the table
%\label{estimresult}
\begin{tabular}{l c c c c c}  % creating 10 columns
\hline\hline                       % inserting double-line
 p & \% of Similarity & $\widehat{\tau}$ & $\widehat{\widehat{\tau}}$ & RMSE & CV 
\\ [0.5ex]   
\hline
\multirow{3}{*}{40} & 0 & 360 & 360 & 17.89 & 0.04\\
                    & 20 & 363 & 361 & 30.07 & 0.08\\
										& 40 & 375 & 373 & 47.97 & 0.10\\ \hline
\multirow{3}{*}{60} & 0 & 357 & 356 & 23.05 & 0.06\\
                    & 20 & 388 & 386 & 43.20 & 0.08\\
										& 40 & 410 & 408 & 61.45 & 0.09\\ \hline
\multirow{3}{*}{100} & 0 & 356 & 355 & 35.93 & 0.10\\
                    & 20 & 408 & 401 & 62.89 & 0.10\\
										& 40 & 424 & 421 & 85.04 & 0.12\\ \hline
										
\end{tabular}
\end{center}
%\end{table}

%\begin{table}
\begin{center}
\captionof{table}{Specificity, sensitivity and relative error of the two parameters for different values of $p$ and different percentages of similarity for the Fast Implementation Algorithm.\label{matresult}}  % title name of the table
%\label{estimresult}
\begin{tabular}{llllllll}
\hline\hline 
 p & \% of Similarity & \multicolumn{2}{c}{Specificity} & \multicolumn{2}{c}{Sensitivity} & \multicolumn{2}{c}{Relative error} \\
& & $\mathbf{\theta}_*^{(1)}$ & $\mathbf{\theta}_*^{(2)}$ & $\mathbf{\theta}_*^{(1)}$ & $\mathbf{\theta}_*^{(2)}$ & $\mathbf{\theta}_*^{(1)}$ & $\mathbf{\theta}_*^{(2)}$\\
    \midrule
    \multirow{3}{*}{40} & 0 & 0.74 & 0.86 & 0.78 & 0.86 & 0.74 & 0.67\\
                    & 20 & 0.74 & 0.81 & 0.76 & 0.82 & 0.73 & 0.71\\
										& 40 & 0.72 & 0.78 & 0.78 & 0.82 & 0.74 & 0.70\\ \hline
		\multirow{3}{*}{60} & 0 & 0.81 & 0.83 & 0.77 & 0.82 & 0.75 & 0.66\\
                    & 20 & 0.82 & 0.87 & 0.70 & 0.72 & 0.79 & 0.73\\
										& 40 & 0.80 & 0.86 & 0.65 & 0.68 & 0.81 & 0.78\\ \hline
		\multirow{3}{*}{100} & 0 & 0.82 & 0.88 & 0.75 & 0.84 & 0.78 & 0.66\\
                     & 20 & 0.81 & 0.87 & 0.66 & 0.70 & 0.81 & 0.78\\
										 & 40 & 0.85 & 0.87 & 0.63 & 0.68 & 0.83 & 0.81\\
    \bottomrule
\end{tabular}
\end{center}
%\end{table}

%\end{table}

\begin{figure}[ht]
\centering
\begin{tabular}{ccc}
\epsfig{file=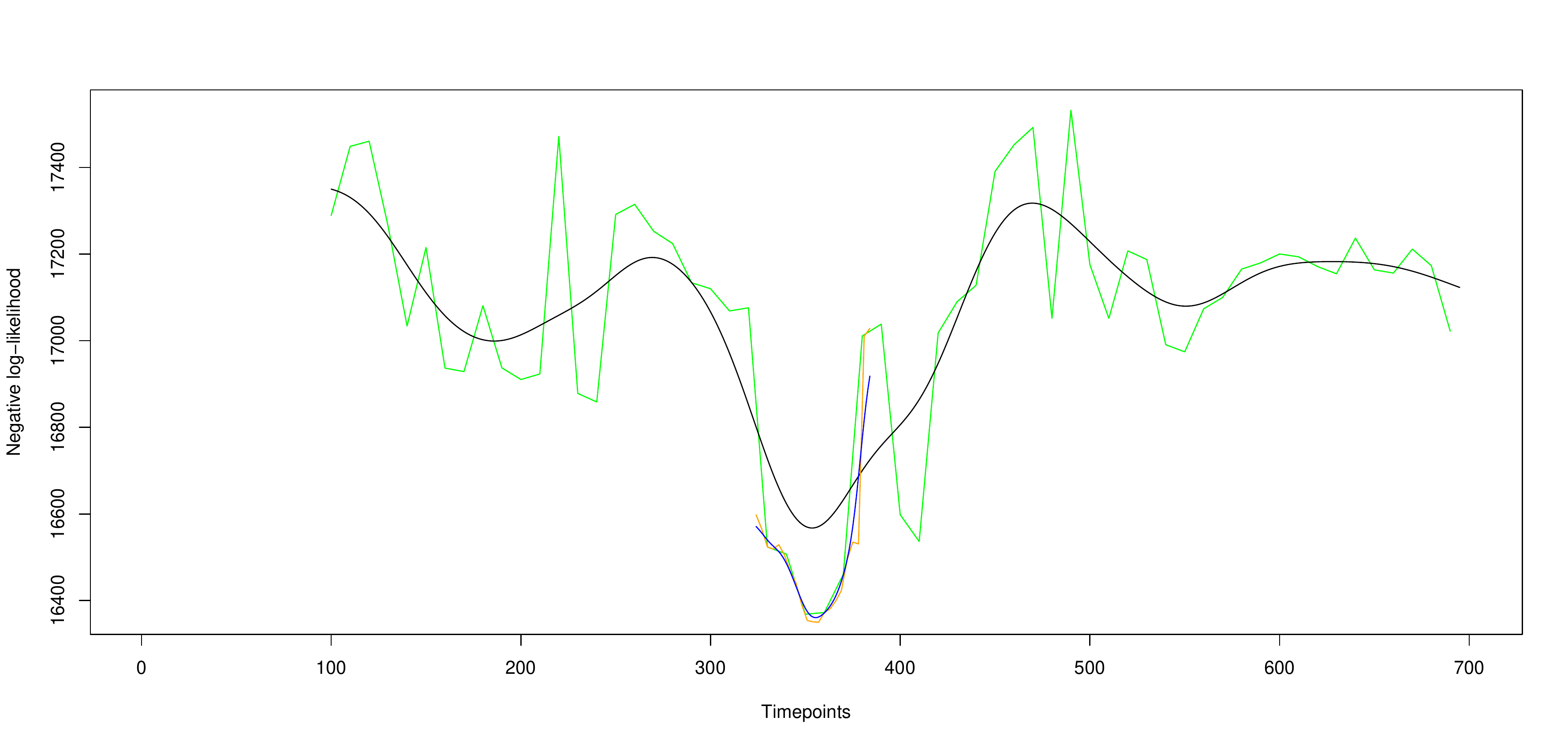,width=0.3\linewidth,clip=} & 
\epsfig{file=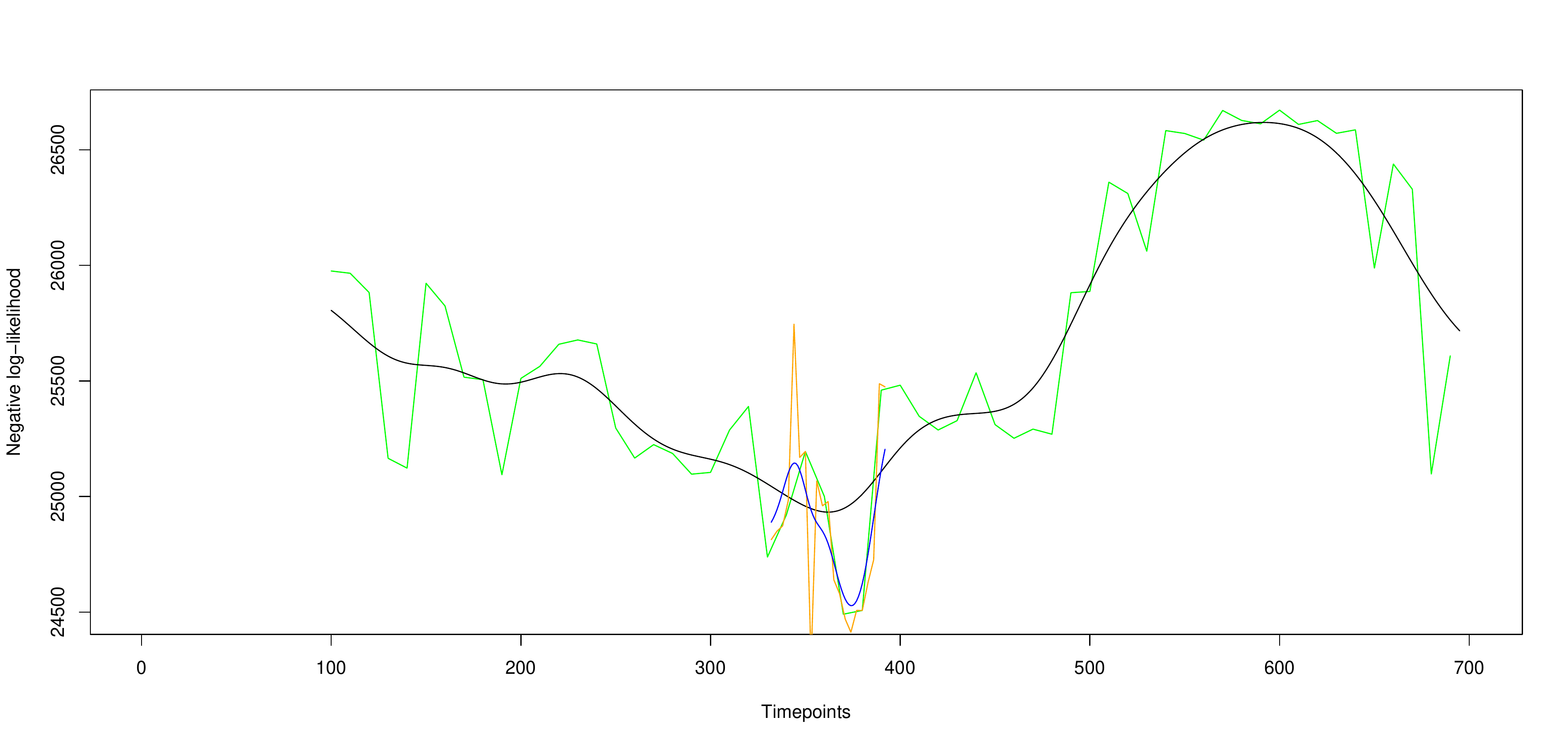,width=0.3\linewidth,clip=} &
\epsfig{file=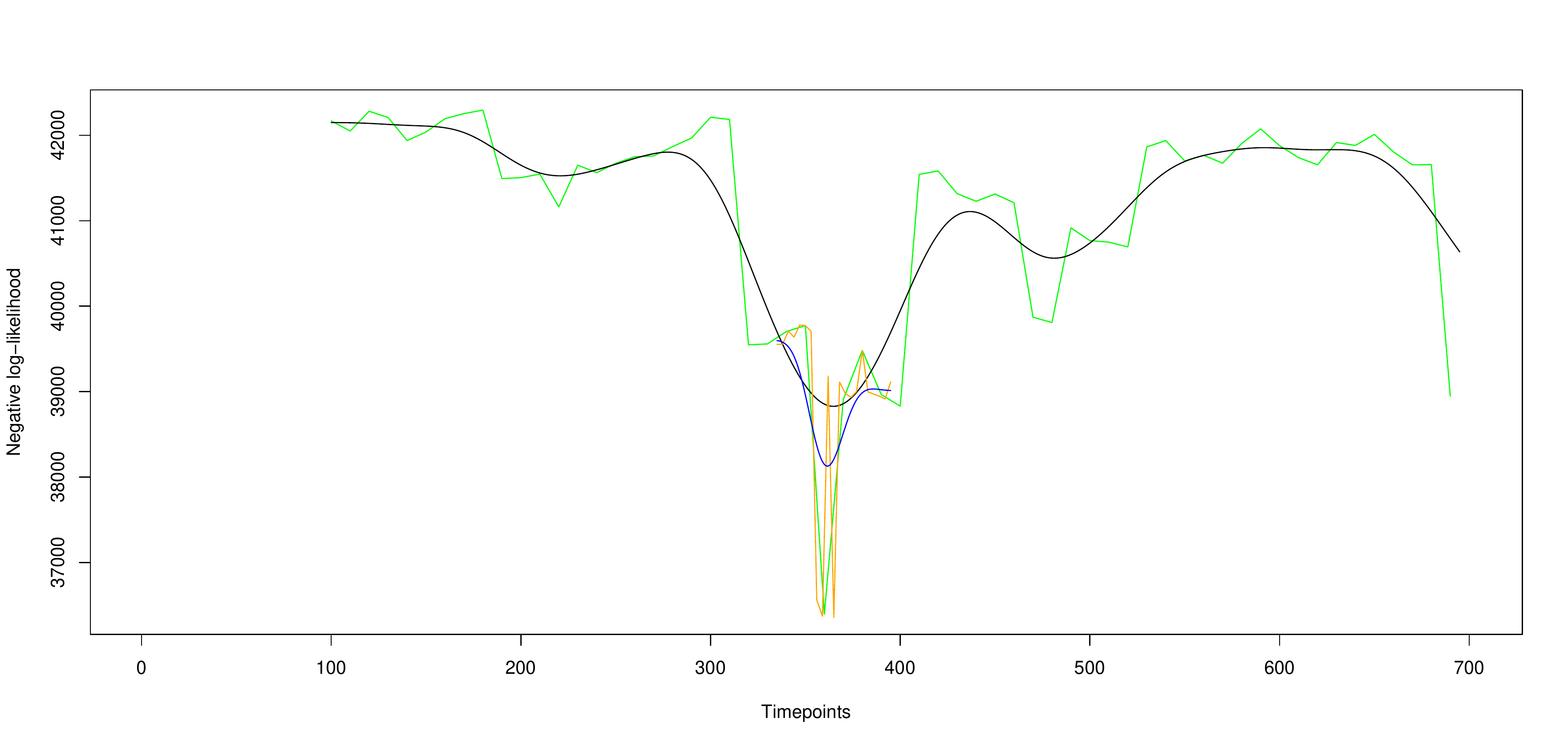,width=0.3\linewidth,clip=} \\
\epsfig{file=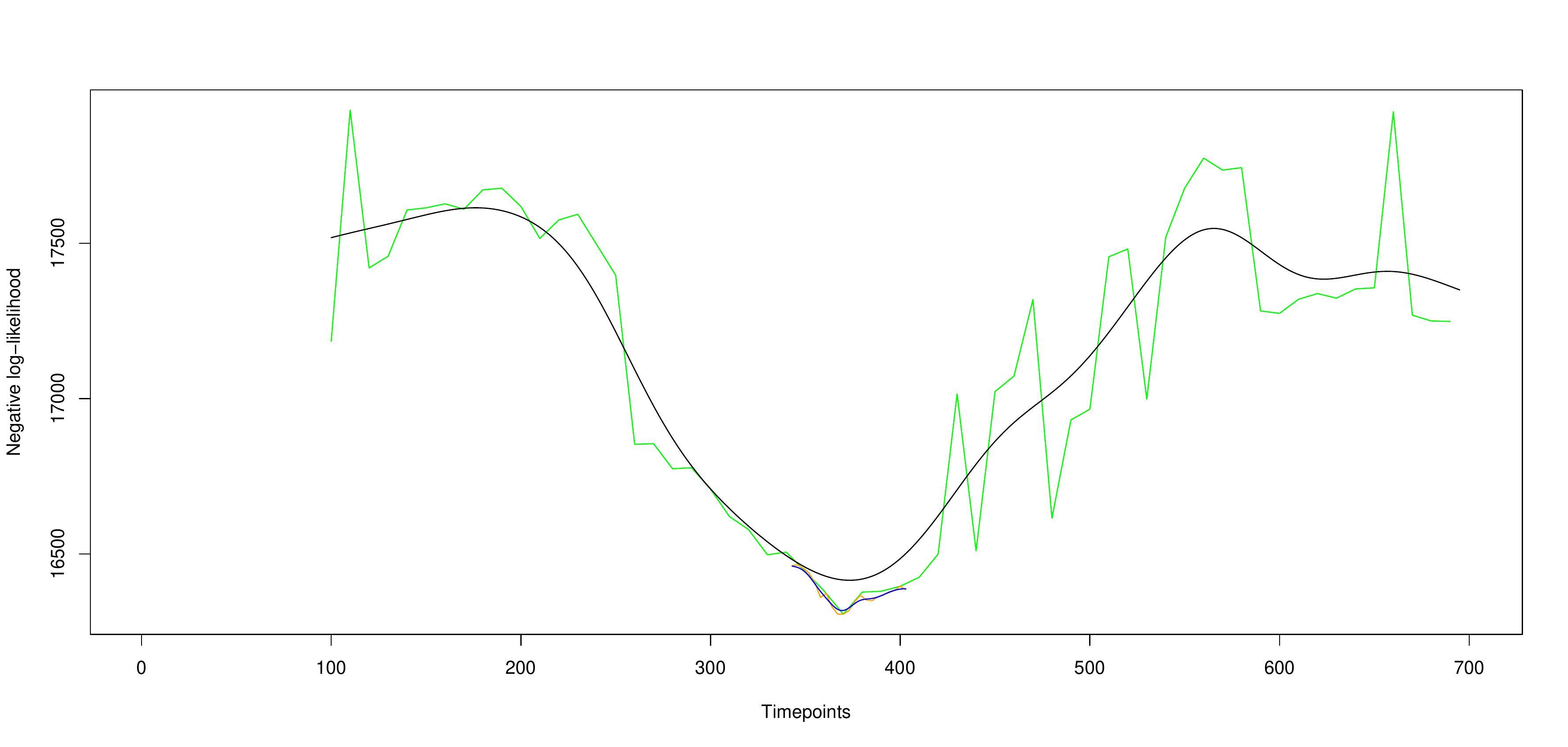,width=0.3\linewidth,clip=} &
\epsfig{file=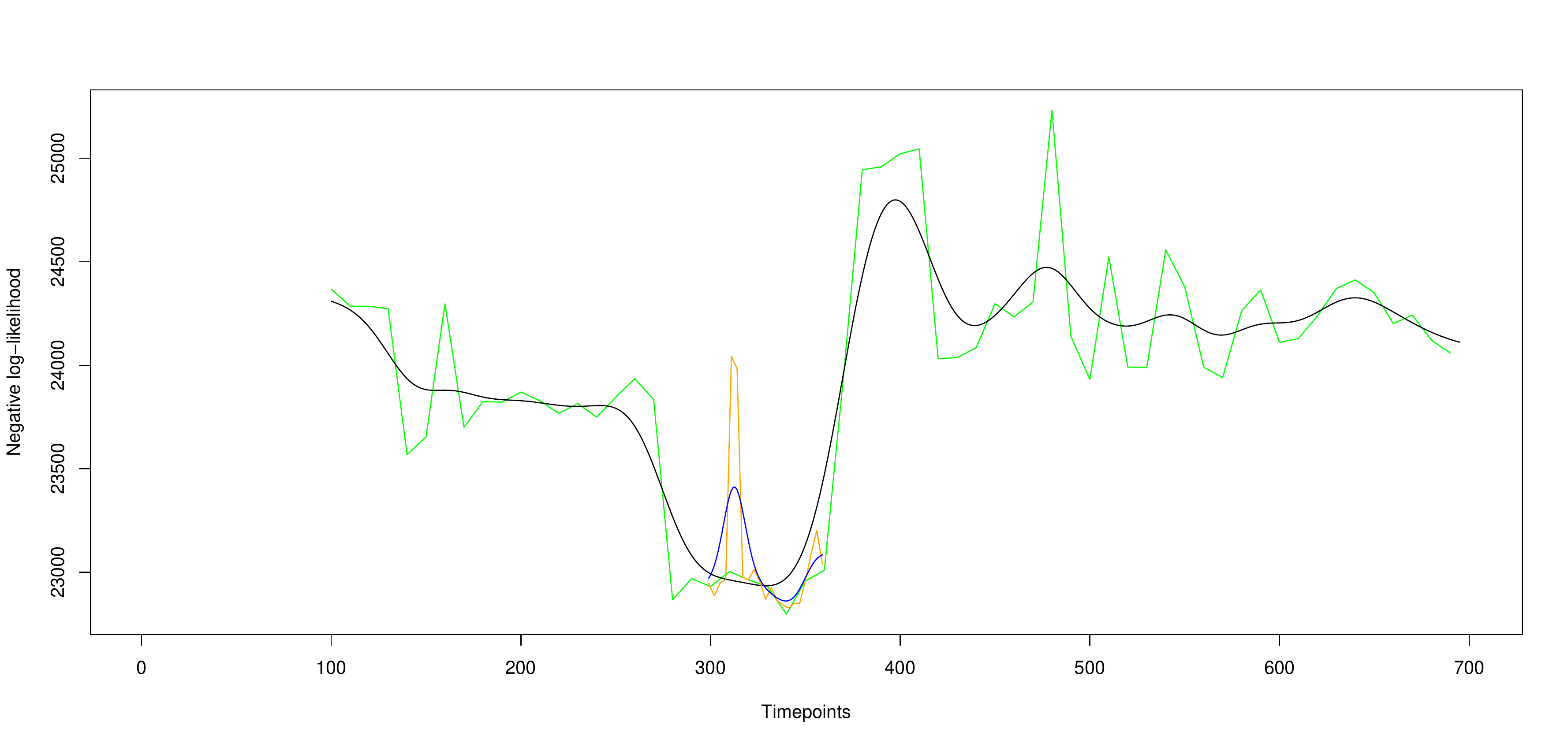,width=0.3\linewidth,clip=} &
\epsfig{file=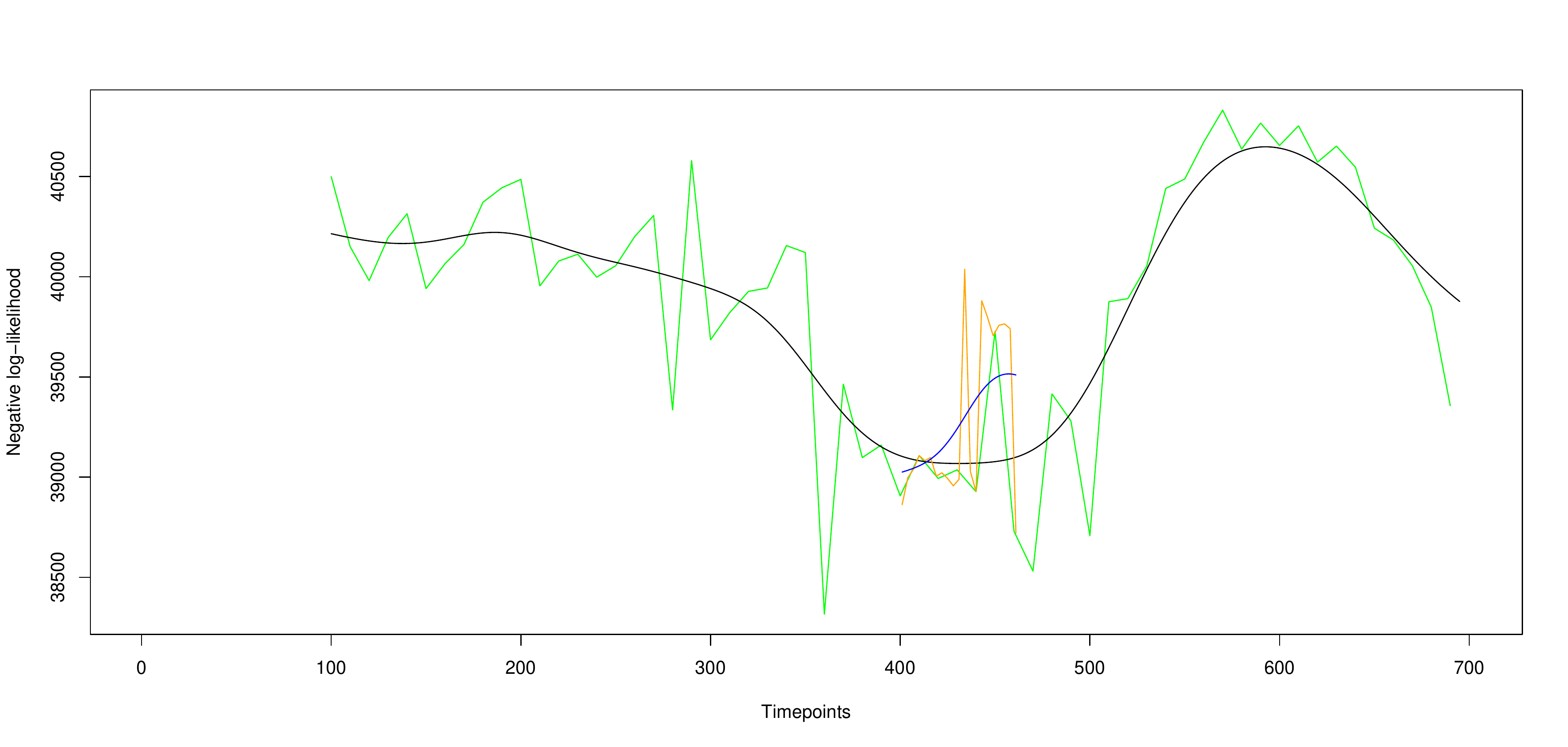,width=0.3\linewidth,clip=} \\
\epsfig{file=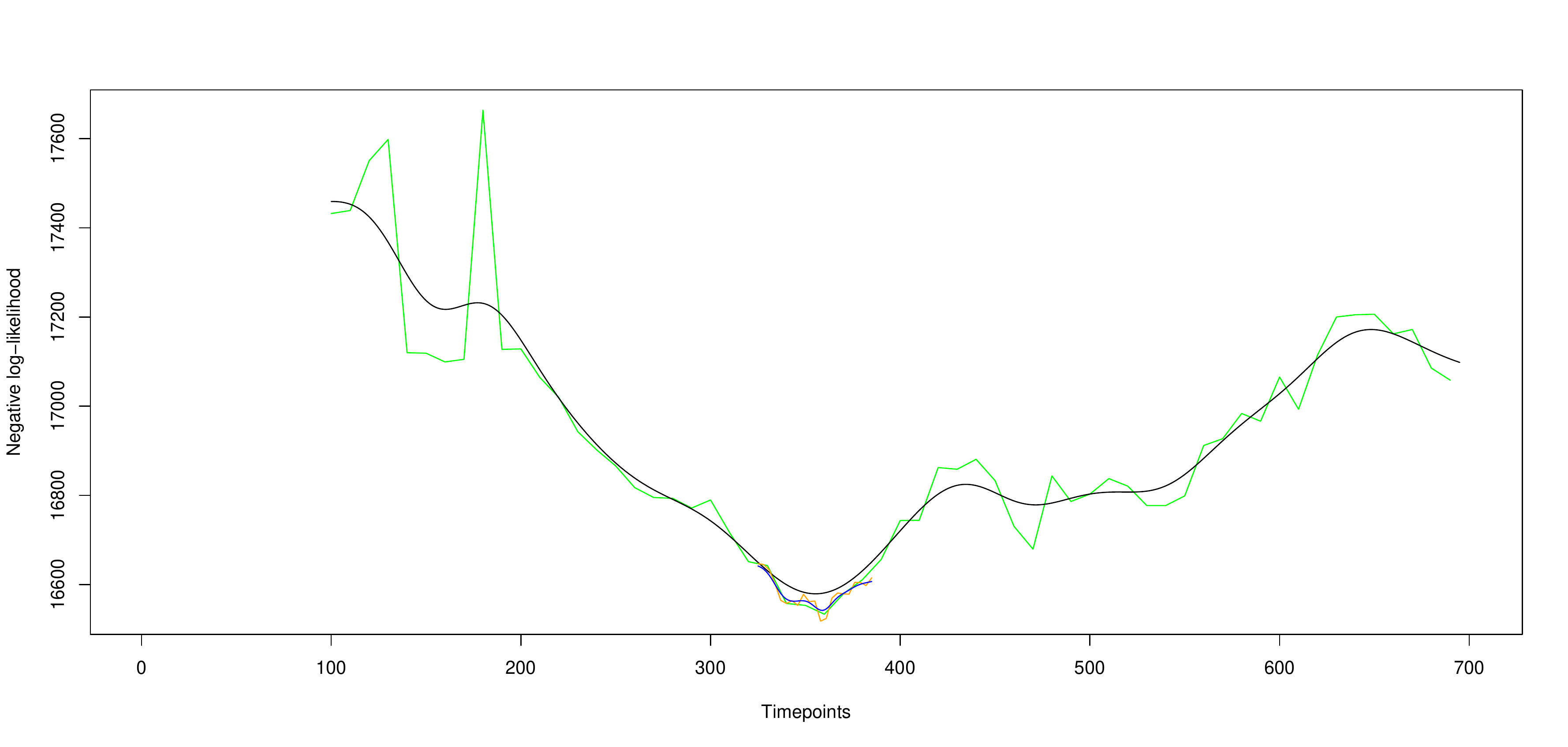,width=0.3\linewidth,clip=} &
\epsfig{file=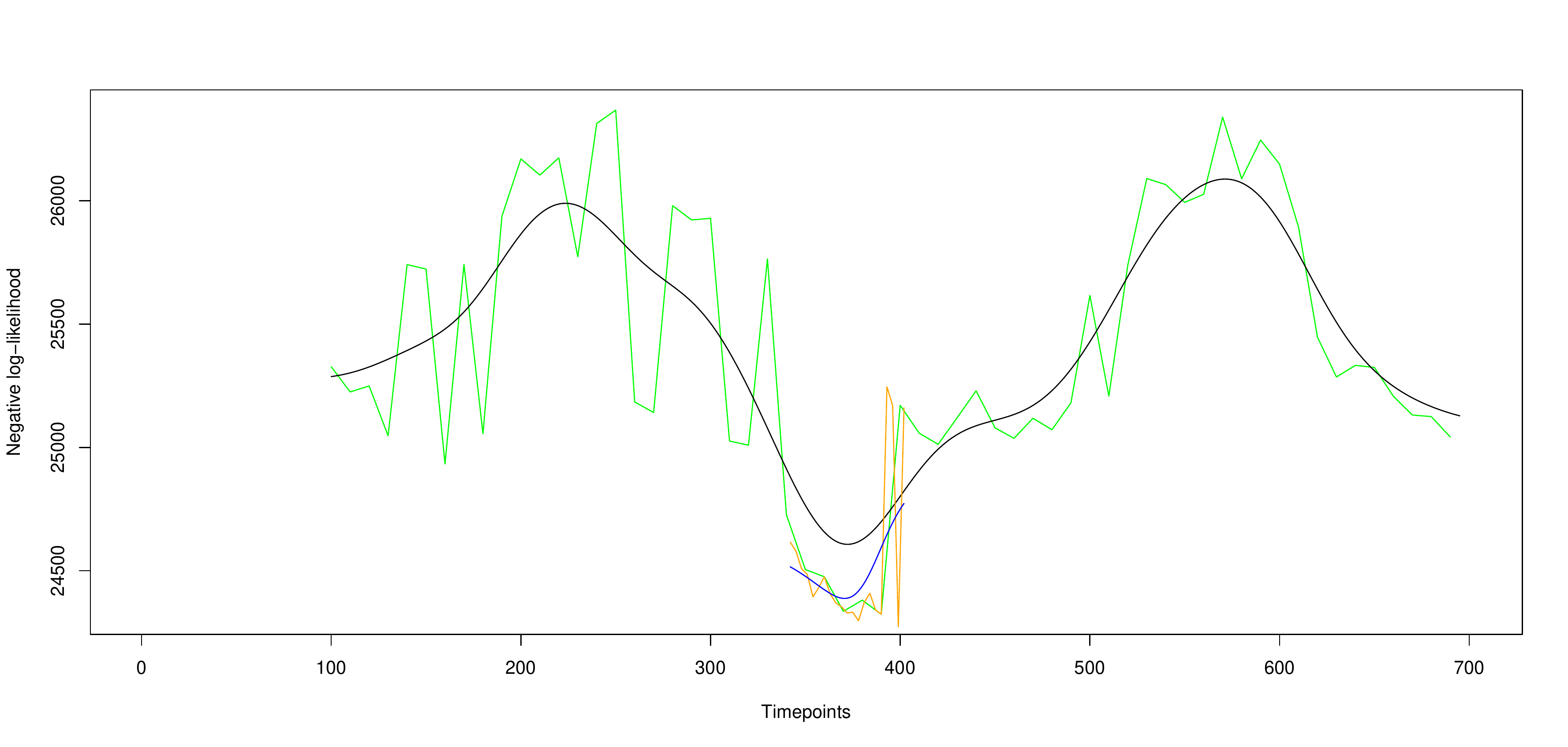,width=0.3\linewidth,clip=} &
\epsfig{file=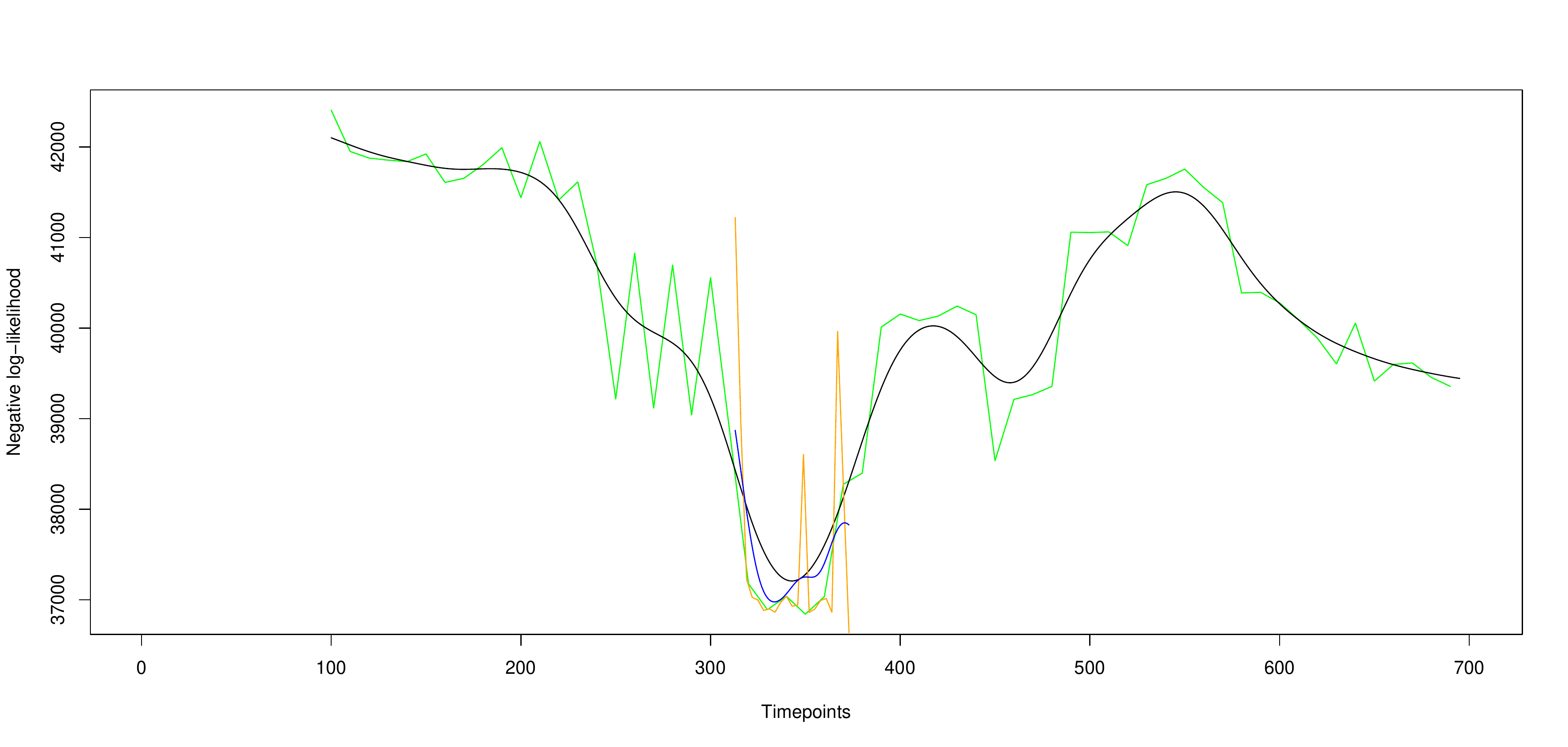,width=0.3\linewidth,clip=} \\
\end{tabular}
\caption{Smoothed profile pseudo-log-likelihood functions from one run of Algorithm 2. Different values of similarity ($0\%$, $20\%$ and $40\%$) in rows. Different values of $p$ ($p=40, 60\,\,\&\,\,100$) in column. The green curve is the non-smoothed profile pseudo-log-likelihood from Stage 1 of Algorithm 2, and the black curve is its smoothed version. The orange and the blue curve are respectively the non-smoothed and the smoothed profile pseudo-log-likelihood functions from Stage 2 of Algorithm 2.}
\label{fig1}
\end{figure}

\subsection{A community based network structure}
Next, we examine a setting similar to the one that emerges from the US Senate analysis presented in the next Section.
Specifically, there are two highly ``connected" communities of size $p=50$ that are more sparsely connected before the
change-point, but exhibit fairly strong negative association between their members after the change-point. Further,
the within community connections are increased for one of them and decreased for the other after the occurrence of the
change-point. We keep the density of the two matrices encoding the network structure before and after the true change-point at 10\%. In the pre change-point regime, 40\% of the non-zero entries are attributed to within group connections in community 1 (see Table ~\ref{table_community}), and 50\% to community 2 (see Table ~\ref{table_community}), while the remaining 10\% non-zeros represent between group connections and are negative. Note that the within group connections are all positive. In the post change-point regime, the community 1 within group connections slightly increase to 42\% of the non-zero entries, whereas those of community 2 decrease to 17\% of the non-zero entries. The between group connections increase to 41\% of the non-zero entries in the post change-point regime. As before, each off-diagonal element $\mathbf{\theta}_{jk}^{(i)}$, $i=1,2$ is drawn uniformly from $\left[-1,-0.5\right]\cup\left[0.5,1\right]$ if nodes $j$ and $k$ are linked by an edge, otherwise $\mathbf{\theta}_{*,jk}^{(i)}=0$, $i=1,2$ and the diagonals for both the matrices are assigned as zeros. Given the two matrices $\mathbf{\theta}_{*}^{(1)}$ and $\mathbf{\theta}_{*}^{(2)}$, we generate data using the ``BMN'' package (\cite{hofl}) as described earlier. The total sample size employed is $T=1500$ and the true change-point is at $\tau^*=750$. We choose the first stage grid comprising of 50 points with a step size of 27 and the second stage grid is chosen in a neighborhood of the first stage estimate with a step size of 3 with 20 points. We replicate the study 5 times and find that the estimated change-point averaged over the 5 replications as $\hat{\tau}=768$. The relevant figure (see Figure ~\ref{fig2}) for this two community model is given below. 
The analysis indicates that our proposed methodology is able to estimate the true change-point sufficiently well in the presence of varying degrees of connections between two communities over two different time periods, a reassuring feature for the US Senate application presented next.

\begin{center}
\captionof{table}{Positive and negative edges before and after the true change-point for two community model}
\begin{tabular}{|c|c|c|c|c|c|c|}
\hline
Edges	&	\multicolumn{3}{c|}{Before}	&	\multicolumn{3}{c|}{After}\\
\hhline{~------}
	&	comm 1	&	comm 2	& between & comm 1 & comm 2 &	between \\\hline
positive	& 50 & 63 & 0 & 52 & 21 & 0 \\	\hline
negative	& 0 & 0 & 10 & 0 & 0 & 50 \\	\hline
Total	& 50 & 63 & 10 & 52 & 21 & 50 \\	\hline
\hline
\end{tabular}
\label{table_community}
\end{center}

\begin{figure}[ht]
\centering
\includegraphics[scale=0.3]{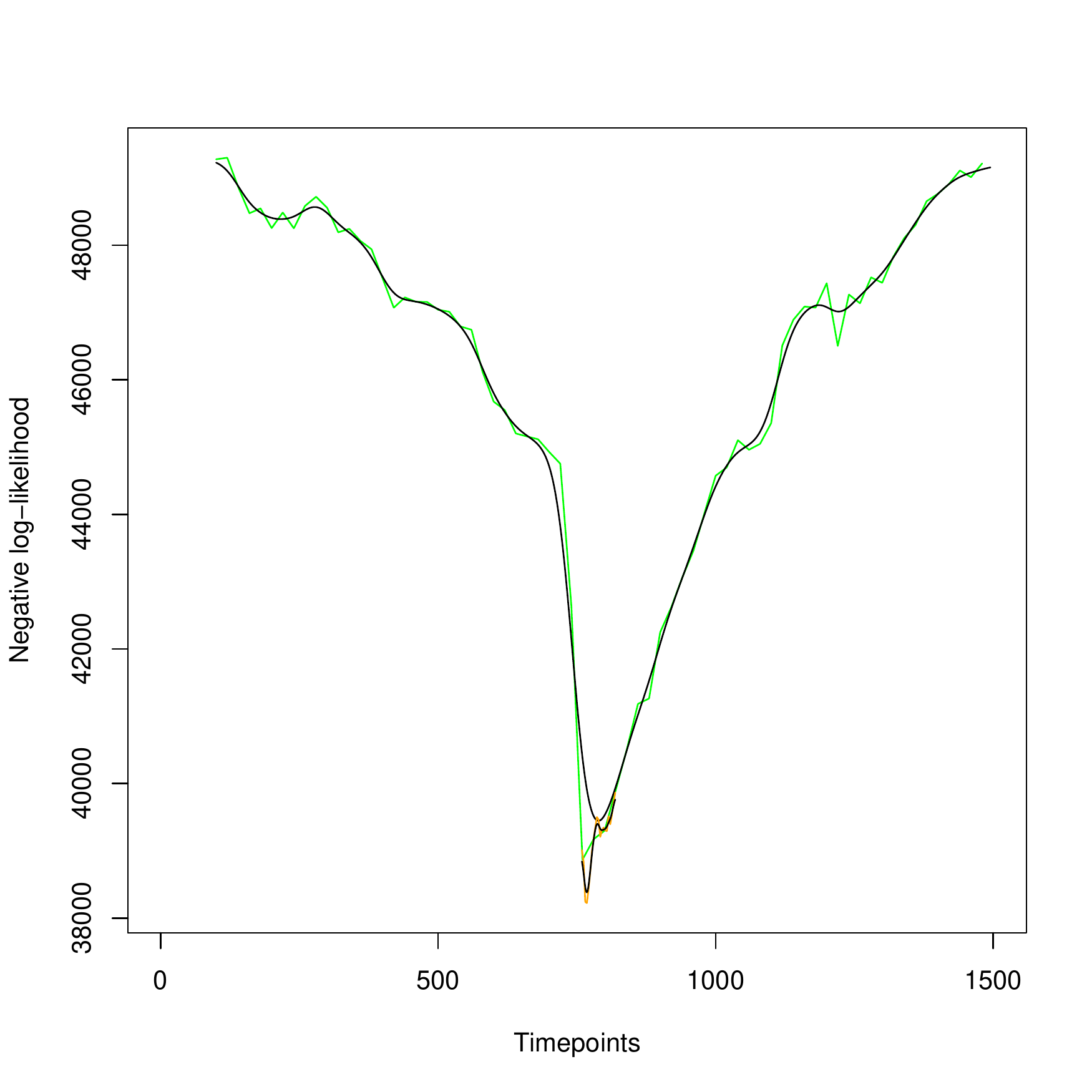}
\caption{Change-point estimate for the two community model with $p=50$, $T=1500$ and $\tau^*$=754}
\label{fig2}
\end{figure}

\section{Application to Roll Call Data of the US Senate}

The data examined correspond to voting records of the US Senate covering the period 1979 (96th Congress) to
2012 (112th Congress) and were obtained from the website {\tt www.voteview.com}. Specifically, for each of the 
12129 votes cast during this period, the following information is recorded: the date that the vote occurred 
and the response to the bill/resolution under consideration -yes/no, or abstain- of the 100 Senate members.
 Due to the length of the time period under consideration, there was significant turnover
of Senate members due to retirements, loss of re-election bids, appointments to cabinet or other administrative positions, or physical demise. In order to hold the number of nodes fixed to 100 (the membership size of
the US Senate at any point in time), we considered Senate seats (e.g. Michigan 1 and Michigan 2) and carefully mapped 
the senators to their corresponding seats, thus creating a continuous record of the voting pattern of each Senate seat.

Note that a significant number of the 12129 votes 
deal with fairly mundane procedural matters, thus resulting in nearly unanimous outcomes. Hence, only votes
exhibiting conformity less than 75\% (yes/no) in either direction were retained, thus resulting in an effective sample
size of $T=7949$ votes. Further, missing values due to abstentions were imputed by the value (yes/no) of that member's
party majority position on that particular vote. Note that other imputation methods of missing values were employed: (i) 
replacing all missing values by the value (yes/no) representing the winning majority on that bill and (ii) replacing the missing
value of a Senator by the value that the majority of the opposite party voted on that particular bill. The results based on these
two alternative imputation methods are given in the Supplement.

Finally, the yes/no votes were encoded as 1/0, respectively. Under
the posited model, votes are considered as i.i.d. from the same underlying distribution pre and post any change-point.
In reality, voting patterns are more complex and in all likelihood exhibit temporal dependence within the two year period
that a Congress serves and probably even beyond that due to the slow turnover of Senate members. Nevertheless, the
proposed model serves as a {\em working model} that captures essential features of the evolving voting dependency structure
between Senate seats over time.

The likelihood function together with an estimate of a change-point are depicted in Figure \ref{change-point} based on the 
Fast Implementation Algorithm presented in Section 4. We choose our first stage grid with a step-size of 50 that yields 157 points excluding time points close to both boundaries. In the second stage, we choose a finer-resolution grid with a step size of 20 in a neighborhood of the first stage change-point estimate. The vote corresponding to the change point occurred on January 17, 1995 at the beginning of the tenure of the 104th Congress. This change-point comes at the footsteps of the November 1994 election that witnessed the Republican Party capturing the US House of Representatives for the first time after 1956. As discussed in the political science literature, the 1994 election marked the end of the ``Conservative Coalition'', a bipartisan coalition of conservative oriented
Republicans and Democrats on President Roosevelt's ``New Deal'' policies, which had often managed to control Congressional outcomes since the ``New Deal'' era. Note that other analyses based on fairly ad hoc methods (e.g. \cite{moo}) also point to a significant change occurring after the November 1994 election.

Next, we examine more closely the pre and post change-point network structures, shown in the form of heatmaps of
the adjacency matrices in Figure~\ref{maps}. To obtain stable estimates of the respective network structures,
stability selection (\cite{mein2}) was employed with edges retained if they were present in more than 90\% of the 50 
networks estimated from bootstrapped data. To aid interpretation, the 100 Senate seats were assigned to three categories: Democrat (blue), mixed (yellow) and Republican (red). Specifically, a seat was assigned to the 
Democrat or Republican categories if it were held for more than 70\% of the time by the corresponding party within the 
pre or post change-point periods; otherwise, it was assigned to the mixed one. This means that if a seat was held for
more than 5 out of the 8 Congresses in the pre change-point period and similarly 6 out of 9 Congresses in the post period by
the Democrats, then it is assigned to that category and similarly for Republican assignments; otherwise, it is categorized as mixed.

In the depicted heatmaps, the ordering of the Senate seats in the pre and post change-point regimes are kept as similar 
as possible, since some of the seats changed their category membership completely across periods. Further, the green dots represent positive edge weights, mostly corresponding to within categories interactions, while black dots represent negative edge weights, mostly between category interactions. It can be clearly seen an emergence of a significant number of black dots in the post change-point regimes, indicative of sharper disagreements between political parties and thus increased polarization. Further, it can be seen that in the post change-point regime the mixed group becomes more prominent, indicating that it contributes to the emergence of a change-point.

To further explore the reasons behind the presence of a change-point, we provide some network statistics in Figure~\ref{fig3} and Figure~\ref{fig4}. Specifically, the two figures present the proportion of positive and negative edges, before and after the estimated change-point using two different methods for selecting the penalty tuning parameters; an analogue of the Bayesian Information Criterion and threshold 0.8 for the stability selection  method respectively.
The patterns shown across the figures for the two different methods are very similar- high proportion of positive edges within groups and very low or almost negligible proportion of negative edges within the ``{\color{red}republican}'' or  ``{\color{blue}democrat}'' groups in both pre and post-change-point periods. Further, a large proportion of negative edges can be accounted for ``{\color{red}republican}'' and ``{\color{blue}democrat}''  group interactions, which tend to increase in the post regime.
 One noticeable fact is that the proportion of positive edges within the ``{\color{red}republican}'' and ``{\color{blue}democrat}'' groups remain almost same from pre to post change-point regime under BIC and stability selection both whereas the proportion of positive edges between the two groups decrease and the proportion of negative edges between them tend to increase from pre to post change-point regime for both the methods. It can also be observed that the  ``{\color{yellow}mixed}'' and the ``{\color{blue}democrat}''  groups exhibit a large proportion of positive edges between them in the pre regime, as gleaned from their overlap in the corresponding heatmap. 

We also present some other network statistics, such as average degree, centrality scores and average clustering coefficients for the three groups ``{\color{red}republican}'', ``{\color{blue}democrat}'' and ``{\color{yellow}mixed}'' in Table~\ref{net_stat1}. We observe that in terms of centrality scores  
the ``{\color{blue}democrat}'' group is more influential than the ``{\color{red}republican}'' one, in both the pre and post change-point network structures,
 whereas in terms of clustering coefficient values the ``{\color{red}republican}'' group is ahead of the ``{\color{blue}democrat}'' one and the gap increases from pre to post change-point regime, also reflected in the finding that the number of edges within the ``{\color{red}republican}'' group mostly remains the same from pre to post regimes, whereas for the democrats it decreases. These results suggest that the Republicans form a tight cluster, whereas the Democrats not to
 the same extent.

%\begin{center}
%\captionof{table}{Positive and negative edges for network structures before and after the estimated change-point for BIC, stability selection with threshold=0.9 and with threshold=0.8}
%\label{edge_bic}
%\input{table-1a.tex}
%\end{center}

\begin{figure}[ht]
\centering
\begin{tabular}{cc}
\epsfig{file=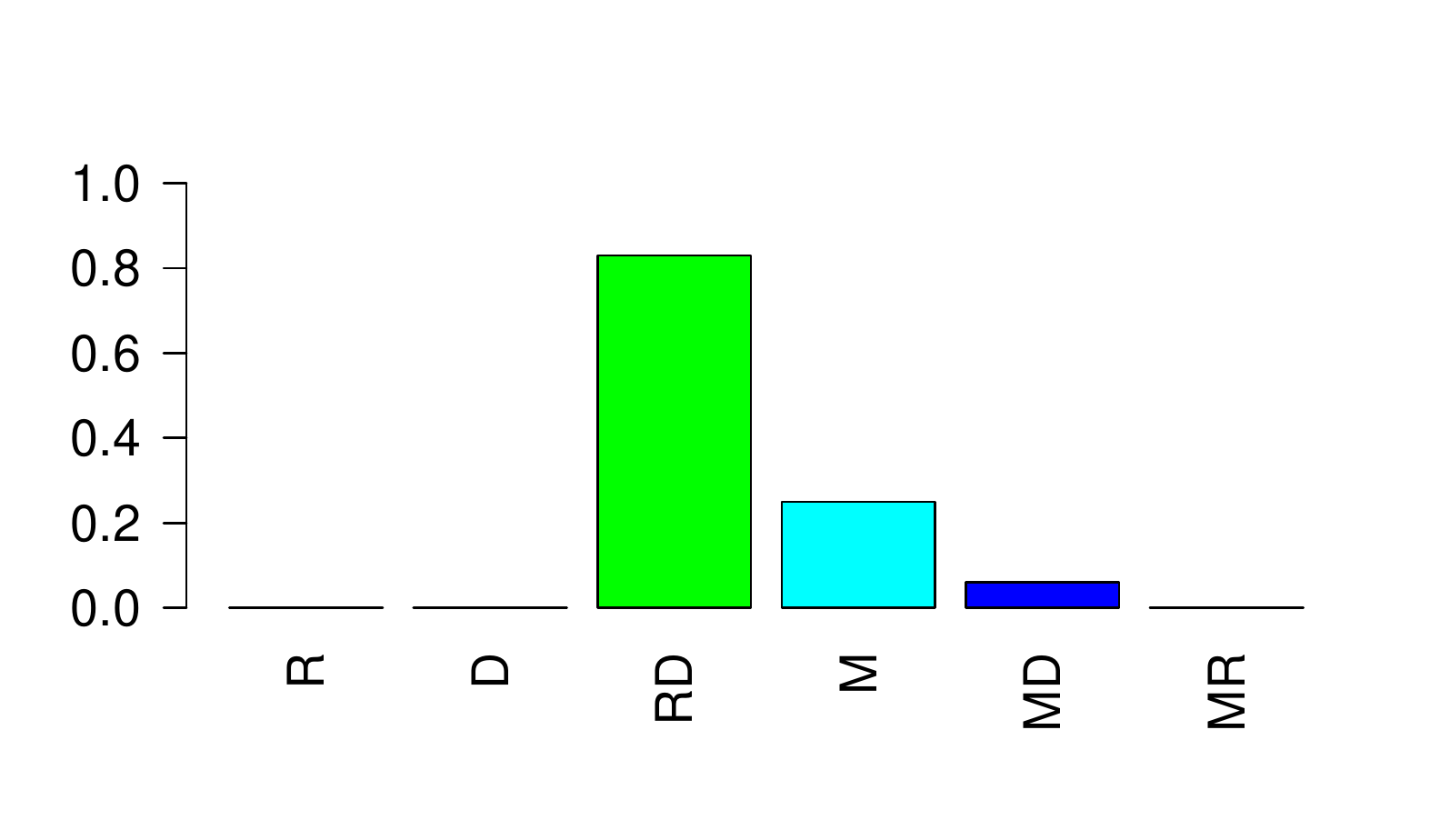,width=0.3\linewidth,clip=} & 
\epsfig{file=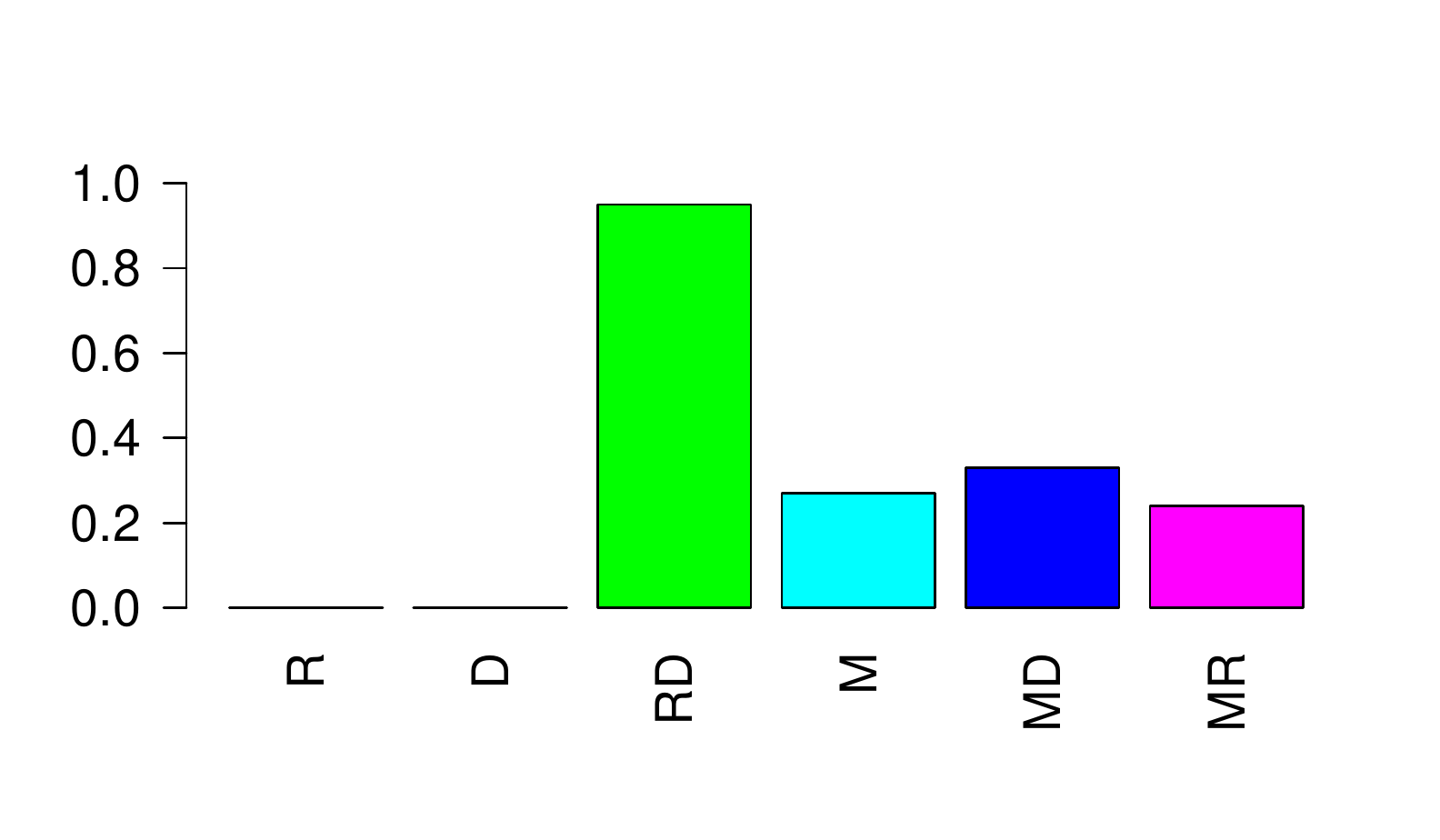,width=0.3\linewidth,clip=} \\
\epsfig{file=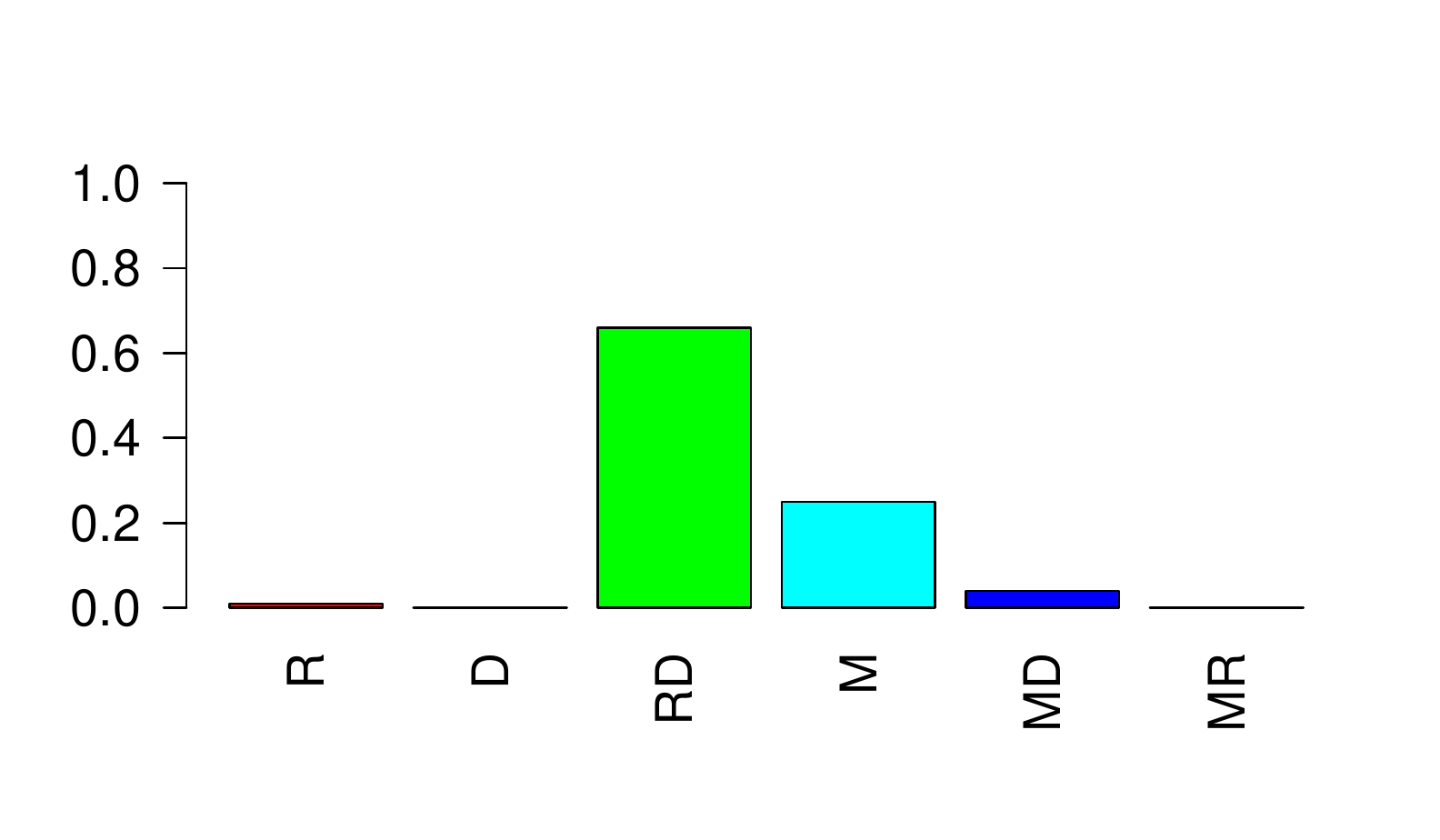,width=0.3\linewidth,clip=} &
\epsfig{file=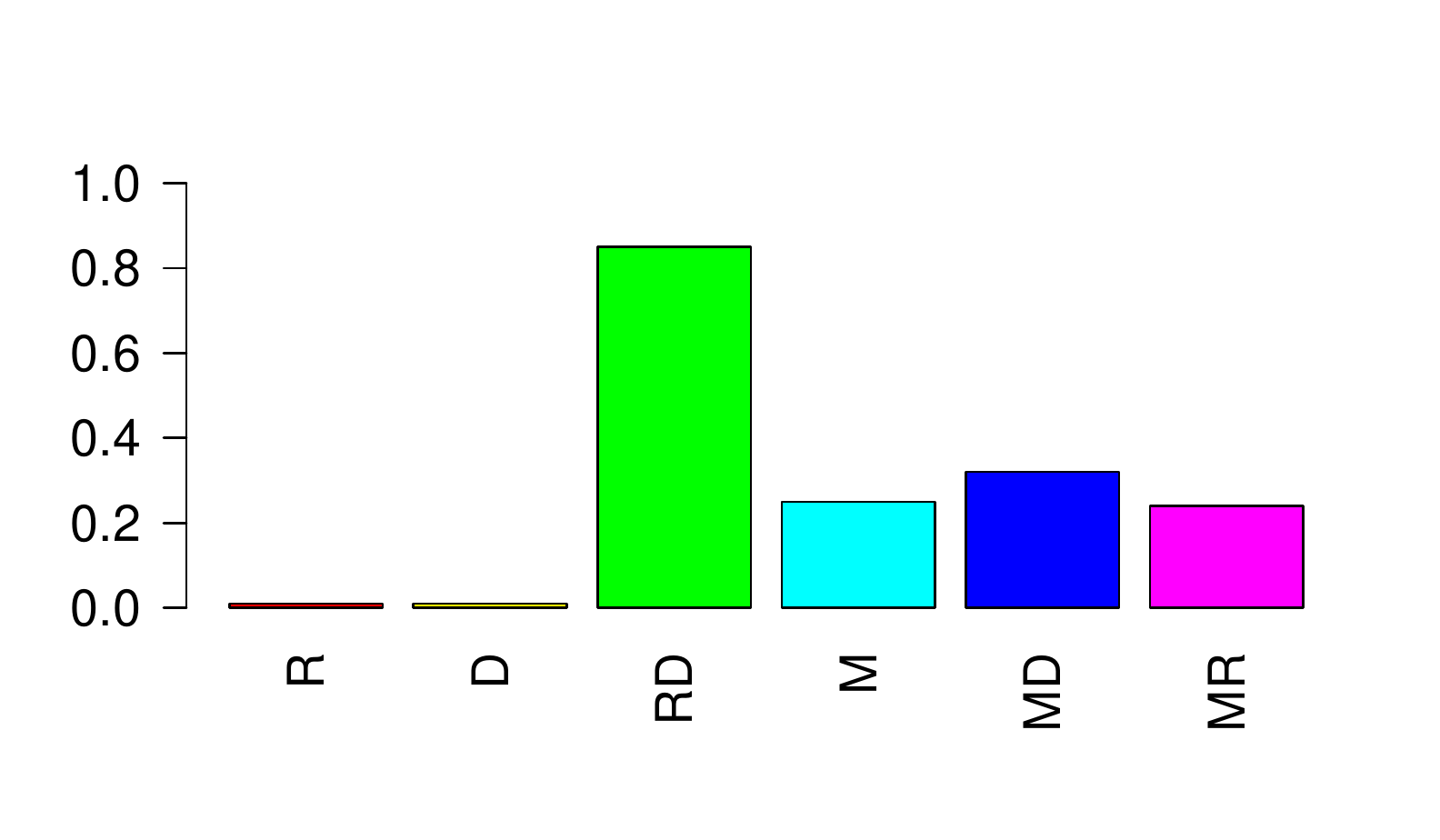,width=0.3\linewidth,clip=} \\
\end{tabular}
\caption{Proportion of negative edges for network structures before (left figure) and after (right figure) the estimated change-point for BIC and stability selection with threshold=0.8}
\label{fig3}
\end{figure}

\begin{figure}[ht]
\centering
\begin{tabular}{cc}
\epsfig{file=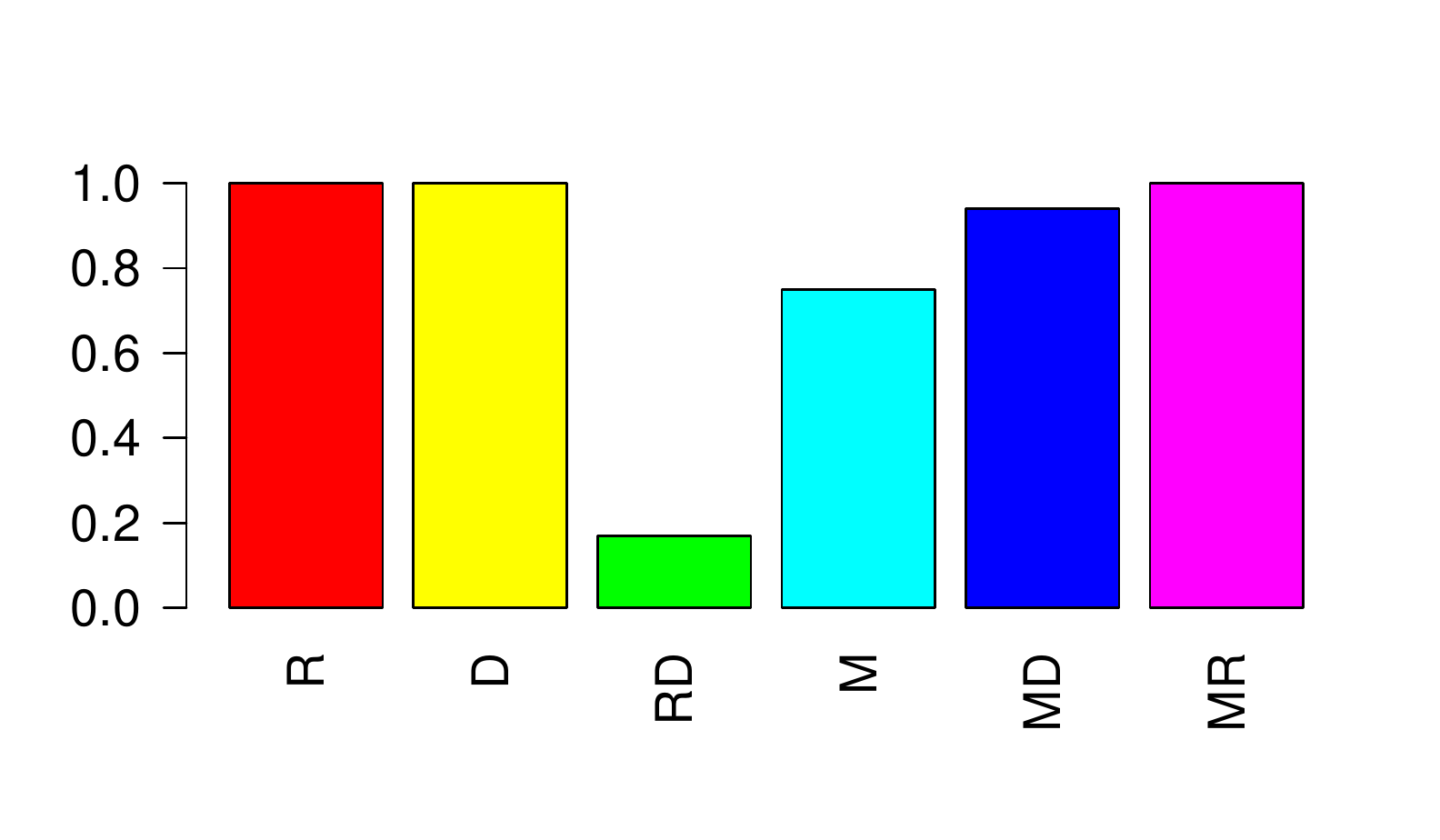,width=0.3\linewidth,clip=} & 
\epsfig{file=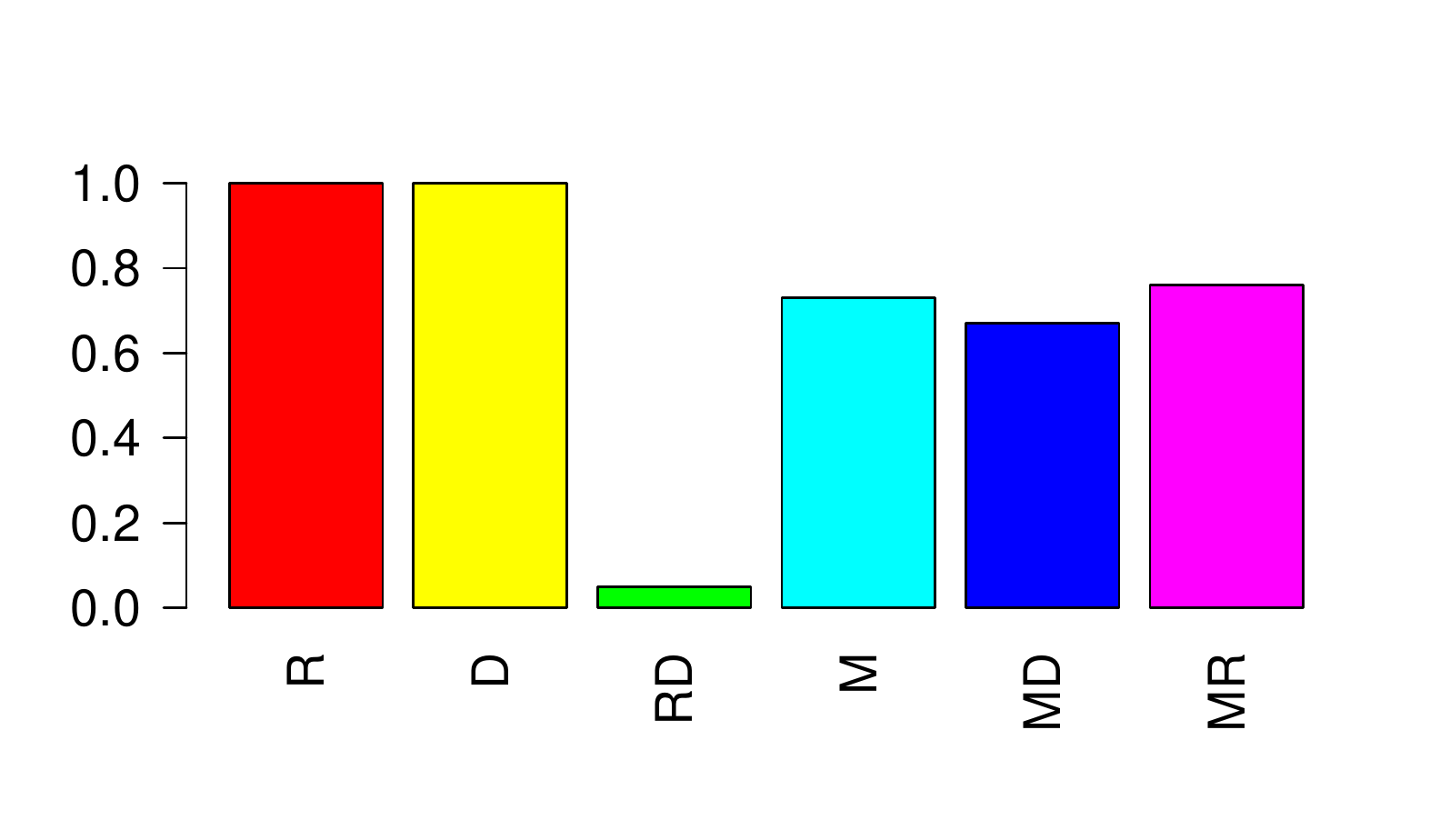,width=0.3\linewidth,clip=} \\
\epsfig{file=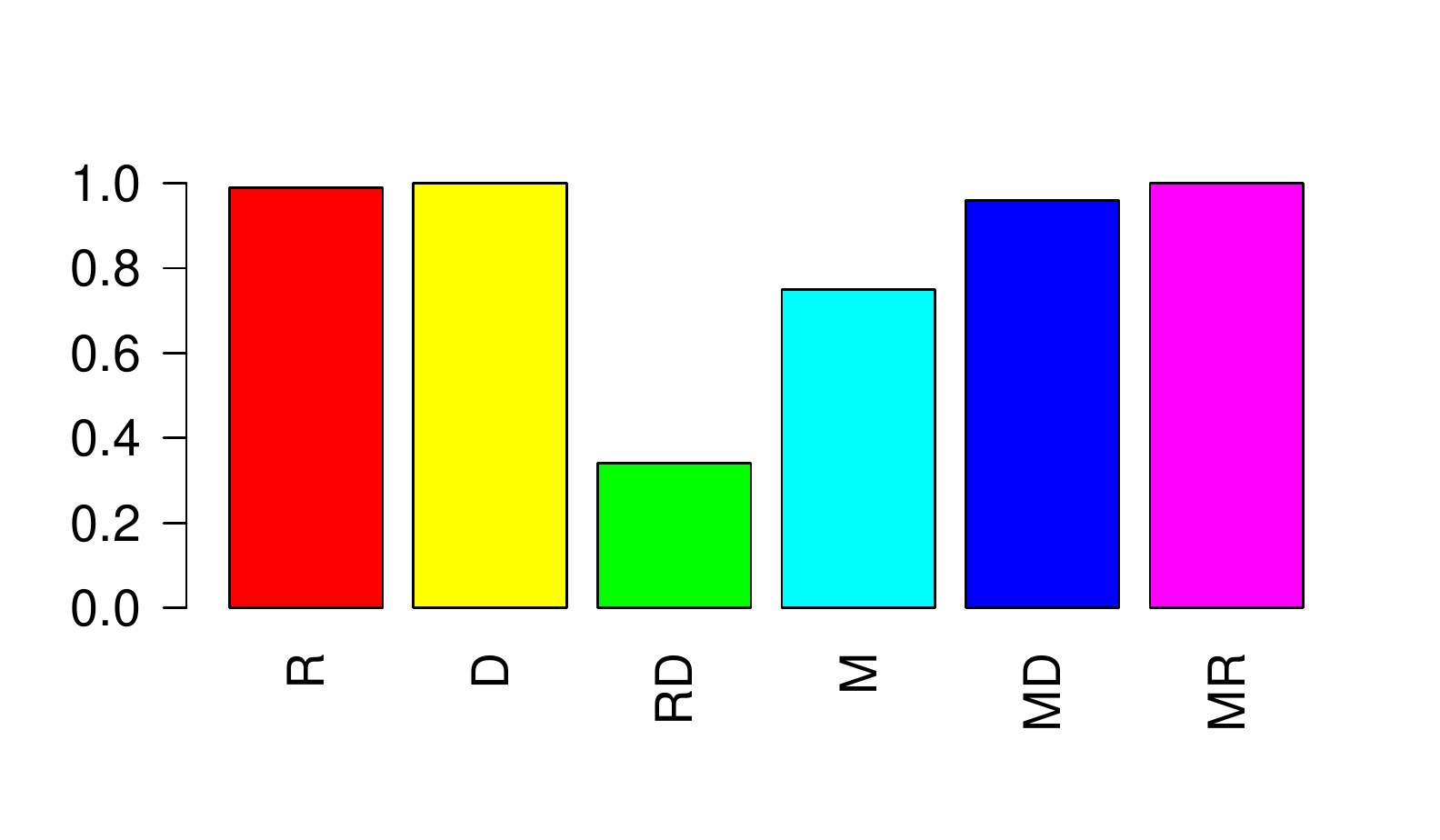,width=0.3\linewidth,clip=} &
\epsfig{file=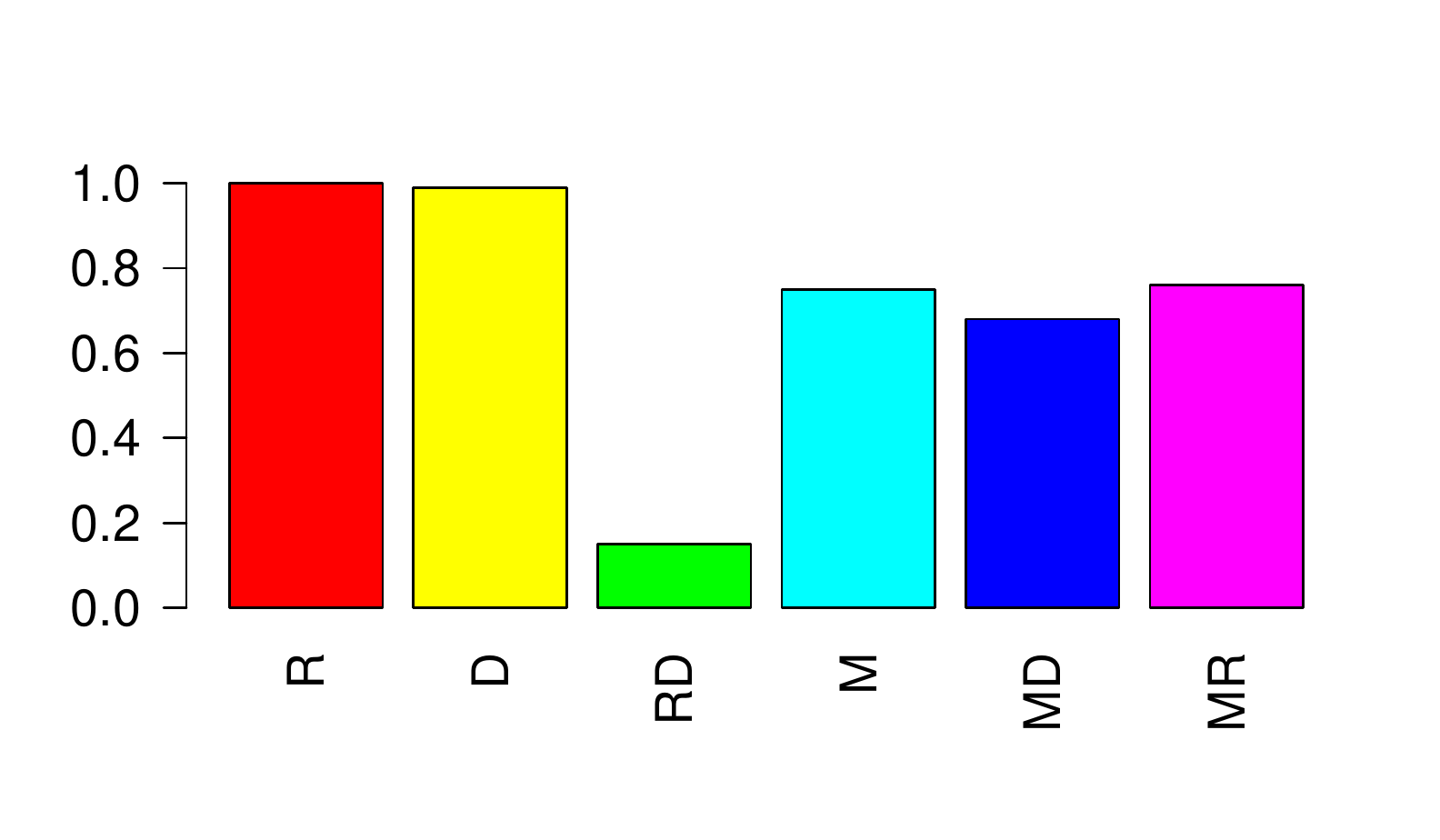,width=0.3\linewidth,clip=} \\
\end{tabular}
\caption{Proportion of positive edges for network structures before (left figure) and after (right figure) the estimated change-point for BIC and stability selection with threshold=0.8}
\label{fig4}
\end{figure}

\begin{center}
\captionof{table}{Different network statistic values for stability selection with threshold=0.9 and 0.8 respectively}
\label{net_stat1}

\begin{tabular}{|c|c|c|c|c|c|c|c|}
\hline
Methods & Network Statistic	&	\multicolumn{3}{c|}{Before}	&	\multicolumn{3}{c|}{After}\\
\hhline{~~------}
	& &	Rep	&	Dem	&	Mixed	&	Rep	&	Dem	&	Mixed\\\hline
Stable (0.9) & Centrality Score & 0.004 & 0.368 & 0.054 & 0.001 & 0.483 & 0.034 \\	\hline
& Clustering Coefficient & 0.346 & 0.311 & 0.339 & 0.334 & 0.251 & 0.391 \\	\hline
\hline
& & & & & & & \\\hline
Stable (0.8) & Centrality Score	& 0.004 & 0.378 & 0.055 & 0.001 & 0.481 & 0.078 \\	\hline
& Clustering Coefficient & 0.366 & 0.371 & 0.360 & 0.378 & 0.307 & 0.364 \\	\hline
\hline
\end{tabular}

\end{center}

\begin{figure}[ht]
\centering
\includegraphics[scale=0.3]{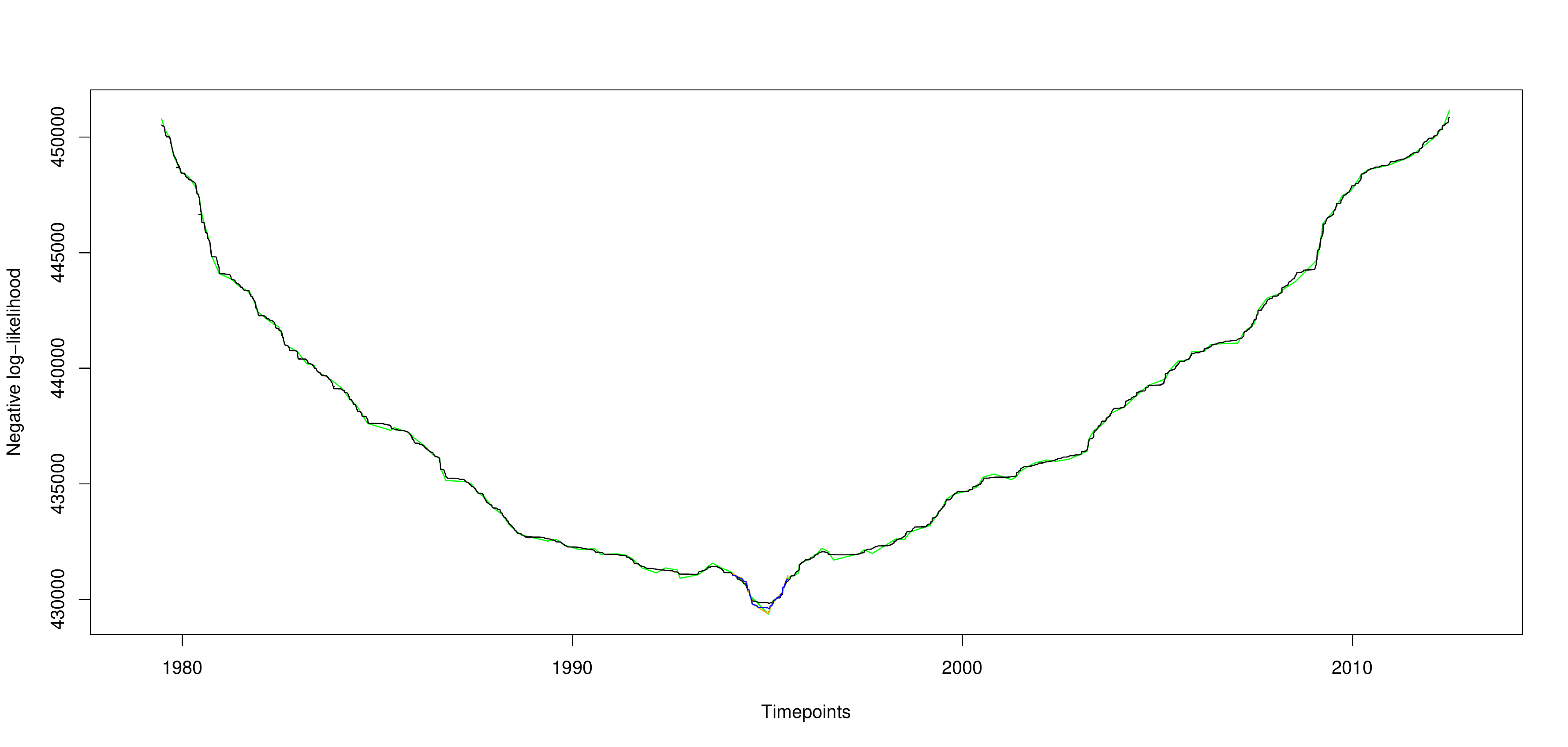}
\caption{Estimate of the change-point for the combined US senate data from 1979-2012}
\label{change-point}
\end{figure}

\begin{figure}[ht]


\begin{tabular}{cc}
\hspace{-70pt}\epsfig{file=heatmap_before,width=0.7\linewidth,clip=} & 
\hspace{-50pt}\epsfig{file=heatmap_after,width=0.7\linewidth,clip=}\\
\end{tabular}
\caption{Heatmap of the stable network structures before and after the estimated change-point}
\label{maps}
\end{figure}

%\begin{figure}[ht]
%\centering
%\includegraphics[scale=0.3]{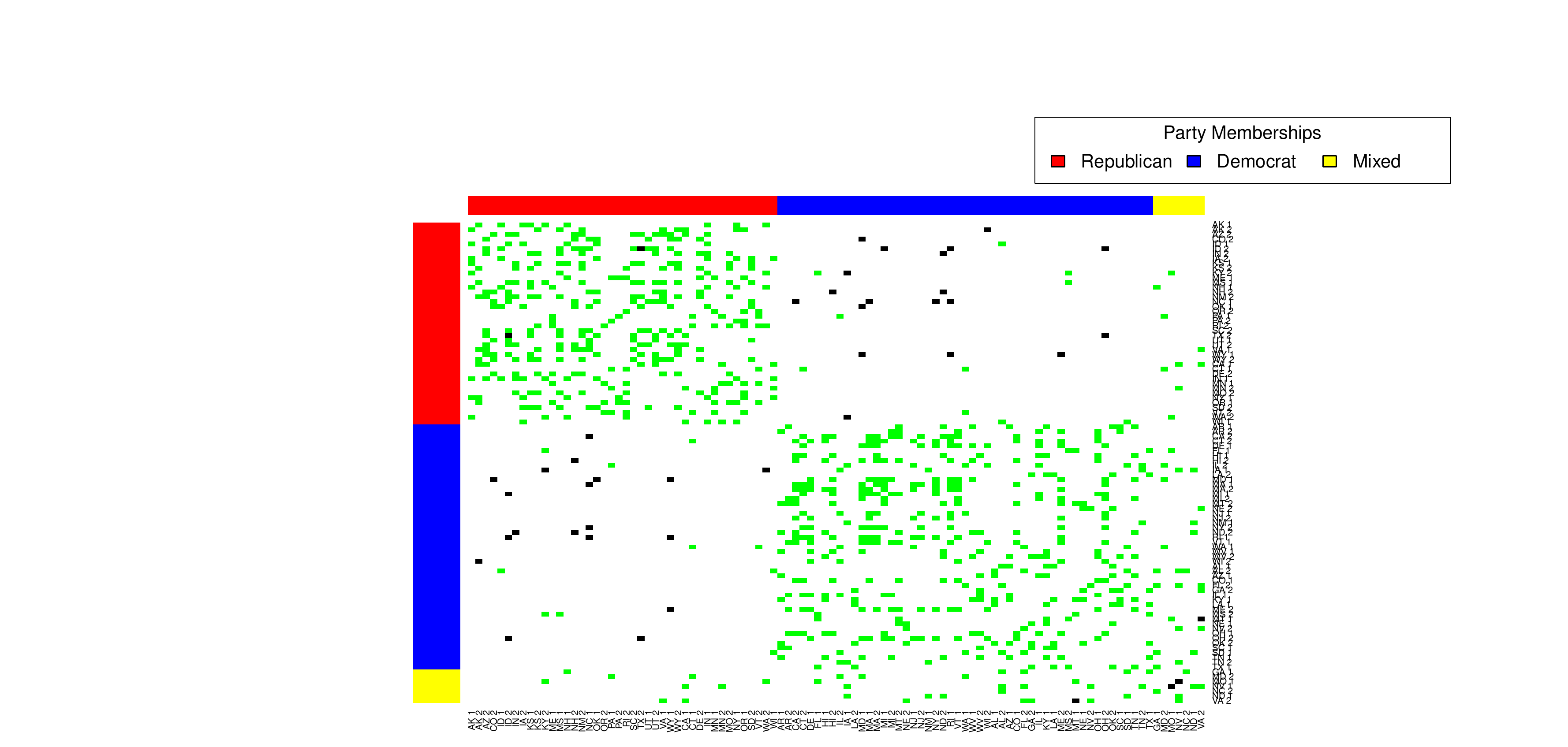}
%\caption{heatmap of the stable network structures before the estimated change-point}
%\label{map1}
%\end{figure}

%\begin{figure}[ht]
%\centering
%\includegraphics[scale=0.3]{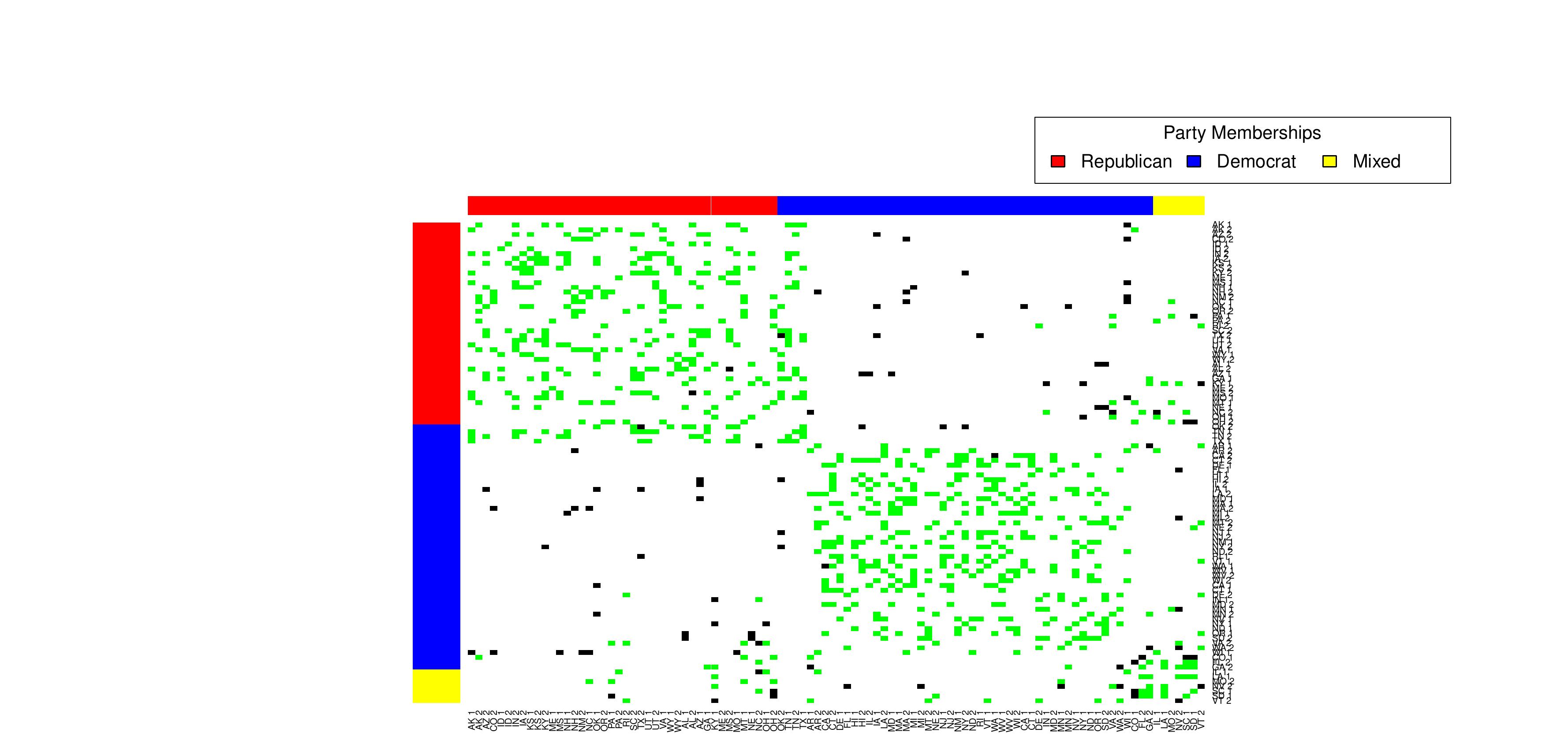}
%\caption{heatmap of the stable network structures after the estimated change-point}
%\label{map2}
%\end{figure}

\renewcommand{\thepage}{S\arabic{page}}  
\renewcommand{\thesection}{S\arabic{section}}   
\renewcommand{\thetable}{S\arabic{table}}   
\renewcommand{\thefigure}{S\arabic{figure}}
\renewcommand{\theequation}{S\arabic{equation}}
\renewcommand{\thetheoremA}{S\arabic{theoremA}}

\vspace{10mm}\textbf{\Large Supplementary Information}\\\\Although our main motivation is in discrete graphical models, the proposed methodology can be applied more broadly for model-based change-point estimation. With this in mind, we shall prove a more general result that can be useful with other high-dimensional change-point estimation problems. Theorem \ref{thm1} follows as a special case.
\section{High-dimensional model-based change-point detection}
Let $\{X^{(t)},\;1\leq t\leq T\}$ be a sequence of $\rset^p$-valued independent random variables. Let $\Theta\subseteq\rset^d$ be an open, non-empty convex parameter space equipped with the Euclidean inner product $\seq{\cdot,\cdot}$, and norm$\|\cdot\|_2$. We will also use the $\ell^1$-norm $\|\theta\|_1\eqdef\sum_{j=1}^d |\theta_j|$, and the $\ell^\infty$-norm $\|\theta\|_\infty \eqdef\max_{1\leq j\leq d}|\theta_j|$. We assume that there exists a change point $\tau_\star\in \{1,\ldots,T-1\}$, parameters $\theta_\star^{(1)},\theta_\star^{(2)}\in\Theta$, such that for $t=1,\ldots,\tau_\star$, $X^{(t)}\sim g^{(t)}_{\theta^{(1)}_\star}$, and for $t=\tau_\star +1,\ldots,T$, $X^{(t)}\sim g^{(t)}_{\theta^{(2)}_\star}$, where $g^{(t)}_{\theta^{(1)}_\star}$ and $g^{(t)}_{\theta^{(2)}_\star}$ are probability densities on $\rset^p$. The goal is to estimate $\tau_\star,\theta_\star^{(1)},\theta_\star^{(2)}$. This setting includes the Markov random field setting (our main motivation), where $g^{(t)}_{\theta^{(1)}_\star}$ and $g^{(t)}_{\theta^{(2)}_\star}$ does not depend $t$. It also includes regression models where the index $t$ in the distributions $g^{(t)}_{\theta^{(1)}_\star}$ and $g^{(t)}_{\theta^{(2)}_\star}$ accounts for the covariates of subject $t$.

For $t=1\ldots,T$, let $(\theta,x)\mapsto \phi_t(\theta,x)$ be jointly measurable functions on $\Theta\times \rset^p$, such that $\theta\mapsto \phi_t(\theta,x)$ is convex and continuously differentiable for all $x\in\rset^p$. We  define
\[\ell_T(\tau;\theta_1,\theta_2) \eqdef \frac{1}{T}\sum_{t=1}^\tau \phi_t(\theta_1,X^{(t)}) + \frac{1}{T}\sum_{t=\tau+1}^T \phi_t(\theta_2,X^{(t)}),\]
and we consider  the change-point estimator $\tau_\star$ given by
\begin{equation}\label{c_p_estimator_suppl}
\widehat{\tau}=\argmin_{\tau\in\mathcal{T}}\,\ell_T(\tau;\widehat{\mathbf{\theta}}_{1,\tau},\widehat{\mathbf{\theta}}_{2,\tau}),
\end{equation}
for a non-empty search domain $\mathcal{T}\subset\{1,\ldots,T\}$, where for each $\tau\in\mathcal{T}$, $\widehat{\mathbf{\theta}}_{1,\tau}$ and $\widehat{\mathbf{\theta}}_{2,\tau}$ are defined as
\[\widehat{\mathbf{\theta}}_{1,\tau}\eqdef \argmin_{\theta\in\Theta}\left[\frac{1}{T}\sum_{t=1}^\tau \phi_t(\theta,X^{(t)}) + \lambda_{1,\tau}\|\mathbf{\theta}\|_{1}\right],\]
 and 
\[\widehat{\mathbf{\theta}}_{2,\tau}\eqdef \argmin_{\theta\in\Theta}\left[\frac{1}{T}\sum_{t=\tau+1}^T \phi_t(\theta,X^{(t)}) + \lambda_{2,\tau}\|\mathbf{\theta}\|_{1}\right],\]
for some positive penalty parameters $\lambda_{1,\tau},\lambda_{2,\tau}$. Note that by allowing the use of user-defined learning functions $\phi_t$, our framework can be used to analyze maximum likelihood and maximum pseudo-likelihood change-point estimators.

For $\tau\in\{1,\ldots,T-1\}$, we set
\[
\mathcal{G}_\tau^1 \eqdef \frac{1}{T}\sum_{t=1}^{\tau}\nabla \phi_{t}(\theta_\star^{(1)},X^{(t)}),\;\;\mbox{ and }\;\;\;\mathcal{G}_\tau^2 \eqdef \frac{1}{T}\sum_{t=\tau+1}^{T}\nabla \phi_t(\theta_\star^{(2)},X^{(t)}),\]
where $\nabla\phi_t(\theta,x)$ denotes the partial derivative of $u\mapsto \phi_t(u,x)$ at $\theta$. Also for $\tau\in\{1,\ldots,T-1\}$, and for $\theta\in\Theta$, we define,
\begin{multline*}
\mathcal{L}_1(\tau,\theta) \eqdef \frac{1}{T}\sum_{t=1}^{\tau}\left[\phi_t(\theta,X^{(t)}) - \phi_t(\theta_\star^{(1)},X^{(t)}) -\seq{\nabla \phi_t(\theta_\star^{(1)},X^{(t)}),\theta-\theta_\star^{(1)}}\right],\\
\mbox{ and }\;\;\;\mathcal{L}_2(\tau,\theta) \eqdef \frac{1}{T}\sum_{t=\tau+1}^{T}\left[\phi_t(\theta,X^{(t)}) - \phi_t(\theta_\star^{(2)},X^{(t)}) -\seq{\nabla \phi_t(\theta_\star^{(2)},X^{(t)}),\theta-\theta_\star^{(2)}}\right].
\end{multline*}

For $j=1,2$, define $\mathcal{A}_j\eqdef \left\{1\leq k\leq d:\theta_{\star k}^{(j)}\neq 0\right\}$, $s_j=|\mathcal{A}_j|$ , and 
\begin{equation}\label{eq:C:proof}
 \mathbb{C}_{j}\eqdef \left\{\theta\in\Theta:\; \sum_{k\in \mathcal{A}_j^c} |\theta_{k}^{(j)}|\leq 3\sum_{k\in\mathcal{A}_j}|\theta_{k}^{(j)}|\right\}.\end{equation}

The curvature of the function $\mathcal{L}_j(\tau,\cdot)$ is not always best described with the usual quadratic function $\theta\mapsto\|\theta-\theta_\star^{(j)}\|_2^2$. We will need a more flexible framework, in order  to handle $\mathcal{L}_j(\tau,\cdot)$ in the case of discrete Markov random fields. Let $\r:\;[0,\infty)\to [0,\infty)$ be continuous function such that $x\mapsto \r(x)/x$ is strictly increasing and $\lim_{x\downarrow 0} \r(x)/x=0$. We call $\r$ a rate function, and for $a>0$, we define $\Psi_\r(a)\eqdef \inf\{x>0:\; \r(x)/x \geq a\}$ ($\inf\emptyset=+\infty$). For $\tau\in\{1,\ldots,T-1\}$,  $\lambda>0$, a rate function $\r$, $c>0$, and for $j=1,2$ we work with the event 
\begin{multline*}
\mathcal{E}^j_\tau\left(\lambda,\r,c\right)\eqdef \left\{\|G^j_\tau\|_{\infty}\leq\frac{\lambda}{2},\;\;\;\inf_{\theta\neq \theta_{\star}^{(j)},\;\theta-\theta_\star^{(j)}\in\mathbb{C}_j}\frac{\mathcal{L}_j(\tau,\theta)}{\r\left(\|\theta-\theta_\star^{(j)}\|_2\right)}\geq \frac{\tau}{T},\right.\;\\
\;\;\left. \sup_{\theta\neq \theta_{\star}^{(j)},\;\theta-\theta_\star^{(j)}\in\mathbb{C}_j}\frac{\mathcal{L}_j(\tau,\theta)}{\|\theta-\theta_\star^{(j)}\|_2^2}\leq \frac{\tau}{T}\frac{c}{2}\right\}.\end{multline*}

Define 
\[\kappa_0^{(t)} \eqdef \left\{\begin{array}{ll}\PE\left[\phi_t(\theta_\star^{(2)},X^{(t)}) -\phi_t(\theta_\star^{(1)},X^{(t)})\right] & \mbox{ if } t\leq \tau_\star \\ \PE\left[\phi_t(\theta_\star^{(1)},X^{(t)}) -\phi_t(\theta_\star^{(2)},X^{(t)})\right] & \mbox{ if } t>\tau_\star\end{array}\right.,\]
and
\[U^{(t)} \eqdef \left\{\begin{array}{ll}\phi_t(\theta_\star^{(2)},X^{(t)}) -\phi_t(\theta_\star^{(1)},X^{(t)})-\kappa_0^{(t)} & \mbox{ if } t\leq \tau_\star \\ \phi_t(\theta_\star^{(1)},X^{(t)}) -\phi_t(\theta_\star^{(2)},X^{(t)})-\kappa_0^{(t)} & \mbox{ if } t>\tau_\star\end{array}\right..\]

We make the following assumption.
\begin{assumptionA}\label{A1}
There exist finite constants $\sigma_{0t}>0$ such that 
\[\PE\left(e^{xU^{(t)}}\right)\leq e^{x^2\sigma_{0t}^2\|\theta_\star^{(2)}-\theta_\star^{(1)}\|_2^2/2},\;\;\mbox{ for all }\;\;x>0.\]
Furthermore, there exist $B_0>0$, $\bar\sigma_0^2>0$, $\bar\kappa_0>0$ such that for all integer $k\geq B_0$,
\begin{equation}\label{eq1:A1}
\min\left(\frac{1}{k}\sum_{t=\tau_\star-k+1}^{\tau_\star}\kappa_0^{(t)},\;\frac{1}{k}\sum_{t=\tau_\star+1}^{\tau_\star+ k}\kappa_0^{(t)}\right)\geq \bar\kappa_0\|\theta_\star^{(2)}-\theta_\star^{(1)}\|_2^2,
\end{equation}
and
\begin{equation}\label{eq2:A1}
\max\left(\frac{1}{k}\sum_{t=\tau_\star-k+1}^{\tau_\star}\sigma_{0t}^2,\frac{1}{k}\sum_{t=\tau_\star+1}^{\tau_\star+ k}\sigma_{0t}^2\right)\leq \bar\sigma_0^2.
\end{equation}
\end{assumptionA}

\begin{theoremA}\label{thm:appendix}
Assume A\ref{A1}, and $\theta_\star^{(1)}\neq \theta_\star^{(2)}$. Suppose that $\hat\tau$ is defined over a search domain $\mathcal{T}\ni\tau_\star$, and with penalty $\lambda_{j,\tau}>0$ (for $j=1,2$). For $j=1,2$, take a rate function $\r_j$, constant $c_{j}>0$, and define $\mathcal{E} \eqdef \cap_{\tau\in\mathcal{T}} \mathcal{E}^1_\tau\left(\lambda_{1,\tau},\r_{1},c_{1}\right)\cap \mathcal{E}^2_\tau\left(\lambda_{2,\tau},\r_{2},c_{2}\right)$. Set
\begin{multline*}
\delta(\tau)\eqdef \Psi_{\r_1}\left(6\left(\frac{T}{\tau}\right)s_1^{1/2}\lambda_{1,\tau}\right)\left[2s_1^{1/2}T\lambda_{1,\tau} +\tau \Psi_{\r_1}\left(6\left(\frac{T}{\tau}\right)s_1^{1/2}\lambda_{1,\tau}\right)\right] \\
+\Psi_{\r_2}\left(6\left(\frac{T}{T-\tau}\right)s_2^{1/2}\lambda_{2,\tau}\right)\left[2s_2^{1/2}T\lambda_{2,\tau} +(T-\tau) \Psi_{\r_2}\left(6\left(\frac{T}{T-\tau}\right)s_2^{1/2}\lambda_{2,\tau}\right)\right],\end{multline*}
$\delta \eqdef \sup_{\tau\in\mathcal{T}}\delta(\tau)$, and $B \eqdef \max\left(B_0,\frac{4\delta}{\bar\kappa_0\|\theta_\star^{(2)}-\theta_\star^{(1)}\|_2^2}\right)$, with $B_0$ as in A\ref{A1}. Then 
\begin{equation}\PP\left(\left|\hat\tau-\tau_\star\right|>B\right)\leq 2\PP(\mathcal{E}^c)  + \frac{4\exp\left(-\frac{\bar\kappa_0^2 \delta}{2\bar\sigma_0^2}\right)}{1 -\exp\left(-\frac{\bar\kappa_0^2\|\theta_\star^{(2)} -\theta_\star^{(1)}\|_2^2}{8\bar\sigma_0^2}\right)}. \end{equation}
\end{theoremA}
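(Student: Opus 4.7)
The plan is to split the argument into three stages, the first two deterministic on the event $\mathcal{E}$, and the last probabilistic.

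\textbf{Stage 1 (Lasso-type error bounds on $\mathcal{E}$).} Fix $\tau\in\mathcal{T}$ and write $\Delta_j\eqdef\hat\theta_{j,\tau}-\theta_\star^{(j)}$, $j=1,2$. I would start from the optimality of $\hat\theta_{1,\tau}$ to obtain the basic inequality
\[
\mathcal{L}_1(\tau,\hat\theta_{1,\tau}) + \seq{G^1_\tau,\Delta_1} + \lambda_{1,\tau}\|\hat\theta_{1,\tau}\|_1 \leq \lambda_{1,\tau}\|\theta_\star^{(1)}\|_1,
\]
then combine $\|G^1_\tau\|_\infty\leq\lambda_{1,\tau}/2$ (from $\mathcal{E}^1_\tau$) with the standard decomposition $\|\theta_\star^{(1)}\|_1-\|\hat\theta_{1,\tau}\|_1 \leq \|\Delta_{1,\mathcal{A}_1}\|_1 - \|\Delta_{1,\mathcal{A}_1^c}\|_1$ to conclude both that $\Delta_1\in\mathbb{C}_1$ and that $\mathcal{L}_1(\tau,\hat\theta_{1,\tau})\lesssim \lambda_{1,\tau}\sqrt{s_1}\|\Delta_1\|_2$. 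The lower-curvature leg of $\mathcal{E}^1_\tau$ then inverts to $\|\Delta_1\|_2\leq \Psi_{\r_1}\bigl(6(T/\tau)s_1^{1/2}\lambda_{1,\tau}\bigr)$, and an analogous bound holds for $\Delta_2$. Feeding these into the upper quadratic bound of $\mathcal{E}^j_\tau$ and using $|\seq{G^j_\tau,\Delta_j}|\leq 2\sqrt{s_j}\lambda_{j,\tau}\|\Delta_j\|_2$ yields
\[
T\bigl|\ell_T(\tau;\hat\theta_{1,\tau},\hat\theta_{2,\tau}) - \ell_T(\tau;\theta_\star^{(1)},\theta_\star^{(2)})\bigr| \leq \delta(\tau) \leq \delta \qquad \text{on }\mathcal{E}.
\]

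\textbf{Stage 2 (reduction to a deterministic drift).} Since $\tau_\star\in\mathcal{T}$, the defining property $\ell_T(\hat\tau;\hat\theta_{1,\hat\tau},\hat\theta_{2,\hat\tau})\leq \ell_T(\tau_\star;\hat\theta_{1,\tau_\star},\hat\theta_{2,\tau_\star})$ combined with Stage 1 at both $\hat\tau$ and $\tau_\star$ gives, on $\mathcal{E}$,
\[
T\bigl[\ell_T(\hat\tau;\theta_\star^{(1)},\theta_\star^{(2)}) - \ell_T(\tau_\star;\theta_\star^{(1)},\theta_\star^{(2)})\bigr] \leq 2\delta.
\]
A direct inspection of the two pieces of $\ell_T$ shows that the left-hand side equals $\sum_{t=\tau_\star+1}^{\hat\tau}\bigl[\phi_t(\theta_\star^{(1)},X^{(t)})-\phi_t(\theta_\star^{(2)},X^{(t)})\bigr]$ when $\hat\tau>\tau_\star$ and the analogous sum with swapped arguments when $\hat\tau<\tau_\star$. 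By the definitions of $\kappa_0^{(t)}$ and $U^{(t)}$, both cases rewrite as $\sum_{t}\kappa_0^{(t)} + \sum_{t}U^{(t)}$ over the relevant index range.

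\textbf{Stage 3 (maximal inequality).} For any integer $k\geq B_0$, A\ref{A1} delivers $\sum_{t=\tau_\star+1}^{\tau_\star+k}\kappa_0^{(t)}\geq k\bar\kappa_0\|\theta_\star^{(2)}-\theta_\star^{(1)}\|_2^2$, while independence and the sub-Gaussian MGF bound make $\sum_{t=\tau_\star+1}^{\tau_\star+k}U^{(t)}$ centered sub-Gaussian with variance parameter at most $k\bar\sigma_0^2\|\theta_\star^{(2)}-\theta_\star^{(1)}\|_2^2$; the same is true symmetrically to the left of $\tau_\star$. A standard Chernoff bound gives
\[
\PP\Bigl(\sum_{t=\tau_\star+1}^{\tau_\star+k} U^{(t)} \leq -\tfrac{k}{2}\bar\kappa_0\|\theta_\star^{(2)}-\theta_\star^{(1)}\|_2^2\Bigr) \leq \exp\!\Bigl(-\tfrac{k\,\bar\kappa_0^{2}\|\theta_\star^{(2)}-\theta_\star^{(1)}\|_2^2}{8\bar\sigma_0^2}\Bigr).
\]
The choice $B\geq 4\delta/(\bar\kappa_0\|\theta_\star^{(2)}-\theta_\star^{(1)}\|_2^2)$ ensures $k\bar\kappa_0\|\theta_\star^{(2)}-\theta_\star^{(1)}\|_2^2 - 2\delta \geq k\bar\kappa_0\|\theta_\star^{(2)}-\theta_\star^{(1)}\|_2^2/2$ for every $k>B$, so on $\mathcal{E}\cap\{\hat\tau = \tau_\star+k\}$ the centered partial sum must fall below that threshold. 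A union bound over $k>B$ on each side of $\tau_\star$ (factor $2$) followed by summation of the resulting geometric series in $k$, plus the $2\PP(\mathcal{E}^c)$ contribution from Stages 1--2, yields the stated inequality.

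\textbf{Main obstacle.} The delicate step is Stage 3: we must dominate a supremum over $k$ of centered, increasing partial sums by a purely linear-in-$k$ deterministic drift, uniformly in $T$ and $p$. A crude union bound works because the drift is linear in $k$ while the fluctuations grow like $\sqrt{k}$, producing a geometric tail with rate $\bar\kappa_0^{2}\|\theta_\star^{(2)}-\theta_\star^{(1)}\|_2^2/(8\bar\sigma_0^2)$; sharpening the rate so as to remove the spurious $\log(T)$ mentioned in Remark~1 of this Supplement would instead require an exponential maximal inequality for weighted partial sums of sub-Gaussians, which remains open to our knowledge.
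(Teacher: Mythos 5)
Your proposal is correct and follows essentially the same route as the paper's own proof: the Negahban-style basic inequality plus the curvature leg of $\mathcal{E}^j_\tau$ to get $\|\hat\theta_{j,\tau}-\theta_\star^{(j)}\|_2\leq\Psi_{\r_j}(\cdot)$ and hence the remainder bound $\delta(\tau)$ (the paper's Lemmas 1 and 2), then the decomposition of $\ell_T(\tau)-\ell_T(\tau_\star)$ into drift $\sum\kappa_0^{(t)}$ plus centered noise $\sum U^{(t)}$ plus an $O(\delta/T)$ remainder, and finally a union bound over $k>B$ with a sub-Gaussian Chernoff bound and a geometric series on each side of $\tau_\star$. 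The only cosmetic difference is that you state the one-sided (left-tail) deviation event directly, while the paper bounds the two-sided event $|\sum U^{(t)}|>\cdot$; both hinge on the same sub-Gaussian tail and yield the same constants.
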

\proof 
The starting point of the proof is the following variant of a result due to \cite{negh}.
\begin{Lemma}\label{techlemma 2:theta}
Fix $\tau\in\left\{1,2,\ldots,T-1\right\}$. On $\mathcal{E}^1_\tau\left(\lambda_{1,\tau},\r_{1},c_{1}\right)\cap \mathcal{E}^2_\tau\left(\lambda_{2,\tau},\r_{2},c_{2}\right)$, $\hat\theta_{j,\tau}-\theta_\star^{(j)}\in \mathbb{C}_j$, ($j=1,2$), where $\mathbb{C}_j$ is defined in (\ref{eq:C:proof}), and 
\begin{multline}
\label{result:lemma 2}
\|\hat{\mathbf{\theta}}_{1,\tau}-\mathbf{\theta}_\star^{(1)}\|_2\leq\Psi_{\r_1}\left(6\left(\frac{T}{\tau}\right)s_1^{1/2}\lambda_{1,\tau}\right),\\
\mbox{ and }\;\; \|\hat{\mathbf{\theta}}_{2,\tau}-\mathbf{\theta}_\star^{(2)}\|_2\leq\Psi_{\r_2}\left(6\left(\frac{T}{T-\tau}\right)s_2^{1/2}\lambda_{2,\tau}\right).
\end{multline}
\end{Lemma}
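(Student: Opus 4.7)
The plan is to follow the standard $\ell^1$-cone framework of \cite{negh}, adapted to accommodate the flexible curvature function $\r_j$ through its inverse map $\Psi_{\r_j}$. The argument splits naturally into two parts: first proving the cone membership $\hat\Delta_j \eqdef \hat\theta_{j,\tau}-\theta_\star^{(j)}\in\mathbb{C}_j$, and then upgrading this to the quantitative bound in~(\ref{result:lemma 2}). I will treat $j=1$ in detail; the case $j=2$ is symmetric, with $\tau$ replaced by $T-\tau$ and the obvious relabeling.

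For the cone membership, I would start from the optimality inequality comparing $\hat\theta_{1,\tau}$ to $\theta_\star^{(1)}$ in the penalized objective. Rewriting the difference of $\phi_t$ values in terms of $\mathcal{L}_1$ and $\mathcal{G}_\tau^1$ gives
\[\mathcal{L}_1(\tau,\hat\theta_{1,\tau}) \leq -\seq{\mathcal{G}_\tau^1,\hat\Delta_1} + \lambda_{1,\tau}\bigl(\|\theta_\star^{(1)}\|_1-\|\hat\theta_{1,\tau}\|_1\bigr).\]
On $\mathcal{E}^1_\tau$ the first term is controlled by $(\lambda_{1,\tau}/2)\|\hat\Delta_1\|_1$ via H\"older, and the usual $\ell^1$-bookkeeping on $\mathcal{A}_1$ and $\mathcal{A}_1^c$ (using that $\theta_\star^{(1)}$ vanishes off $\mathcal{A}_1$) turns the right-hand side into $\frac{3\lambda_{1,\tau}}{2}\|(\hat\Delta_1)_{\mathcal{A}_1}\|_1 - \frac{\lambda_{1,\tau}}{2}\|(\hat\Delta_1)_{\mathcal{A}_1^c}\|_1$. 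Convexity of each $\phi_t(\cdot,x)$ makes $\mathcal{L}_1$ pointwise non-negative, which forces $\|(\hat\Delta_1)_{\mathcal{A}_1^c}\|_1\leq 3\|(\hat\Delta_1)_{\mathcal{A}_1}\|_1$, i.e.\ $\hat\Delta_1\in\mathbb{C}_1$.

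For the rate, I would then activate the RSC lower bound from $\mathcal{E}^1_\tau$, which is legitimate precisely because $\hat\Delta_1$ now lies in the cone, giving $\mathcal{L}_1(\tau,\hat\theta_{1,\tau})\geq (\tau/T)\r_1(\|\hat\Delta_1\|_2)$. Chaining this with the upper bound above together with Cauchy--Schwarz $\|(\hat\Delta_1)_{\mathcal{A}_1}\|_1\leq s_1^{1/2}\|\hat\Delta_1\|_2$ produces $\r_1(\|\hat\Delta_1\|_2)/\|\hat\Delta_1\|_2 \leq 6(T/\tau)s_1^{1/2}\lambda_{1,\tau}$ (in fact with room to spare). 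Since $x\mapsto\r_1(x)/x$ is strictly increasing and tends to $0$ at $0$, the definition of $\Psi_{\r_1}$ inverts this inequality to the stated bound on $\|\hat\Delta_1\|_2$. The one mildly delicate step is this inversion, since $\r_1$ is not assumed quadratic, but the hypotheses on rate functions were set up precisely so that $\Psi_{\r_1}$ behaves as a genuine inverse of $x\mapsto \r_1(x)/x$. Note finally that the upper-bound clause of $\mathcal{E}^j_\tau$ involving $c_j/2$ plays no role in this lemma; it is invoked later in the proof of Theorem~\ref{thm:appendix} when $\ell_T(\hat\tau;\hat\theta_{1,\hat\tau},\hat\theta_{2,\hat\tau})$ is compared with $\ell_T(\tau_\star;\theta_\star^{(1)},\theta_\star^{(2)})$.
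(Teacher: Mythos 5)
Your argument is correct and matches the paper's proof in all essentials: the same basic optimality inequality $\U(\hat\theta_{1,\tau})\leq 0$, the same $\ell^1$ bookkeeping over $\mathcal{A}_1$ and $\mathcal{A}_1^c$ for cone membership, and the same use of the RSC clause of $\mathcal{E}^1_\tau$ followed by inversion through $\Psi_{\r_1}$. The only cosmetic difference is that you apply the restricted strong convexity directly at $\hat\theta_{1,\tau}$ (legitimate here because the event imposes it uniformly over the cone), whereas the paper shows $\U(\theta)>0$ for every cone element with $\|\theta-\theta_\star^{(1)}\|_2>\Psi_{\r_1}\left(6(T/\tau)s_1^{1/2}\lambda_{1,\tau}\right)$ and then invokes $\U(\hat\theta_{1,\tau})\leq 0$; both routes give the stated bound.
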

\proof 
We prove the first inequality. The second follows similarly. We set
\[\U(\theta) \eqdef \frac{1}{T}\sum_{t=1}^\tau \phi_t(\theta,X^{(t)}) + \lambda_{1,\tau}\|\theta\|_{1} -\left( \frac{1}{T}\sum_{t=1}^\tau \phi_t(\theta_\star^{(1)},X^{(t)}) + \lambda_{1,\tau}\|\theta_\star^{(1)}\|_{1}\right).\]
Since $\hat\theta_{1,\tau} =  \argmin_{\theta\in\Theta}\left[\frac{1}{T}\sum_{t=1}^\tau \phi_t(\theta,X^{(t)}) + \lambda_{1,\tau}\|\theta\|_{1}\right]$, and using the convexity of the functions $\phi_t$  we have
\[ 0\geq \U(\hat\theta_{1,\tau}) \geq \seq{G_\tau^1,\hat\theta_{1,\tau}-\theta_\star^{(1)}} +\lambda_{1,\tau}\left(\|\hat\theta_{1,\tau}\|_{1} - \|\theta_\star^{(1)}\|_{1}\right).\]
On $\mathcal{E}^1_\tau\left(\lambda_{1,\tau},\r_{1},c_{1}\right)$, $\|G_\tau^1\|_\infty\leq \lambda_{1,\tau}/2$. Using this and some easy algebra as in \cite{negh}, shows that $\hat\theta_{1,\tau}-\theta_\star^{(1)}\in \mathbb{C}_1$.
Set $b=\Psi_{\r_1}\left(6\left(\frac{T}{\tau}\right)s_1^{1/2}\lambda_{1,\tau}\right)$. We will show that for all $\theta\in\rset^d$ such that $\theta-\theta_\star^{(1)}\in\mathbb{C}_1$, and $\|\theta-\theta_\star^{(1)}\|_2>b$, we have $\U(\theta)>0$. Since $\U(\hat\theta_{1,\tau})\leq 0$, and $\hat \theta_{1,\tau}-\theta_\star^{(1)}\in\mathbb{C}_1$, the claim that $\|\theta-\theta_\star^{(1)}\|_2\leq b$ follows. On the event $\mathcal{E}^1_\tau\left(\lambda_{1,\tau},\r_{1},c_{1}\right)$, and for $\theta-\theta_{\star}^{(1)}\in\mathbb{C}_1$, we have
\begin{eqnarray*}
\U(\theta) &=& \seq{G_\tau^1,\theta-\theta_\star^{(1)}} +\mathcal{L}_1(\tau,\theta)  + \lambda_{1,\tau}\left(\|\theta\|_1 - \|\theta_\star^{(1)}\|_{1}\right)\\
& \geq & \frac{\tau}{T}\r_1(\|\theta-\theta_\star^{(1)}\|_2) -\frac{3\lambda_{1,\tau}}{2}\|\theta-\theta_\star^{(1)}\|_1\\
&\geq & \frac{\tau}{T} \left[\r_1(\|\theta-\theta_\star^{(1)}\|_2) - 6\left(\frac{T}{\tau}\right)s_1^{1/2}\lambda_{1,\tau}\|\theta-\theta_\star^{(1)}\|_2\right].\end{eqnarray*}
Using the definition of $\Psi_{\r_1}$, we then see that $\U(\theta)>0$ for $\|\theta-\theta_\star^{(1)}\|_2> b$. This ends the proof.
\qed

\medskip
The next result follows easily.
\begin{Lemma}\label{bound:remainder term}
Fix $\tau\in\left\{1,2,\ldots,T-1\right\}$. On $\mathcal{E}^1_\tau\left(\lambda_{1,\tau},\r_{1},c_{1}\right)\cap \mathcal{E}^2_\tau\left(\lambda_{2,\tau},\r_{2},c_{2}\right)$,
\[\left|\ell_T(\tau,\hat\theta_{1,\tau},\hat\theta_{2,\tau}) - \ell_T(\tau,\theta_\star^{(1)},\theta_\star^{(2)})\right| \leq  \frac{\delta(\tau)}{T},\]
where 
\begin{multline*}
\delta(\tau)\eqdef \Psi_{\r_1}\left(6\left(\frac{T}{\tau}\right)s_1^{1/2}\lambda_{1,\tau}\right)\left[2s_1^{1/2}T\lambda_{1,\tau} +\frac{\tau c_1}{2} \Psi_{\r_1}\left(6\left(\frac{T}{\tau}\right)s_1^{1/2}\lambda_{1,\tau}\right)\right] \\
+\Psi_{\r_2}\left(6\left(\frac{T}{T-\tau}\right)s_2^{1/2}\lambda_{2,\tau}\right)\left[2s_2^{1/2}T\lambda_{2,\tau} +\frac{(T-\tau)c_2}{2} \Psi_{\r_2}\left(6\left(\frac{T}{T-\tau}\right)s_2^{1/2}\lambda_{2,\tau}\right)\right].\end{multline*}
\end{Lemma}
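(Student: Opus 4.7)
The goal is a deterministic two-sided bound on the fluctuation of $\ell_T$ when the nuisance parameters $(\theta_\star^{(1)},\theta_\star^{(2)})$ are replaced by their penalized estimators, valid on the event $\mathcal{E}^1_\tau\cap\mathcal{E}^2_\tau$. The plan is to decompose this fluctuation into a ``remainder'' piece controlled by $\mathcal{L}_j(\tau,\cdot)$ and a ``linear'' piece controlled by $\mathcal{G}_\tau^j$, then bound each summand using the inequalities built into the events $\mathcal{E}^j_\tau$ together with the already-established $\ell_2$ estimation rate from Lemma \ref{techlemma 2:theta}.

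First I would use the definitions of $\ell_T$ and of $\mathcal{L}_j$ to write the exact identity
\[
\ell_T(\tau,\hat\theta_{1,\tau},\hat\theta_{2,\tau})-\ell_T(\tau,\theta_\star^{(1)},\theta_\star^{(2)})
=\mathcal{L}_1(\tau,\hat\theta_{1,\tau})+\bigl\langle \mathcal{G}_\tau^1,\hat\theta_{1,\tau}-\theta_\star^{(1)}\bigr\rangle
+\mathcal{L}_2(\tau,\hat\theta_{2,\tau})+\bigl\langle \mathcal{G}_\tau^2,\hat\theta_{2,\tau}-\theta_\star^{(2)}\bigr\rangle,
\]
simply because $\mathcal{L}_j$ is defined by subtracting the value at $\theta_\star^{(j)}$ and the first-order Taylor term based at $\theta_\star^{(j)}$.

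Next I would bound each of the four summands on $\mathcal{E}^1_\tau\cap\mathcal{E}^2_\tau$. By Lemma \ref{techlemma 2:theta}, on this event $\hat\theta_{j,\tau}-\theta_\star^{(j)}\in\mathbb{C}_j$ and $\|\hat\theta_{j,\tau}-\theta_\star^{(j)}\|_2\le \Psi_{\r_j}\bigl(6(T/\tau_j)s_j^{1/2}\lambda_{j,\tau}\bigr)$, where $\tau_1=\tau$ and $\tau_2=T-\tau$. For the quadratic terms, the upper-bound half of $\mathcal{E}^j_\tau$ gives
\[
\mathcal{L}_1(\tau,\hat\theta_{1,\tau})\le \frac{\tau c_1}{2T}\|\hat\theta_{1,\tau}-\theta_\star^{(1)}\|_2^2,\qquad
\mathcal{L}_2(\tau,\hat\theta_{2,\tau})\le \frac{(T-\tau)c_2}{2T}\|\hat\theta_{2,\tau}-\theta_\star^{(2)}\|_2^2,
\]
and each right-hand side is $\ge 0$ by convexity of $\phi_t$, which handles the two-sided nature once substituted into the absolute value. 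For the linear terms, H\"older gives $|\langle\mathcal{G}_\tau^j,\hat\theta_{j,\tau}-\theta_\star^{(j)}\rangle|\le \|\mathcal{G}_\tau^j\|_\infty\,\|\hat\theta_{j,\tau}-\theta_\star^{(j)}\|_1\le (\lambda_{j,\tau}/2)\,\|\hat\theta_{j,\tau}-\theta_\star^{(j)}\|_1$, and the cone membership $\hat\theta_{j,\tau}-\theta_\star^{(j)}\in\mathbb{C}_j$ yields the standard relation $\|\cdot\|_1\le 4s_j^{1/2}\|\cdot\|_2$, so
\[
\bigl|\langle\mathcal{G}_\tau^j,\hat\theta_{j,\tau}-\theta_\star^{(j)}\rangle\bigr|\le 2\,s_j^{1/2}\lambda_{j,\tau}\,\|\hat\theta_{j,\tau}-\theta_\star^{(j)}\|_2.
\]

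Finally I would substitute the $\ell_2$ estimation bound into each of the four upper bounds, multiply through by $T$, and collect terms: the $j=1$ contributions combine into $\Psi_{\r_1}\!\bigl(6(T/\tau)s_1^{1/2}\lambda_{1,\tau}\bigr)\bigl[2s_1^{1/2}T\lambda_{1,\tau}+(\tau c_1/2)\Psi_{\r_1}(\cdots)\bigr]$, and analogously for $j=2$ with $\tau$ replaced by $T-\tau$. Summing these two contributions is exactly $\delta(\tau)$, so dividing by $T$ gives the claimed inequality. There is no genuine obstacle here; the proof is a clean bookkeeping exercise once the identity in the first paragraph is written down, with the only care being to keep track of the factors $T$, $\tau$, $T-\tau$ and not to double-count the constant $c_j/2$ that is already absorbed into $\delta(\tau)$.
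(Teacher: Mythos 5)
Your proposal is correct and follows essentially the same route as the paper: the identical decomposition into $\mathcal{L}_j(\tau,\hat\theta_{j,\tau})+\seq{\mathcal{G}_\tau^j,\hat\theta_{j,\tau}-\theta_\star^{(j)}}$, the same use of the event bounds $\|\mathcal{G}_\tau^j\|_\infty\leq\lambda_{j,\tau}/2$ and the quadratic upper bound on $\mathcal{L}_j$, and the same cone inequality $\|\cdot\|_1\leq 4s_j^{1/2}\|\cdot\|_2$ combined with the $\ell_2$ rate from Lemma \ref{techlemma 2:theta}. Your explicit remark that $\mathcal{L}_j\geq 0$ by convexity (justifying the two-sided bound) is a small clarification the paper leaves implicit.
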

\begin{proof} 
%See Section \ref{proof:bound:remainder term}.
\begin{multline*}
\ell_T(\tau,\hat\theta_{1,\tau},\hat\theta_{2,\tau}) - \ell_T(\tau,\theta_\star^{(1)},\theta_\star^{(2)}) = \frac{1}{T}\sum_{t=1}^\tau \left[\phi_t(\hat\theta_{1,\tau},X^{(t)}) - \phi_t(\theta_\star^{(1)},X^{(t)})\right]\\
+ \frac{1}{T}\sum_{t=\tau+1}^T \left[\phi_t(\hat\theta_{2,\tau},X^{(t)}) - \phi_t(\theta_\star^{(2)},X^{(t)})\right].\end{multline*}
From the definition
\[\frac{1}{T}\sum_{t=1}^\tau \left[\phi_t(\hat\theta_{1,\tau},X^{(t)})-\phi_t(\theta_\star^{(1)},X^{(t)})\right] = \seq{G_\tau^1,\hat\theta_{1,\tau}-\theta_\star^{(1)}} +\mathcal{L}_1(\tau,\hat\theta_{1,\tau}).\]
On $\mathcal{E}^1_\tau\left(\lambda_{1,\tau},\r_{1},c_{1}\right)$, and using Lemma \ref{techlemma 2:theta}, we have
\[\left|\seq{G_\tau^1,\hat\theta_{1,\tau}-\theta_\star^{(1)}}\right|\leq \frac{\lambda_{1,\tau}}{2}\|\hat\theta_{1,\tau}-\theta_\star^{(1)}\|_1 \leq 2s_1^{1/2}\lambda_{1,\tau}\Psi_{\r_1}\left(6\left(\frac{T}{\tau}\right)s_1^{1/2}\lambda_{1,\tau}\right),\]
and
\[\mathcal{L}_1(\tau,\hat\theta_{1,\tau}) \leq \frac{\tau}{T}\frac{c_{1}}{2}\|\hat\theta_{1,\tau}-\theta_\star^{(1)}\|_2^2\leq \frac{\tau c_1}{2T}\Psi_{\r_1}\left(6\left(\frac{T}{\tau}\right)s_1^{1/2}\lambda_{1,\tau}\right)^2.\]
Hence
\begin{multline*}
\left|\frac{1}{T}\sum_{t=1}^\tau \left[\phi_t(\hat\theta_{1,\tau},X^{(t)})-\phi_t(\theta_\star^{(1)},X^{(t)})\right]\right| \\
\leq \frac{1}{T}\Psi_{\r_1}\left(6\left(\frac{T}{\tau}\right)s_1^{1/2}\lambda_{1,\tau}\right)\left[2s_1^{1/2}T\lambda_{1,\tau} +\frac{\tau c_1}{2} \Psi_{\r_1}\left(6\left(\frac{T}{\tau}\right)s_1^{1/2}\lambda_{1,\tau}\right)\right].
\end{multline*}
A similar bound holds for the second term, and the lemma follows easily.
\qed

\medskip
\end{proof}
We are now in position to prove Theorem \ref{thm:appendix}. We have
\[\PP\left(\left|\hat\tau-\tau_\star\right|>B\right) = \PP\left(\hat\tau>\tau_\star +B\right) + \PP\left(\hat\tau<\tau_\star-B\right).\]
We bound the first term $\PP\left(\hat\tau>\tau_\star +B\right)$. The second term follows similarly by working with the reversed sequence $X^{(T)},\ldots,X^{(1)}$.

For $\tau>\tau_\star$, we shall use $\ell_T\left(\tau\right)$ instead of $\ell_T\left(\tau;\hat{\mathbf{\theta}}_{1,\tau},\hat{\mathbf{\theta}}_{2,\tau}\right)$ for notational convenience, and we define $r_T\left(\tau\right)\eqdef\ell_T(\tau)-\ell_T\left(\tau,\theta_\star^{(1)},\theta_\star^{(2)}\right)$.  We have 
\begin{eqnarray*}\ell_T\left(\tau\right) &=&\ell_T\left(\tau,\theta_\star^{(1)},\theta_\star^{(2)}\right) + r_T(\tau),\\
&=&\left[\ell_T\left(\tau,\theta_\star^{(1)},\theta_\star^{(2)}\right) -\ell_T\left(\tau_\star,\theta_\star^{(1)},\theta_\star^{(2)}\right)\right] +\ell_T\left(\tau_\star,\theta_\star^{(1)},\theta_\star^{(2)}\right) + r_T(\tau).
\end{eqnarray*}
Hence
\begin{equation}\label{proof:thm1:eq1}
\ell_T(\tau)- \ell_T(\tau_\star)  = \left[\ell_T\left(\tau,\theta_\star^{(1)},\theta_\star^{(2)}\right) -\ell_T\left(\tau_\star,\theta_\star^{(1)},\theta_\star^{(2)}\right)\right] + r_T(\tau) -r_T(\tau_\star).\end{equation}
It is straightforward to check that for $\tau>\tau_\star$,
\[\ell_T\left(\tau,\theta_\star^{(1)},\theta_\star^{(2)}\right) -\ell_T\left(\tau_\star,\theta_\star^{(1)},\theta_\star^{(2)}\right)=\frac{1}{T}\sum_{t=\tau_\star+1}^\tau\left(\phi_t(\theta_\star^{(1)},X^{(t)})-\phi_t(\theta_\star^{(2)},X^{(t)})\right).\]
Therefore, and using the definition of $U^{(t)}$ and $\kappa_0^{(t)}$, (\ref{proof:thm1:eq1}) becomes 
\begin{equation}
\ell_T(\tau)- \ell_T(\tau_\star) = \frac{1}{T}\sum_{t=\tau_\star+1}^\tau \kappa_0^{(t)} + \frac{1}{T}\sum_{t=\tau_\star+1}^\tau U^{(t)} + r_T(\tau)-r_T(\tau_\star).
\end{equation}
We conclude from Lemma \ref{bound:remainder term} that on the event $\mathcal{E}$,
\begin{multline}\label{proof:thm1eq2}
\ell_T(\tau)- \ell_T(\tau_\star)=\frac{1}{T}\sum_{t=\tau_\star+1}^\tau \kappa_0^{(t)} + \frac{1}{T}\sum_{t=\tau_\star+1}^\tau U^{(t)} + \epsilon_T(\tau),\;\;\\
\mbox{ where }\; |\epsilon_T(\tau)|\leq \frac{2\sup_{\tau\mathcal{T}}|\delta(\tau)|}{T}=\frac{2\delta}{T}.\end{multline}
Therefore,
\begin{equation*}
\PP\left(\hat\tau>\tau+B\right) \leq \PP(\mathcal{E}^c) + \sum_{j\geq 0,\;\tau_\star+ \lceil B\rceil+j\in\mathcal{T} }\PP\left(\mathcal{E},\; \hat\tau =\tau_\star+ \lceil B\rceil+j\right).
\end{equation*}
Using (\ref{proof:thm1eq2}), we have
\begin{eqnarray*}
\PP\left(\mathcal{E},\; \hat\tau =\tau_\star+ \lceil B\rceil+j\right)& \leq & \PP\left(\mathcal{E},\; \ell_T(\tau_\star+ \lceil B\rceil+j)\leq \ell_T(\tau_\star)\right)\\
& \leq & \PP\left(\left|\sum_{t=\tau_\star+1}^{\tau_\star+\lceil B\rceil+j} U^{(t)}\right|> \sum_{t=\tau_\star+1}^{\tau_\star+\lceil B\rceil+j}\kappa_0^{(t)} -2\delta\right).
\end{eqnarray*}
However, since $B>B_0$, by Assumption A\ref{A1}, 
\[\sum_{t=\tau_\star+1}^{\tau_\star+\lceil B\rceil+j}\kappa_0^{(t)} -2\delta \geq \left(\lceil B\rceil+j\right)\bar\kappa_0 \|\theta_\star^{(2)} -\theta_\star^{(1)}\|_2^2 -2\delta \geq \frac{1}{2}\left(\lceil B\rceil+j\right)\bar\kappa_0 \|\theta_\star^{(2)} -\theta_\star^{(1)}\|_2^2.\]
The first part of A\ref{A1} implies that the random variables $Z^{(t)}$ are sub-Gaussian, and by standard exponential bounds for sub-Gaussian random variables, we then have
\begin{eqnarray*}
\PP\left[\mathcal{E},\; \ell_T(\tau_\star+ \lceil B\rceil+j)\leq \ell_T(\tau_\star)\right] &\leq & 2\exp\left(-\frac{\left(\lceil B\rceil+j\right)^2\bar\kappa_0^2 \|\theta_\star^{(2)} -\theta_\star^{(1)}\|_2^4}{8\|\theta_\star^{(2)}-\theta_\star^{(1)}\|_2^2\sum_{t=\tau_\star+1}^{\tau_\star+\lceil B\rceil+j}\sigma_{0t}^2}\right),\\
&\leq & 2\exp\left(-\frac{\left(\lceil B\rceil+j\right)\bar\kappa_0^2 \|\theta_\star^{(2)} -\theta_\star^{(1)}\|_2^2}{8\bar \sigma_0^2}\right),
\end{eqnarray*}
where the last inequality uses (\ref{eq2:A1}). We can conclude that 
\begin{eqnarray}
\label{main prob:thm 1}
\mathbb{P}\left[\hat{\tau} > \tau_\star+B\right]&\leq & \PP(\mathcal{E}^c)  +2\sum_{j\geq 0} \exp\left(-\frac{\left(\lceil B\rceil+j\right)\bar\kappa_0^2 \|\theta_\star^{(2)} -\theta_\star^{(1)}\|_2^2}{8\bar \sigma_0^2}\right)\nonumber\\
&\leq & \PP(\mathcal{E}^c)  + 2\frac{\exp\left(-\frac{B\bar\kappa_0^2 \|\theta_\star^{(2)} -\theta_\star^{(1)}\|_2^2}{8\bar\sigma_0^2}\right)}{1 -\exp\left(-\frac{\bar\kappa_0^2\|\theta_\star^{(2)} -\theta_\star^{(1)}\|_2^2}{8\bar\sigma_0^2}\right)},
\end{eqnarray}
as claimed.
\qed

\section{Proof of Theorem \ref{thm1}}
We will deduce Theorem \ref{thm1} from Theorem \ref{thm:appendix}. We take $\Theta$ as $\M_p$, the set of all $p\times p$ real symmetric matrices, equipped with the (modified) Frobenius inner product $\seqF{\theta,\vartheta}\eqdef\sum_{k\leq j}\theta_{jk}\vartheta_{jk}$, and the associated norm $\normF{\theta}\eqdef \sqrt{\seq{\theta,\theta}}$. With this inner product, we identify $\M_p$ with the Euclidean space $\rset^{d}$, with $d=p(p+1)/2$. This puts us in the setting of Theorem \ref{thm:appendix}. 

We will use the following notation. If $u\in\rset^q$, for some integer $q\geq 1$, and $A$ is an ordered subset of $\{1,\ldots,q\}$, we define $u_A\eqdef (u_j,\,j\in A)$, and $u_{-j}$ is a shortcut for $u_{\{1,\ldots,q\}\setminus\{j\}}$.  We define the function $B_{jk}(x,y)=B_0(x)$ if $j=k$, and $B_{jk}(x,y)=B(x,y)$ if $j\neq k$.

In the present case, the function $\phi_t$ is $\phi$ as given in (\ref{def:phi}), and does not depend on $t$. The following properties of the conditional distribution (\ref{full:cond}) will be used below. It is well known (and easy to prove using Fisher's identity)  that the function $\theta\mapsto \phi(\theta,x)$ is Lispchitz and 
\begin{equation}\label{lip:phi}
\left|\phi(\theta,x)-\phi(\vartheta,x)\right|\leq 2c_0\|\theta-\vartheta\|_1,\;\;\theta,\vartheta\in\M_p,\;x\in\Xset^p,\end{equation}
where $c_0$ is as in (\ref{def:c0}). From the expression (\ref{full:cond}) of the conditional densities, using straightforward algebra, it is easy to show that the negative log-pseudo-likelihood function $\phi(\theta,x)$ satisfies the following. For all $\theta,\Delta\in\M_p$, and $x\in\Xset^p$,
%\begin{multline*}
%\log f^{(j)}_{\theta+\Delta}(x_j\vert x) - \log f_\theta^{(j)}(x_j\vert x) -\sum_{k=1}^p\Delta_{jk} \frac{\partial}{\partial \theta_{jk}} \log f_\theta^{(j)}(x_j\vert x)\\
%=-\left(\log Z^{(j)}_{\theta+\Delta}(x)-\log Z^{(j)}_{\theta}(x)-\sum_{k=1}^p \Delta_{jk}\frac{\partial}{\partial \theta_{jk}}\log Z^{(j)}_{\theta}(x)\right).
%\end{multline*}
%This implies the following identity that we repeatedly use below:
\begin{multline}\label{exp:phi}
\phi(\theta+\Delta,x)-\phi(\theta,x)-\seqF{\nabla_\theta\phi(\theta,x),\Delta} \\
=\sum_{j=1}^p\left[\log Z^{(j)}_{\theta+\Delta}(x)-\log Z^{(j)}_{\theta}(x)-\sum_{k=1}^p \Delta_{jk}\frac{\partial}{\partial \theta_{jk}}\log Z^{(j)}_{\theta}(x)\right].\end{multline}
Furthermore by   Taylor expansion, we have
\begin{multline}\label{boundZ}
\log Z^{(j)}_{\theta+\Delta}(x)-\log Z^{(j)}_{\theta}(x)-\sum_{k=1}^p \Delta_{jk}\frac{\partial}{\partial \theta_{jk}}\log Z^{(j)}_{\theta}(x)\\
=\int_0^1(1-t)\textsf{Var}_{\theta+t\Delta}\left(\sum_{k=1}^p\Delta_{jk}B_{jk}(X_j,X_k)\vert X_{-j}\right) \rmd t\leq \frac{c_0^2}{2}\left(\sum_{k=1}^p|\Delta_{jk}|\right)^2.\end{multline}
By the self-concordant bound derived in \cite{atc}~Lemma A2, we have
\begin{multline}\label{lower:boundZ}
\log Z^{(j)}_{\theta+\Delta}(x)-\log Z^{(j)}_{\theta}(x)-\sum_{k=1}^p \Delta_{jk}\frac{\partial}{\partial \theta_{jk}}\log Z^{(j)}_{\theta}(x)\\
\geq \frac{1}{2+c_0\sum_{k=1}^p|\Delta_{jk}|}\textsf{Var}_{\theta}\left(\sum_{k=1}^p\Delta_{jk}B_{jk}(X_j,X_k)\vert X_{-j}\right).\end{multline}

\medskip

\proof[Proof of Theorem \ref{thm1}]
Let us first show that under assumption H\ref{H3} of Theorem \ref{thm1}, A\ref{A1} holds. Since in this case $\phi_t$ does not actually depend on $t$, we can take $B_0=1$ in A\ref{A1}, and (\ref{eq1:A1}) follows automatically from H\ref{H3} with $\bar\kappa_0=\kappa/\|\theta_\star^{(2)}-\theta_\star^{(1)}\|_2^2$. Also, (\ref{lip:phi}) implies that $|U^{(t)}|\leq 4 c_0\|\theta_\star^{(2)}-\theta_\star^{(1)}\|_1\leq 4c_0 s^{1/2} \|\theta_\star^{(2)}-\theta_\star^{(1)}\|_2$, where $s$ denotes the number of non-zero entries of $\theta^{(2)}-\theta_\star^{(1)}$. Hence for all $x>0$,
\[\PE\left(e^{x U^{(t)}}\right)\leq \exp\left(8x^2c_0^2s\|\theta_\star^{(2)}-\theta_\star^{(1)}\|_2^2\right).\]
This establishes the sub-Gaussian condition of A\ref{A1}, and (\ref{eq2:A1}) holds with $\bar\sigma_0^2 = 16c_0^2s$.

\medskip

For $j=1,2$, let $\lambda_{1,\tau}$, $\lambda_{2,\tau}$ as  in (\ref{lambda1:lambda2}). We will apply Theorem \ref{thm:appendix} with  $c_j = 64c_0s_j$, the rate function $\r_j(x) = \frac{\rho_j x^2}{2+4c_0s_j^{1/2}x}$, $x>0$, and with the event $\mathcal{E} = \bigcap_{\tau\in\Tau}\left[\mathcal{E}^1_\tau\left(\lambda_{1,\tau},\r_1,c_1\right)\cap\mathcal{E}^2_\tau\left(\lambda_{2,\tau},\r_2,c_2\right)\right]$, where the search domain $\mathcal{T}$ satisfies (\ref{cond H2:Tau_p}), (\ref{cond H2:Tau_m}), and (\ref{cond H2:Tau_tau}).  Notice that if $\r(x) = \rho x^2/(2 + bx)$, $\rho,b>0$, is a rate function, then for $a>0$, $\Psi_\r(a)\eqdef \inf\{x>0:\;r(x)\geq ax\} \leq 4a/\rho$, provided that $2ba\leq \rho$. Hence
\[\Psi_{\r_1}\left(6\left(\frac{T}{\tau}\right)s_1^{1/2}\lambda_{1,\tau}\right) \leq \frac{4}{\rho_1}6\left(\frac{T}{\tau}\right)s_1^{1/2}\lambda_{1,\tau}=24\times 32 c_2\frac{s_1^{1/2}}{\rho_1}\sqrt{\frac{\log(dT)}{\tau}},\]
provided that $\tau\geq (48\times 32)^2c_0^2 \left(\frac{s_1}{\rho_1}\right)^2\log(dT)$. Therefore, given that all $\tau\in\Tau$ satisfies (\ref{cond H2:Tau_tau}), with some simple algebra we see that there exists a universal constant $a$ that we can take as $a=(24\times 32\times 64)^2$, such that for all $\tau\in\Tau$,
\[\delta(\tau)\leq \delta = ac_0^2M\log(dT),\]
where 
\[M = \left[\frac{s_1}{\rho_1}\left(1+c_0\frac{s_1}{\rho_1}\right) + \frac{s_2}{\rho_2}\left(1+ c_0\frac{s_2}{\rho_2}\right)\right].\]
Therefore in Theorem \ref{thm:appendix}, we can take $B = \frac{4ac_0^2M\log(dT)}{\kappa}$, and by the conclusion of Theorem \ref{thm:appendix},
\[\PP\left[\left|\hat\tau-\tau_\star\right|>B\right]\leq 2\PP(\mathcal{E}^c) + \frac{4\exp\left(-\frac{\delta}{32c_0^2s}\left(\frac{\kappa}{\|\theta_\star^{(2)} - \theta_\star^{(1)}\|^2_2}\right)^2\right)}{1-\exp\left(-\frac{\kappa^2}{2^7c_0^2s\|\theta_\star^{(2)} - \theta_\star^{(1)}\|^2_2}\right)}.\]
We show in Lemma \ref{bound:lambda} and Lemma \ref{bound:E} below that $\PP(\mathcal{E}^c)\leq 8/d$, and this ends the proof.

\qed

\begin{Lemma} \label{bound:lambda}
Let $\lambda_{1,\tau},\lambda_{2,\tau}$ be as in equation ~\eqref{lambda1:lambda2}. Suppose that the search domain $\mathcal{T}$ is such that (\ref{cond H2:Tau_p})-(\ref{cond H2:Tau_m}) hold. Then 
\[ 
\mathbb{P}\left[\max_{\tau\in\Tau} \lambda^{-1}_{1,\tau}\big\|G^1_\tau\big\|_{\infty}> \frac{1}{2}\right]\leq\frac{2}{d},\;\;\mbox{ and }\;\; \mathbb{P}\left[\max_{\tau\in\Tau} \lambda^{-1}_{2,\tau}\big\|G^2_\tau\big\|_{\infty}> \frac{1}{2}\right]\leq\frac{2}{d},\]
 where $d=p(p+1)/2$.
\end{Lemma}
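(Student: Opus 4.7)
\textbf{Plan for Lemma \ref{bound:lambda}.} The plan is to prove the first inequality; the second follows symmetrically by reversing the roles of $1,2$ and of $\Tau_+,\Tau_-$. Fix $(j,k)$ with $j\leq k$, and let $Y_{t,jk}=[\nabla\phi(\theta_\star^{(1)},X^{(t)})]_{jk}$ so that $[G^1_\tau]_{jk}=T^{-1}\sum_{t=1}^\tau Y_{t,jk}$. First I will decompose
\[
[G^1_\tau]_{jk} \;=\; \frac{1}{T}\sum_{t=1}^{\tau} W_{t,jk} \;+\; \mathbf{1}_{\{\tau>\tau_\star\}}\,\frac{\tau-\tau_\star}{T}\,\mu_{jk},
\]
where $W_{t,jk}=Y_{t,jk}-\PE[Y_{t,jk}]$ is centered and $\mu_{jk}\eqdef \PE_{\theta_\star^{(2)}}\!\left[[\nabla\phi(\theta_\star^{(1)},X)]_{jk}\right]$ (the mean of $Y_{t,jk}$ vanishes for $t\leq\tau_\star$ since the pseudo-score is centered at the true parameter). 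I will control each piece so that each is at most $\lambda_{1,\tau}/4$, giving $\|G^1_\tau\|_\infty\leq \lambda_{1,\tau}/2$.

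For the centered piece, I will use that each $Y_{t,jk}$ is a bounded function of $X^{(t)}$: from (\ref{full:cond}) and the definition of $c_0$, $|Y_{t,jk}|\leq 2c_0$, and hence $|W_{t,jk}|\leq 4c_0$. The $W_{t,jk}$ are independent across $t$, so Hoeffding's inequality applied to $\sum_{t=1}^{\tau} W_{t,jk}$ with $u=T\lambda_{1,\tau}/4=8c_0\sqrt{\tau\log(dT)}$ yields
\[
\PP\!\left(\left|\frac{1}{T}\sum_{t=1}^{\tau} W_{t,jk}\right|>\frac{\lambda_{1,\tau}}{4}\right)\leq 2\exp\!\left(-\frac{u^2}{32\,\tau c_0^2}\right)\leq 2(dT)^{-2}.
\]
For the bias piece (relevant only when $\tau\in\Tau_+$), I will bound $|\mu_{jk}|$ by a Taylor expansion of the pseudo-score around $\theta_\star^{(2)}$: since $\PE_{\theta_\star^{(2)}}[\nabla\phi(\theta_\star^{(2)},X)]=0$,
\[
|\mu_{jk}| \;=\; \left|\PE_{\theta_\star^{(2)}}\!\int_0^1 \bigl[\nabla^{(2)}\phi(\theta_s,X)(\theta_\star^{(1)}-\theta_\star^{(2)})\bigr]_{jk}\,\rmd s\right|,\qquad \theta_s=s\theta_\star^{(1)}+(1-s)\theta_\star^{(2)}.
\]
The key observation, which will be the most delicate step, is that the pseudo-likelihood Hessian $\nabla^{(2)}\phi$ has a sparse block structure: its $(jk),(j'k')$ entry is a conditional covariance of $B_{jk}$ and $B_{j'k'}$ (one of four terms arising from the $\log f^{(j)}$ and $\log f^{(k)}$ contributions), which is bounded by $c_0^2/4$ and vanishes unless $\{j,k\}\cap\{j',k'\}\neq\emptyset$. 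Summing only over the two ``rows'' $j$ and $k$ picks up at most $2b$ in total $\ell^1$ mass from $\theta_\star^{(2)}-\theta_\star^{(1)}$, giving $|\mu_{jk}|\leq c_0^2 b/2$. The condition $c_0 b(\tau-\tau_\star)\leq 2\sqrt{\tau\log(dT)}$ from (\ref{cond H2:Tau_p}) then yields $(\tau-\tau_\star)|\mu_{jk}|/T\leq c_0\sqrt{\tau\log(dT)}/T\leq \lambda_{1,\tau}/4$.

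Finally I will take a union bound: there are $d=p(p+1)/2$ entries $(j,k)$ and at most $T$ elements in $\mathcal{T}$, so
\[
\PP\!\left(\max_{\tau\in\mathcal{T}}\lambda_{1,\tau}^{-1}\|G^1_\tau\|_\infty>\tfrac12\right)\leq dT\cdot 2(dT)^{-2}\leq \frac{2}{d}.
\]
The proof for $G^2_\tau$ is identical after swapping $(\theta_\star^{(1)},\tau_\star^+)\leftrightarrow(\theta_\star^{(2)},\tau_\star^-)$ and invoking (\ref{cond H2:Tau_m}) instead of (\ref{cond H2:Tau_p}). The main obstacle is the bias bound $|\mu_{jk}|\leq c_0^2 b/2$: one has to exploit both the block-sparsity of $\nabla^{(2)}\phi$ (so that summing over $(j',k')$ reduces to two rows of $\theta_\star^{(2)}-\theta_\star^{(1)}$) and the uniform-row bound defining $b$; any crude application of Cauchy--Schwarz or operator-norm type bounds would produce an extra factor of $p$ or $\sqrt{p}$ and make the argument incompatible with the stated form of the conditions on $\mathcal{T}$.
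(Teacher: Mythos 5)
Your proposal is correct and follows the same skeleton as the paper's proof: the same decomposition of $(G^1_\tau)_{jk}$ into a centered partial sum plus a deterministic bias term $(\tau-\tau_\star)_+\mu_{jk}/T$ active only for $\tau>\tau_\star$, the same Hoeffding bound with the same constants (the $4c_0$ envelope on the centered score entries, yielding $2(dT)^{-2}$ per entry), the same use of the first condition in (\ref{cond H2:Tau_p}) to absorb the bias into a fraction of $\lambda_{1,\tau}$, and the same union bound over $d$ entries and at most $T$ values of $\tau$. The one place where you genuinely diverge is the bound on $|\mu_{jk}|$: you Taylor-expand the pseudo-score around $\theta_\star^{(2)}$ (where it is centered) and exploit the block-sparsity of the pseudo-likelihood Hessian, whose $(jk),(j'k')$ entry is a sum of conditional covariances bounded by $c_0^2/4$ and vanishes unless the two index pairs share a node, so that the $\ell^1$ mass of $\theta_\star^{(2)}-\theta_\star^{(1)}$ picked up is at most $2b$. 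The paper instead bounds $|\mu_{jk}|$ directly by comparing the conditional expectations under $\theta_\star^{(1)}$ and $\theta_\star^{(2)}$ via an oscillation/total-variation comparison (its Lemma \ref{lemlip}), obtaining $|\mu_{jk}|\leq 2bc_0^2$. Both arguments produce a bound of order $c_0^2 b$, which is all that the condition on $\mathcal{T}$ requires (there is ample slack in the constants, so the factor-of-two bookkeeping in the symmetric parametrization of $\theta_{jk}$ is immaterial); your Hessian-sparsity route is arguably more transparent about \emph{why} only two rows of $\theta_\star^{(2)}-\theta_\star^{(1)}$ enter, while the paper's comparison lemma is reused elsewhere (e.g.\ in its Lemma \ref{bound:E}) and so amortizes better across the whole argument.
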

\begin{proof}
We carry the details for the first bound. The second is done similarly by working with the reversed sequence $X^{(T)},\ldots,X^{(1)}$. Fix $1\leq j\leq i\leq p$,  $t\in\Tau$, and define $V_{ij}^{(t)}\eqdef\frac{\partial}{\partial \theta_{ij}}\phi(\theta_\star^{(1)},X^{(t)})$. We calculate that 
\[
V_{ij}^{(t)}=\left\{\begin{array}{lc}
-B_0(X_i^{(t)}) +\PE_{\theta_\star^{(1)}}(B_0(X_i\vert X_{-i}^{(t)}) & \mbox{ if } i=j\\
-2B(X^{(t)}_i,X^{(t)}_j)+\PE_{\theta_\star^{(1)}}\left( B(X_i,X_j^{(t)})\vert X_{-i}^{(t)}\right)+\PE_{\theta_\star^{(1)}}\left( B(X_i,X_j^{(t)})\vert X_{-j}^{(t)}\right) & \mbox{ if }j<i.\end{array}\right.
\]
In the above display the notation $\PE_{\theta_\star^{(1)}}\left( B(X_i,X_j^{(t)})\vert X_{-i}^{(t)}\right)$ is defined as the function $z\mapsto\PE_{\theta_\star^{(1)}}\left( B(X_i,z_j)\vert X_{-i}=z_{-i}\right)$  evaluated on $X^{(t)}$.  Since $X^{(1:\tau_\star)}\stackrel{i.i.d}{\sim} g_{\theta_\star^{(1)}}$, it follows that $\PE(V_{ij}^{(t)})=0$ for $t=1,\ldots,\tau_\star$. We set $\mu_{ij} \eqdef \PE(V_{ij}^{(\tau_\star+1)})=\PE(V_{ij}^{(t)})$ for $t=\tau_\star+1,\ldots,T$. We also set $\bar V_{ij}^{(t)} \eqdef V_{ij}^{(t)}-\PE\left(V_{ij}^{(t)}\right)$. It is easy to see that $|\bar V_{ij}^{(t)}|\leq 4c_0$, where $c_0$ is defined in (\ref{def:c0}) . With these notations, for $\tau\in\Tau$, we can write 
\[(G_\tau^1)_{ij} = \frac{1}{T}\sum_{t=1}^\tau \bar V_{ij}^{(t)} + \frac{(\tau-\tau_\star)_+\mu_{ij}}{T},\]
where $a_+\eqdef \max(a,0)$. For $t>\tau_\star$, Lemma \ref{lemlip} can be used to write
 \begin{multline*}
 \left|\PE\left[B(X^{(t)}_i,X^{(t)}_j)-\PE_{\theta_\star^{(1)}}\left( B(X_i,X_j^{(t)})\vert X_{-i}^{(t)}\right)\right]\right|\\
 =\left|\PE\left[\int_\Xset B(u,X_j^{(t)})f_{\theta_\star^{(2)}}(u\vert X_{-i}^{(t)})\rmd u - \int_\Xset B(u,X_j^{(t)})f_{\theta_\star^{(1)}}(u\vert X_{-i}^{(t)})\rmd u\right]\right|\\
 \leq c_0^2\sum_{j=1}^p|\theta_{\star,ij}^{(2)}- \theta_{\star,ij}^{(1)}|\leq b c_0^2,\end{multline*}
where $b$ is as in (\ref{def:b}). 
Hence
 \begin{equation*}
\left|\mu_{ij}\right|\leq 2\max_{j\leq i}\left |\PE_{\theta_\star^{(2)}}\left[ B(X_i^{(t)},X^{(t)}_j)-\PE_{\theta_\star^{(1)}}\left( B(X_i^{(t)},X^{(t)}_j)\vert X_{-j}^{(t)}\right)\right]\right|\leq 2 b c_0^2.\]
Set $\lambda_\tau \eqdef (A\sqrt{\tau}/T)$, where 
\[A \eqdef 32c_0\sqrt{\log(dT)}.\]

By a union-bound argument,
\begin{multline}\label{eq:union_bound}
\PP\left[\max_{\tau\in\Tau}2\lambda_{\tau}^{-1}\|G_{\tau}^1\|_{\infty} > 1\right] \\
\leq \sum_{\tau\in\mathcal{T}}\sum_{i,j}\PP\left[\frac{1}{A\sqrt{\tau}}\left|\sum_{t=1}^\tau \bar V_{ij}^{(t)}\right| + \frac{2bc_0^2(\tau-\tau_\star)_+}{A\sqrt{\tau}} > \frac{1}{2}\right].\end{multline}
Since $A = 32c_0\sqrt{\log(dT)}$, for $\tau\in \mathcal{T}$, and using (\ref{cond H2:Tau_p}) we see that $\max_{\tau\in\Tau} \frac{2bc_0^2(\tau-\tau_\star)_+}{A\sqrt{\tau}}\leq 1/4$. Hence
\begin{eqnarray}\label{eq:expo_bound_thm1}
\PP\left[\max_{\tau\in\Tau}2\lambda_{\tau}^{-1}\|G_{\tau}^1\|_{\infty} > 1\right] &\leq & \sum_{\tau\in\mathcal{T}}\sum_{i,j}\PP\left[\left|\sum_{t=1}^\tau \bar V_{ij}^{(t)}\right| > \frac{A\sqrt{\tau}}{4}\right],\\
& \leq & 2\sum_{\tau\in\mathcal{T}}\sum_{i,j}\exp\left(-\frac{A^2}{8^3c_0^2}\right)\leq \frac{2}{d}.\nonumber
\end{eqnarray}
where the second inequality uses Hoeffding's inequality.
\qed
\end{proof}

\begin{remark}
The $\log(dT)$ term that appears in the convergence rate of Theorem \ref{thm1} follows from the union bound and the exponential bound used in (\ref{eq:union_bound}), and (\ref{eq:expo_bound_thm1}) respectively. Alternatively, it is easy to see that one could also write
\[\PP\left[\max_{\tau\in\Tau}2\lambda_{\tau}^{-1}\|G_{\tau}^1\|_{\infty} > 1\right] \leq \sum_{i,j}\PP\left[\max_{\tau\in\Tau} \left|\frac{1}{\sqrt{\tau}}\sum_{t=1}^\tau \bar V_{ij}^{(t)}\right| > \frac{A}{4}\right].\]
Hence whether one can remote the $\log(T)$ term hinges on the existence of an exponential bound for the term $\max_{\tau\in\Tau} \left|\tau^{-1/2}\sum_{t=1}^\tau \bar V_{ij}^{(t)}\right|$.  Unfortunately we are not aware of any such result in the literature. The closest results available deal with the unweighted sums: $\max_{\tau\in\Tau} \left|\sum_{t=1}^\tau \bar V_{ij}^{(t)}\right|$ (see for instance \cite{pinelis} for some of the best bounds available).
\end{remark}

\begin{Lemma}\label{bound:E}
Assume H\ref{H1} and H\ref{H2}. Let $\lambda_{1,\tau}$ and $\lambda_{2,\tau}$ as in Equation~\eqref{lambda1:lambda2}, and let the search domain $\Tau$ be such that Equations~\eqref{cond H2:Tau_p}-\eqref{cond H2:Tau_m} hold. Take $c_1 = 64c_0s_1$, $c_2=64c_0s_2$ and
\[\r_1(x)=\frac{\rho_1 x^2}{ 2+4c_0s_1^{1/2}x}, \;\; \mbox{ and }\;\; \r_2(x)=\frac{\rho_2 x^2}{ 2+4c_0s_2^{1/2}x},\;\;x\geq 0.\]
Then the event $\bigcap_{\tau\in\Tau}\left[\mathcal{E}^1_\tau\left(\lambda_{1,\tau},\r_1,c_1\right)\cap\mathcal{E}^2_\tau\left(\lambda_{2,\tau},\r_2,c_2\right)\right]$ holds with probability at least $1-\frac{8}{d}$.
\end{Lemma}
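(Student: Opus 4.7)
The event $\mathcal{E}$ is an intersection, over $\tau\in\Tau$ and $j\in\{1,2\}$, of three sub-events imposed by the definition of $\mathcal{E}^j_\tau$: (i) the gradient bound $\|G^j_\tau\|_\infty\leq\lambda_{j,\tau}/2$; (ii) the rate-function lower bound $\mathcal{L}_j(\tau,\theta)/\r_j(\|\theta-\theta_\star^{(j)}\|_2)\geq\tau/T$ (resp.\ $(T-\tau)/T$) on $\mathbb{C}_j$; and (iii) the quadratic upper bound $\mathcal{L}_j(\tau,\theta)\leq(c_j/2)(\tau/T)\|\theta-\theta_\star^{(j)}\|_2^2$ (analogously). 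The plan is to show that (i) fails with probability at most $4/d$, that (iii) holds deterministically, and that (ii) fails with probability at most $4/d$, for a total of $8/d$.

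Step (i) is immediate from Lemma~\ref{bound:lambda}: applying it once forward for $j=1$ and once on the reversed sequence for $j=2$ contributes $\leq 2/d$ each. For step (iii), I would apply the quadratic upper bound (\ref{boundZ}) summand-by-summand in the definition of $\mathcal{L}_j$; combined with the elementary row-sum estimate $\|\Delta_{i,\cdot}\|_1\leq \|\Delta\|_1$, the symmetric-matrix identity $\sum_i\|\Delta_{i,\cdot}\|_1\leq 2\|\Delta\|_1$, and the $\mathbb{C}_j$-norm inequality $\|\Delta\|_1\leq 4 s_j^{1/2}\|\Delta\|_2$, this yields a deterministic bound $\mathcal{L}_j(\tau,\theta_\star^{(j)}+\Delta)\leq (\tau/T)\cdot O(c_0^2 s_j\|\Delta\|_2^2)$ compatible with $c_j=64 c_0 s_j$ up to universal constants, so (iii) holds for every realization.

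Step (ii) is the technical heart. I would first apply the self-concordant lower bound (\ref{lower:boundZ}) termwise, obtaining for any $x$ and $\Delta\in\mathbb{C}_j$
\[\phi(\theta_\star^{(j)}+\Delta,x)-\phi(\theta_\star^{(j)},x)-\seqF{\nabla\phi(\theta_\star^{(j)},x),\Delta}\geq\frac{1}{2+4c_0 s_j^{1/2}\|\Delta\|_2}\sum_i V_{j,i}(\Delta,x_{-i}),\]
with $V_{j,i}(\Delta,x_{-i})\eqdef\textsf{Var}_{\theta_\star^{(j)}}(\sum_k\Delta_{ik}B_{ik}(X_i,X_k)\mid X_{-i}=x_{-i})$, using $\|\Delta_{i,\cdot}\|_1\leq 4s_j^{1/2}\|\Delta\|_2$ in the denominator. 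For $X\sim g_{\theta_\star^{(j)}}$, assumption H\ref{H1} gives $\PE[\sum_i V_{j,i}(\Delta,X_{-i})]\geq 2\rho_j\|\Delta\|_2^2$, so each correctly-specified summand has expectation at least $2\r_j(\|\Delta\|_2)$, providing a factor-of-two margin over the target $\r_j(\|\Delta\|_2)$. I would then transfer this population bound to a uniform-in-$\Delta\in\mathbb{C}_j$ empirical bound via an almost-sure RSC-implication argument in the spirit of \cite{rudelson:zhou:13} (foreshadowed in Remark~\ref{rem:H1}), whose hypothesis is precisely the sample-size condition (\ref{cond H2:Tau_tau}); this concentration step contributes $\leq 4/d$ to the failure probability. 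For the misspecified portion of $\mathcal{L}_j$ (indices $t>\tau_\star$ when treating $j=1$, and symmetrically for $j=2$), each summand is non-negative by convexity of $\phi(\cdot,x)$; the factor-of-two slack, together with the search-domain caps (\ref{cond H2:Tau_p})--(\ref{cond H2:Tau_m}) bounding $|\tau-\tau_\star|$ in terms of $\tau$, $T-\tau$ and the signal $b$ of (\ref{def:b}), and the Lipschitz-in-distribution control $|\PE_{\theta_\star^{(2)}}[\cdot]-\PE_{\theta_\star^{(1)}}[\cdot]|=O(bc_0^2)$ already exploited in the proof of Lemma~\ref{bound:lambda} (via Lemma~\ref{lemlip}), is exactly calibrated to absorb the misspecification bias.

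The main obstacle will be step (ii): simultaneously obtaining the uniform-in-$\Delta$ concentration of the empirical conditional-variance process $\{\frac{1}{T}\sum_{t}\sum_i V_{j,i}(\Delta,X^{(t)}_{-i})\}$ and controlling the contribution from misspecified samples. The search-domain definition is tuned precisely so that the misspecification bias stays dominated by the factor-of-two slack inherited from H\ref{H1}, after which summing the two bad events yields $\PP(\mathcal{E}^c)\leq 4/d+4/d=8/d$.
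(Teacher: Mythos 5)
Your decomposition of the event into (i) the gradient bound, (iii) the deterministic quadratic upper bound, and (ii) the restricted lower bound matches the paper exactly, and your treatment of (i) via Lemma~\ref{bound:lambda} and of (iii) via (\ref{boundZ}) together with $\|\Delta\|_1\leq 4s_j^{1/2}\|\Delta\|_2$ on $\mathbb{C}_j$ is what the paper does. The gap is in step (ii): you correctly reach, via (\ref{lower:boundZ}), the reduction to showing that the empirical conditional-variance process $\mathcal{V}^j(\tau,\Delta)=\frac{1}{\tau}\sum_t\sum_i\textsf{Var}_{\theta_\star^{(j)}}(\sum_k\Delta_{ik}B_{ik}\mid X^{(t)}_{-i})$ stays above $\rho_j\|\Delta\|_2^2$ uniformly over the cone, but you then outsource this to ``an almost-sure RSC-implication argument in the spirit of Rudelson--Zhou'' and assert its hypothesis is precisely (\ref{cond H2:Tau_tau}). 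That citation concerns anisotropic sub-Gaussian designs in linear models and does not deliver this statement off the shelf; the concentration step is the actual content of the lemma and has to be carried out. The paper does it elementarily: write $\mathcal{V}^1(\tau,\Delta)$ as its expectation plus $\frac{1}{\tau}\sum_t\sum_{j,k,k'}\Delta_{jk}\Delta_{jk'}W^{(t)}_{jkk'}$ with $W^{(t)}_{jkk'}$ the centered conditional covariances, bound the fluctuation term by $64 s_1\bigl(\sup_{j,k,k'}|Z^{(\tau)}_{jkk'}|\bigr)\|\Delta\|_2^2$ using the cone inequality $\|\Delta\|_1^2\leq 16 s_1\|\Delta\|_2^2$, and then apply Hoeffding plus a union bound over the $O(p^3)$ indices $(j,k,k')$ and over $\tau\in\Tau$; it is exactly this union bound that forces the sample-size constant $\tau\geq 2^{11}c_0^2 s_1^2\rho_1^{-2}\log(dT)$ in (\ref{cond H2:Tau_tau}). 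Without supplying this (or an equivalent) mechanism, your claimed $4/d$ failure probability for step (ii) is unsubstantiated.

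Two smaller remarks. First, your suggestion that the misspecified summands ($t>\tau_\star$ for $j=1$) can be handled because ``each summand is non-negative by convexity'' does not suffice on its own: the target lower bound carries the prefactor $\tau/T$, not $\tau_\star/T$, so those terms cannot simply be discarded. The paper instead keeps them, notes that their conditional variances are evaluated at samples from $g_{\theta_\star^{(2)}}$, and corrects the resulting expectation by $c_0^3 b\bigl(\sum_k|\Delta_{jk}|\bigr)^2$ per row via Lemma~\ref{lemlip}, which the second half of (\ref{cond H2:Tau_p}) then caps at $\tfrac{1}{2}\rho_1\|\Delta\|_2^2$ --- you do gesture at this, so it is a matter of making the bookkeeping precise rather than a wrong idea. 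Second, your ``factor-of-two margin'' intuition is exactly right: H\ref{H1} gives $2\rho_j$, the misspecification bias eats at most $\tfrac{1}{2}\rho_j$, and the empirical fluctuation eats another $\tfrac{1}{2}\rho_j$, leaving the $\rho_j$ needed for $\r_j$.
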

\begin{proof}
We have seen in Lemma \ref{bound:lambda} that with $\lambda_{1,\tau}$ and $\lambda_{2,\tau}$ as in equation ~\eqref{lambda1:lambda2}, the event $\cap_{\tau\in\Tau}\left[\{\|G_\tau^1\|_\infty\leq \lambda_{1,\tau}/2\}\cap \{\|G_\tau^1\|_\infty\leq \lambda_{2,\tau}/2\}\right]$ holds with probability at least $1-2/d$. We have
\[\mathcal{L}_1(\tau,\theta) \eqdef \frac{1}{T}\sum_{t=1}^{\tau}\left[\phi(\theta,X^{(t)}) - \phi(\theta_\star^{(1)},X^{(t)}) -\seq{\nabla \phi(\theta_\star^{(1)},X^{(t)}),\theta-\theta_\star^{(1)}}\right].\]
(\ref{boundZ}) then implies that for all $\tau\in\mathcal{T}$, and $\theta-\theta_\star^{(1)}\in\mathbb{C}_1$,
\[
\mathcal{L}_1(\tau,\theta) \leq \frac{\tau}{T}\frac{4c_0^2}{2}\|\theta-\theta_\star^{(1)}\|_1^2\leq \frac{\tau}{T}\frac{64c_0^2s_1}{2}\|\theta-\theta_\star^{(1)}\|_2^2.\]
A similar bound holds for $j=2$. Hence $\cap_{\tau\in\Tau}\cap_{j=1}^2\left\{\sup_{\theta\neq \theta_{\star}^{(j)},\;\theta-\theta_\star^{(j)}\in\mathbb{C}_j}\frac{\mathcal{L}_j(\tau,\theta)}{\|\theta-\theta_\star^{(j)}\|_2^2}\leq \frac{\tau}{T}\frac{c_j}{2}\right\}$ holds with probability one.

Using (\ref{lower:boundZ}), we have
\begin{multline}\label{eq1:proof:boundE}
\mathcal{L}_1(\tau,\theta) \geq \frac{\tau}{T}\frac{1}{2+4c_0s_1^{1/2}\|\theta-\theta_\star^{(1)}\|_2}\\
\times \frac{1}{\tau}\sum_{t=1}^\tau \sum_{j=1}^p \textsf{Var}_{\theta_\star^{(1)}}\left(\sum_{k=1}^pB_{kj}(X_j^{(t)},X_k^{(t)})\left(\theta_{kj}-\theta_{\star,kj}^{(1)}\right)\vert X^{(t)}_{-j}\right).\end{multline}
We will now show that for all $\tau\in\mathcal{T}$, and all $\theta-\theta_\star^{(1)}\in\mathbb{C}_1$, with probability at least $1-2/d$, we have
\[\frac{1}{\tau}\sum_{t=1}^\tau \sum_{j=1}^p \textsf{Var}_{\theta_\star^{(1)}}\left(\sum_{k=1}^pB_{kj}(X_j^{(t)},X_k^{(t)})\left(\theta_{kj}-\theta_{\star,kj}^{(1)}\right)\vert X^{(t)}_{-j}\right)\geq \rho_1\|\theta-\theta_\star^{(1)}\|_2^2.\]
Given (\ref{eq1:proof:boundE}), this assertion will implies that $\mathcal{L}_1(\tau,\theta)\geq \frac{\tau}{T}\r_1(\|\theta-\theta_\star^{(1)}\|_2)$ for all $\theta-\theta_\star^{(1)}\in\mathbb{C}_1$ with probability at least $1-2/d$, where $\r_1(x) = \rho_1 x^2/(2+4c_0s_1^{1/2} x)$. The lemma will then follow easily.

For $\Delta\in\M_p$, we define
\[\mathcal{V}^1\left(\tau,\Delta\right) \eqdef \frac{1}{\tau}\sum_{t=1}^\tau \sum_{j=1}^p \textsf{Var}_{\theta_\star^{(1)}}\left(\sum_{k=1}^pB_{kj}(X_j^{(t)},X_k^{(t)})\Delta_{kj}\vert X^{(t)}_{-j}\right),\]
and
\begin{multline*}
 W^{(t)}_{jkk^{\prime}}\stackrel{\text{def}}{=}\textsf{Cov}_{\theta_\star^{(1)}}\left(B(X_j^{(t)},X^{(t)}_k),B(X^{(t)}_j,X^{(t)}_{k'})\vert X^{(t)}_{-j}\right)\\
  - \PE\left[\textsf{Cov}_{\theta_\star^{(1)}}\left(B(X^{(t)}_j,X^{(t)}_k),B(X^{(t)}_j,X^{(t)}_{k'})\vert X^{(t)}_{-j}\right)\right].\end{multline*}
Then for  $\Delta\in\mathbb{C}_1\setminus\{0\}$, 
\begin{eqnarray}
\mathcal{V}^1\left(\tau,\Delta\right) &=& 
\frac{1}{\tau}\displaystyle\sum_{t=1}^\tau\sum_{j=1}^p\sum_{k,k^{\prime}=1}^p\Delta_{jk}\Delta_{jk^{\prime}}
\PE\left[\textsf{Cov}_{\theta_\star^{(1)}}\left(B(X^{(t)}_j,X^{(t)}_k),B(X^{(t)}_j,X^{(t)}_{k'})\vert X^{(t)}_{-j}\right)\right]. \nonumber\\
&&\quad  +\frac{1}{\tau}\displaystyle\sum_{t=1}^\tau\sum_{j=1}^p\sum_{k,k^{\prime}=1}^p\Delta_{jk}\Delta_{jk^{\prime}}W^{(t)}_{jkk^{\prime}}
\end{eqnarray}
Using H\ref{H1}, we deduce that 
\begin{multline}
\label{eqn 1:lemma 5}
\mathcal{V}^1\left(\tau,\Delta\right)\geq 2\rho_1\|\Delta\|^2_2 + \frac{1}{\tau}\displaystyle\sum_{t=1}^\tau\sum_{j=1}^p\sum_{k,k^{\prime}=1}^p\Delta_{jk}\Delta_{jk^{\prime}}W^{(t)}_{jkk^{\prime}} \\
+\frac{(\tau-\tau_\star)_+}{\tau}\displaystyle\sum_{j=1}^p\mathbb{E}_{\mathbf{\theta}_\star^{(2)}}\left[\textsf{Var}_{\theta_\star^{(1)}}\left(\sum_{k=1}^p\Delta_{jk}B_{ik}(X_j,X_k)\vert X_{-j}\right)\right]\\
-\frac{(\tau-\tau_\star)_+}{\tau}\displaystyle\sum_{j=1}^p\mathbb{E}_{\mathbf{\theta}_\star^{(1)}}\left[\textsf{Var}_{\theta_\star^{(1)}}\left(\sum_{k=1}^p\Delta_{jk}B_{ik}(X_j,X_k)\vert X_{-j}\right)\right].
\end{multline}
By the comparison Lemma \ref{lemlip}
\begin{multline*}
\label{eqn 2:lemma 5}
\left|\mathbb{E}_{\mathbf{\theta}_\star^{(2)}}\left[\textsf{Var}_{\theta_\star^{(1)}}\left(\sum_{k=1}^p\Delta_{jk}B_{ik}(X_j,X_k)\vert X_{-j}\right)\right] - \mathbb{E}_{\mathbf{\theta}_\star^{(1)}}\left[\textsf{Var}_{\theta_\star^{(1)}}\left(\sum_{k=1}^p\Delta_{jk}B_{ik}(X_j,X_k)\vert X_{-j}\right)\right]\right|\\
\leq c_0^3\left(\sum_{k=1}^p|\Delta_{jk}|\right)^2\sum_{k=1}^p|\theta_{\star jk}^{(1)}-\theta_{\star jk}^{(2)}|\leq c_0^3 b\left(\sum_{k=1}^p|\Delta_{jk}|\right)^2,
\end{multline*}
which implies that 
\[\mathcal{V}^1\left(\tau,\Delta\right)\geq \left(2\rho_1-\frac{64}{\tau}(\tau-\tau_\star)_+s_1c_0^3 b\right) \|\Delta\|^2_2 + \frac{1}{\tau}\displaystyle\sum_{t=1}^\tau\sum_{j=1}^p\sum_{k,k^{\prime}=1}^p\Delta_{jk}\Delta_{jk^{\prime}}W^{(t)}_{jkk^{\prime}} .\]
Given that on $\Tau_+$, $128(\tau-\tau_\star)s_1c_0^3b\leq\rho_1\tau$, it follows that for all $\tau\in\mathcal{T}$,
\begin{equation}
\label{eqn 3:lemma 5}
\mathcal{V}^1\left(\tau,\Delta\right)\geq\frac{3}{2}\rho_1\|\Delta\|^2_2 + \frac{1}{\tau}\displaystyle\sum_{t=1}^\tau\sum_{j=1}^p\sum_{k,k^{\prime}=1}^p\Delta_{jk}\Delta_{jk^{\prime}}W^{(t)}_{jkk^{\prime}}
\end{equation}
Set $Z^\tau_{jkk^{\prime}}\stackrel{\text{def}}{=}\frac{1}{\tau}\displaystyle\sum_{t=1}^\tau W^{(t)}_{jkk^{\prime}}$. We conclude from equation ~\eqref{eqn 3:lemma 5} that if for some $\Delta\in\mathbb{C}_1\setminus\left\{0\right\}$, and for some $\tau\in\Tau$,
\begin{equation}
\label{eqn 4:lemma 5}
\mathcal{V}^1\left(\tau,\Delta\right)\leq \rho_1\|\Delta\|^2_2
\end{equation}
then $$ \displaystyle\sum_{j=1}^p\sum_{k,k^{\prime}=1}^p\Delta_{jk}\Delta_{jk^{\prime}}Z^{(\tau)}_{jkk^{\prime}}\leq-\frac{\rho_1}{2}\|\Delta\|^2_2\mbox{.}$$
But on the other hand, using the fact that $\Delta\in\mathbb{C}_1$,
\begin{eqnarray}
\displaystyle\sum_{j=1}^p\sum_{k,k^{\prime}=1}^p\Delta_{jk}\Delta_{jk^{\prime}}Z^{(\tau)}_{jkk^{\prime}} &\geq& -\left(\displaystyle\sup_{j,k,k^{\prime}}|Z^{(\tau)}_{jkk^{\prime}}|\right)\left(\displaystyle\sum_{i=1}^p\sum_{k=1}^p|\Delta_{ik}|\right)^2 \nonumber \\
                                                                                                                        &\geq& -\left(\displaystyle\sup_{j,k,k^{\prime}}|Z^{(\tau)}_{jkk^{\prime}}|\right) 4\|\Delta\|^2_1 \nonumber \\
																																																												&\geq& -64s_1\left(\displaystyle\sup_{j,k,k^{\prime}}|Z^{(\tau)}_{jkk^{\prime}}|\right)\|\Delta\|^2_2\mbox{.} \nonumber
\end{eqnarray}
Therefore if there exists a non-zero $\Delta\in\mathbb{C}_1$ and $\tau\in\Tau$ such that equation ~\eqref{eqn 4:lemma 5} holds then $ \left(\displaystyle\sup_{j,k,k^{\prime}}|Z^{(\tau)}_{jkk^{\prime}}|\right)\geq(\rho_1/s_1)(1/128)$. But by Hoeffding's inequality and a union-sum bound, 
$$ \mathbb{P}\left[\displaystyle\sup_{j,k,k^{\prime}}|Z^{(\tau)}_{jkk^{\prime}}|\geq\frac{\rho_1}{128s_1}\right]\leq 2\exp\left(3\log p-\frac{\tau\rho^2_1}{2^9c_0^2s^2_1}\right)\leq\frac{2}{p}\mbox{,}$$ since for $\tau\in\Tau$, $\tau\geq 2^{11}c_0^2s^2_1\rho_1^{-2}\log p$.\qed
\end{proof}

\begin{Lemma}\label{lemlip}
Let $(\mathsf{Y},\mathcal{A},\nu)$ be a measure space where $\nu$ is a finite
measure. Let $g_1,g_2,f_1,f_2:\;\Yset\to\rset$ be  bounded measurable functions. Set $Z_{g_i}\eqdef \int_{\textsf{Y}} e^{g_i(y)}\nu(dy)$, $i\in\{1,2\}$. Then
\begin{multline*}
\left|\frac{1}{Z_{g_1}}\int f_1(y)e^{g_1(y)}\nu(dy)-\frac{1}{Z_{g_2}}\int f_2(y)e^{g_2(y)}\nu(dy)\right| \\
\leq \|f_2-f_1\|_\infty + \frac{1}{2}\textsf{osc}(g_2-g_1)\left(\textsf{osc}(f_1)+\textsf{osc}(f_2)\right),
\end{multline*}
where $\|f\|_\infty=\sup_{x\in \mathsf{Y}}|f(x)|$, and $\textsf{osc}(f)\eqdef\sup_{x,y\in\mathsf{Y}}|f(x)-f(y)|$ is the oscillation of $f$.
\end{Lemma}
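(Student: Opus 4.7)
The plan is to recast the left-hand side as $|\pi_1(f_1) - \pi_2(f_2)|$, where for $i=1,2$ I let $\pi_i$ denote the probability measure on $\Yset$ with density $e^{g_i}/Z_{g_i}$ against $\nu$. The first step is the symmetric decomposition
\[
\pi_1(f_1) - \pi_2(f_2) \;=\; \tfrac{1}{2}(\pi_1 - \pi_2)(f_1+f_2) \;+\; \tfrac{1}{2}(\pi_1 + \pi_2)(f_1 - f_2),
\]
which is immediate upon expansion. Since both $\pi_1$ and $\pi_2$ are probability measures, the second term is bounded in absolute value by $\|f_1 - f_2\|_\infty$, producing the first summand in the stated bound. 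The remaining task is to control $|(\pi_1 - \pi_2)(f)|$ by an appropriate multiple of $\textsf{osc}(g_1 - g_2)\cdot \textsf{osc}(f)$ for a generic bounded $f$, which will then be applied with $f = f_1 + f_2$ and combined with $\textsf{osc}(f_1 + f_2) \leq \textsf{osc}(f_1) + \textsf{osc}(f_2)$.

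The key technical ingredient is this comparison bound, which I would obtain through an interpolation argument. Setting $h \eqdef g_1 - g_2$, introduce for $t\in[0,1]$ the Gibbs measure $\pi_t$ with density proportional to $e^{g_2 + th}$, so that $\pi_0 = \pi_2$ and $\pi_1$ coincides with the measure above. A direct differentiation (standard for exponential families) yields
\[
\frac{d}{dt}\pi_t(f) \;=\; \textsf{Cov}_{\pi_t}(f,h),
\]
whence $(\pi_1 - \pi_2)(f) = \int_0^1 \textsf{Cov}_{\pi_t}(f,h)\,dt$. Applying Cauchy--Schwarz together with the Bhatia--Davis inequality $\textsf{Var}(X) \leq \tfrac{1}{4}\textsf{osc}(X)^2$ for bounded $X$, the integrand is controlled pointwise in $t$ by $\tfrac{1}{4}\textsf{osc}(f)\textsf{osc}(h)$, so that $|(\pi_1 - \pi_2)(f)| \leq \tfrac{1}{4}\textsf{osc}(h)\textsf{osc}(f)$. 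Plugging this back into the symmetric decomposition actually delivers the stronger bound $\|f_1-f_2\|_\infty + \tfrac{1}{8}\textsf{osc}(g_1-g_2)(\textsf{osc}(f_1) + \textsf{osc}(f_2))$, of which the statement of the lemma is a loosened version.

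The main obstacle to anticipate is that the dependence on $\textsf{osc}(h)$ must be \emph{linear}: a direct comparison of $\pi_1$ with $\pi_2$ through the Radon--Nikodym derivative $e^h/\pi_2(e^h)$ would naively produce a factor of the form $e^{\textsf{osc}(h)} - 1$, which is only linear for small $\textsf{osc}(h)$. The interpolation route sidesteps this by operating infinitesimally, where the derivative of $\pi_t(f)$ is controlled linearly in the oscillation of $h$ through the covariance identity, and the overall bound then follows by integration in $t$.
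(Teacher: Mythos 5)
Your proof is correct, and it is worth noting that the paper itself does not prove this lemma at all: it simply defers to Lemma 3.4 of Atchad\'e (2014), so you have supplied a complete, self-contained argument where the authors only cite. Your symmetric decomposition $\pi_1(f_1)-\pi_2(f_2)=\tfrac12(\pi_1-\pi_2)(f_1+f_2)+\tfrac12(\pi_1+\pi_2)(f_1-f_2)$ checks out, the second term is bounded by $\|f_1-f_2\|_\infty$ since the $\pi_i$ are probability measures, and the interpolation step is sound: with $\pi_t\propto e^{g_2+t h}$, $h=g_1-g_2$, differentiation under the integral (justified because $h$ is bounded and $\nu$ is finite) gives $\frac{d}{dt}\pi_t(f)=\textsf{Cov}_{\pi_t}(f,h)$, and Cauchy--Schwarz plus $\textsf{Var}(X)\le\tfrac14\textsf{osc}(X)^2$ (this is Popoviciu's inequality; Bhatia--Davis is the sharper $(M-\mu)(\mu-m)$ form, a naming quibble only) yields $|(\pi_1-\pi_2)(f)|\le\tfrac14\textsf{osc}(h)\,\textsf{osc}(f)$. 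Combined with $\textsf{osc}(f_1+f_2)\le\textsf{osc}(f_1)+\textsf{osc}(f_2)$ you indeed obtain the constant $\tfrac18$ on the oscillation term, strictly improving the stated $\tfrac12$. Your closing remark is also well taken: a direct Radon--Nikodym comparison of $\pi_1$ to $\pi_2$ produces a factor like $e^{\textsf{osc}(h)}-1$, and the interpolation is precisely what linearizes the dependence on $\textsf{osc}(g_1-g_2)$, which is essential for how the lemma is used downstream (the bound must scale linearly in $\sum_k|\theta^{(2)}_{\star jk}-\theta^{(1)}_{\star jk}|$). The only cosmetic gap is that you should state explicitly that $Z_{g_i}>0$ (i.e.\ $\nu$ is not the zero measure), which the lemma implicitly assumes.
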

\begin{proof}
The proof follows from \cite{atc}~Lemma 3.4.
\end{proof}
\section{Different Methods of Missing Data Imputation for the Real Data Application}
In the main paper we replaced the missing votes by the value (yes/no) of that member's party
majority position on that particular vote. Here we employed two other missing data imputation techniques viz. (i) replacing all missing values by the value (yes/no) representing the winning majority on that bill and (ii) replacing the missing value of a Senator by the value that the majority of the opposite party voted on that particular bill. The estimated change-point obtained following these two imputation methods are not much different . The imputation technique (i) results in a estimated change-point at January 19, 1995 and the technique (ii) yields estimated change-point at January 17, 1995 respectively. The change-point estimate we obtained in the main paper was January 17, 1995. Clearly there is not much difference between the different imputation techniques and Fig. \ref{fig:imput} also conveys the same message.

\begin{figure}
\centering
\begin{subfigure}{.5\textwidth}
  \centering
  \includegraphics[width=.7\linewidth]{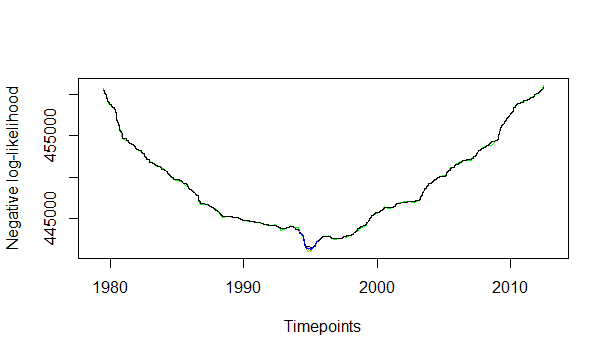}
  %\caption{A subfigure}
  \label{fig:sub1}
\end{subfigure}%
\begin{subfigure}{.5\textwidth}
  \centering
  \includegraphics[width=.7\linewidth]{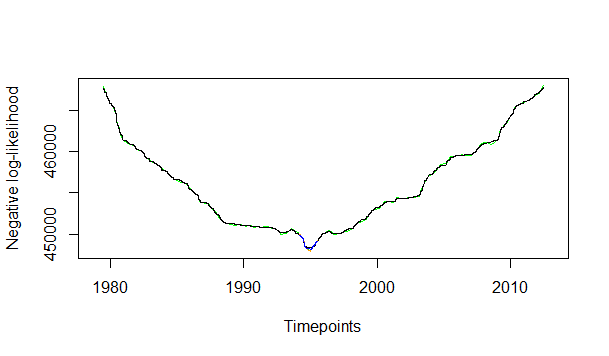}
  %\caption{A subfigure}
  \label{fig:sub2}
\end{subfigure}
\caption{Estimated Change-points via imputation technique (i) and (ii) respectively}
\label{fig:imput}
\end{figure}

\end{document}